\newcommand{\bnf}{::=}
\newcommand{\midd}{\; \; \mbox{\Large{$\mid$}}\;\;}
\newcommand{\tone}{\tau}
\newcommand{\ttwo}{\rho}
\newcommand{\tthree}{\xi}
\newcommand{\reals}{\mathit{REAL}}
\newcommand{\arr}{\rightarrow}
\newcommand{\prd}{\times}
\newcommand{\eone}{\Gamma}
\newcommand{\etwo}{\Delta}
\newcommand{\emenv}{\emptyset}
\newcommand{\tj}[3]{#1\vdash #2:#3}
\newcommand{\abs}[2]{\lambda #1.#2}
\newcommand{\pair}[2]{\langle #1,#2\rangle}
\newcommand{\tuple}[1]{\langle #1\rangle}
\newcommand{\tpl}[2]{\langle #1_{1},\ldots,#1_{#2}\rangle}
\newcommand{\vone}{x} 
\newcommand{\vtwo}{y}
\newcommand{\vthree}{z}
\newcommand{\vfour}{w}
\newcommand{\lone}{M} 
\newcommand{\ltwo}{N}
\newcommand{\lthree}{L}
\newcommand{\lfour}{P}
\newcommand{\lid}{M_{\mathit{ID}}} 
\newcommand{\lsin}{M_{\mathit{SIN}}} 
\newcommand{\lset}[1]{\Lambda_{#1}} 
\newcommand{\rone}{r} 
\newcommand{\rtwo}{s}
\newcommand{\vset}{\mathbb{V}}
\newcommand{\nset}{\mathbb{N}}
\newcommand{\pfone}[1]{f_{#1}}
\newcommand{\prede}{\mathit{pred}_1}
\newcommand{\sine}{\mathit{sin}_1}
\newcommand{\prjleft}{\pi_1}
\newcommand{\prjright}{\pi_2}
\newcommand{\iflz}[2]{\texttt{iflz}\;#1\;\texttt{else}\;#2}
\newcommand{\iter}[2]{\texttt{iter}\;#1\;\texttt{base}\;#2}
\newcommand{\vlone}{V} 
\newcommand{\vltwo}{W}
\newcommand{\vlfone}{\mathbf{V}} 
\newcommand{\vlftwo}{\mathbf{W}}
\newcommand{\eval}[2]{#1\Downarrow #2}
\newcommand{\terminate}[1]{#1\Downarrow}
\newcommand{\sbs}[3]{#1\{#2/#3\}}      
\newcommand{\cv}[1]{\mathit{CV}({#1})} 
\newcommand{\ct}[1]{\mathit{CT}({#1})} 
\newcommand{\nf}[1]{\mathit{NF}({#1})} 
\newcommand{\redset}[1]{\mathit{RED}_{#1}}
\newcommand{\vredset}[1]{\mathit{VRED}_{#1}}
\newcommand{\STlamreal}{\mathit{ST}^{\lambda}_{\mathbb{R}}}
\newcommand{\cone}{C} 
\newcommand{\chole}{[\cdot]}
\newcommand{\select}[3]{{#3}^{#1}_{#2}}
\newcommand{\metalambda}{%
  \mathop{%
    \rlap{$\lambda$}%
    \mkern2mu
    \raisebox{.275ex}{$\lambda$}%
  }%
}
\newcommand{\metabs}[2]{\metalambda {#1}.{#2}}
\newcommand{\fone}{f}
\newcommand{\rset}{\mathbb{R}}
\newcommand{\pirset}{\mathbb{R}_{\geq 0}^\infty}
\newcommand{\ftwo}{g}
\newcommand{\rfuncs}[1]{\mathcal{F}_{#1}}
\newcommand{\av}[1]{\left|#1\right|}
\newcommand{\sem}[1]{\llbracket #1\rrbracket}
\newcommand{\logrel}[3]{\mathcal{L}_{#1}({#2},{#3})} 
\newcommand{\metdom}[1]{\llparenthesis{#1}\rrparenthesis}
\newcommand{\findom}[1]{\metdom{#1}^{<\infty}}
\newcommand{\diffmet}[4]{\delta_{#1}({#2},{#3},{#4})} 
\newcommand{\done}{d} 
\newcommand{\dtwo}{e} 
\newcommand{\dthree}{f}
\newcommand{\dinf}{d_\infty}
\newcommand{\elemone}{t} 
\newcommand{\elemtwo}{u}
\newcommand{\elemthree}{s}
\newcommand{\elemfour}{r}
\newcommand{\setone}{A} 
\newcommand{\settwo}{B} 
\newcommand{\setthree}{C} 
\newcommand{\elone}{x} 
\newcommand{\eltwo}{y}
\newcommand{\elthree}{z}
\newcommand{\relone}{\delta}
\newcommand{\reltwo}{\rho}
\newcommand{\relthree}{\nu}
\newcommand{\qtlone}{\mathbb{Q}}
\newcommand{\qtltwo}{\mathbb{S}}
\newcommand{\qtlthree}{\mathbb{T}}
\newcommand{\qsetone}{{Q}}
\newcommand{\qsettwo}{{S}}
\newcommand{\qsetthree}{{T}}
\newcommand{\mult}[1]{*_{#1}} 
\newcommand{\indset}{I} 
\newcommand{\indone}{i} 
\newcommand{\unit}[1]{e_{#1}} 
\newcommand{\nullset}[1]{\metdom{#1}^0}
\newcommand{\metset}[2]{\metdom{#1}^{#2}} 
\newcommand{\fmn}{\fone^{\lone,\ltwo}}
\newcommand{\catgmd}{\mathbf{GMD}}
\newcommand{\objone}{\mathcal{A}}
\newcommand{\objtwo}{\mathcal{B}}
\newcommand{\objthree}{\mathcal{C}}
\newcommand{\arrone}{f}
\newcommand{\arrtwo}{g}
\newcommand{\arrthree}{h}
\newcommand{\qarrone}{\zeta}
\newcommand{\qarrtwo}{\eta}
\newcommand{\qarrthree}{\theta}
\newcommand{\after}{\mathbin{\circ}}
\newcommand{\id}{\mathrm{id}}
\newcommand{\evalarr}{\mathrm{eval}}
\newcommand{\diffmetobsolete}[3]{\delta^{#1}({#2},{#3})} 
\newcommand{\odiffmetobsolete}[4]{\delta_{#1}^{#2}({#3},{#4})} 
\newenvironment{varitemize}
{
\begin{list}{\labelitemi}
{\setlength{\itemsep}{0pt}
 \setlength{\topsep}{0pt}
 \setlength{\parsep}{0pt}
 \setlength{\partopsep}{0pt}
 \setlength{\leftmargin}{15pt}
 \setlength{\rightmargin}{0pt}
 \setlength{\itemindent}{0pt}
 \setlength{\labelsep}{5pt}
 \setlength{\labelwidth}{10pt}
}}
{
 \end{list} 
}
\newcounter{number}
\newcounter{glc}
\newtheorem{theorem}[glc]{Thoerem}
\newtheorem{corollary}[glc]{Corollary}
\newtheorem{lemma}[glc]{Lemma}
\newtheorem{proposition}[glc]{Proposition}
\newtheorem{definition}[glc]{Definition}
\newtheorem{example}[glc]{Example}
\date{}
\title{Differential Logical Relations\protect\\ \vspace{-5pt} {\Large Part I: The Simply-Typed Case}}
\author{Ugo Dal Lago \and Francesco Gavazzo \and Akira Yoshimizu}
\begin{document}

\maketitle 

\begin{abstract}
  We introduce a new form of logical relation which, in the spirit of
  metric relations, allows us to assign each pair of programs a quantity
  measuring their distance, rather than a boolean value standing for
  their being equivalent. The novelty of differential logical
  relations consists in measuring the distance between terms not
  (necessarily) by a numerical value, but by a mathematical object
  which somehow reflects the interactive complexity, i.e. the type, of
  the compared terms. We exemplify this concept in the simply-typed
  lambda-calculus, and show a form of soundness theorem. We also see
  how ordinary logical relations and metric relations can be seen as
  instances of differential logical relations. Finally, we show that
  differential logical relations can be organised in a cartesian
  closed category, contrarily to metric relations, which are
  well-known \emph{not} to have such a structure, but only that of a
  monoidal closed category.
\end{abstract}

\section{Introduction}

Modern software systems tend to be heterogeneous and complex, and this is
\begin{wrapfigure}{R}{.22\textwidth}
  \fbox{
    \begin{minipage}{.2\textwidth}
      \centering
    \includegraphics[scale=1.0]{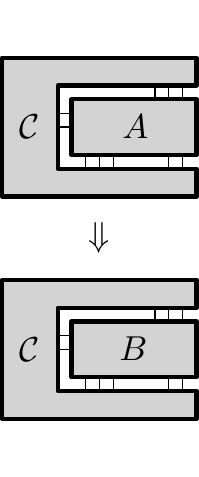}
  \end{minipage}}
  \caption{Replacing $A$ with $B$.}\label{fig:systemenv}
\end{wrapfigure}
reflected in the analysis methodologies we use to tame their complexity.
Indeed, in many cases the only way to go is to make use of compositional
kinds of analysis, in which \emph{parts} of a large system can be analysed
in isolation, without having to care about the rest of the system, the
\emph{environment}. As an example, one could consider a component
$A$ and replace it with another, e.g. more efficient component $B$ without
looking at the context $\mathcal{C}$ in which $A$ and $B$ are supposed to
operate, see Figure~\ref{fig:systemenv}. Of course, for this program transformation
to be safe, $A$ should be \emph{equivalent} to $B$ or, at least, $B$ should
be a \emph{refinement} of $A$.

Program equivalences and refinements, indeed, are the cruxes of program
semantics, and have been investigated in many different programming
paradigms. When programs have an interactive behaviour, like in
concurrent or higher-order languages, even \emph{defining} a notion of
program equivalence is not trivial, while coming out with handy
methodologies for \emph{proving} concrete programs to be equivalent
can be quite challenging, and has been one of the major research
topics in programming language theory, stimulating the development of
techniques like logical relations~\cite{Plotkin1973,Mitchell1996},
applicative bisimilarity~\cite{Abramsky/RTFP/1990}, and to some extent denotational
semantics~\cite{Scott,ScottStrachey} itself.

Coming back to our example, may we say anything about the case in
which $A$ and $B$ are \emph{not} equivalent, although behaving very
similarly? Is there anything classic program semantics can say
about this situation? Actually, the answer is negative: the program
transformation turning such an $A$ into $B$ cannot be justified,
simply because there is no guarantee about what the possible negative
effects that turning $A$ into $B$ could have on the overall system formed
by $\mathcal{C}$ and $A$.  There
are, however, many cases in which program transformations like the one
we just described are indeed of interest, and thus desirable. Many
examples can be, for instance, drawn from the field of
\emph{approximate computing}~\cite{Mittal2016}, in which equivalence-breaking program
transformations are considered as beneficial \emph{provided} the
overall behaviour of the program is not affected too much
by the transformation, while its intensional behaviour, e.g. its
performance, is significantly improved.

\newcommand{\FUZZ}{\ensuremath{\mathsf{FUZZ}}}
One partial solution to the problem above consists in considering
program \emph{metrics} rather than program \emph{equivalences}. This
way, any pair of programs are dubbed being at a certain numerical
distance rather than being merely equivalent (or not). This, for
example, can be useful in the context of differential 
privacy~\cite{Pierce/DistanceMakesTypesGrowStronger/2010,DBLP:journals/siglog/BartheGHP16,Kosta/metrics-for-differential-privacy/2014}
and has also been studied in the realms of domain 
theory~\cite{GaboardiEtAl/POPL/2017,DBLP:journals/tcs/BaierM94,DeBakker/Semantics-concurrency/1982,Escardo/Metric-model-PCF/1999,Arnold/Metric-interpretations/1980} 
(see also \cite{Breugel/Introduction-metric-semantics/2001} 
for an introduction to the subject)
and coinduction~\cite{DBLP:conf/icalp/BreugelW01,Breugel-Worrell/Behavioural-pseudometric-probabilistic-transition-systems/2005,DBLP:journals/tcs/DesharnaisGJP04,DBLP:conf/concur/ChatzikokolakisGPX14}. The common denominator among all these
approaches is that on one hand, the notion of a congruence,
crucial for compositional reasoning, is replaced by the one of a \emph{Lipschitz-continuous}
map: any context should not amplify (too much) the distance between 
any pair of terms, when it is fed with either the former or the latter:
$$
\delta(C[M],C[N])\leq c\cdot\delta(M,N).
$$
This enforces compositionality, and naturally leads us to consider
metric spaces and Lipschitz functions as the underlying category.
As is well known, this is not a cartesian closed category, and
thus does \emph{not} form a model of typed $\lambda$-calculi, unless one
adopts linear type systems, or type systems in which the number
of uses of each variable is kept track of, like \FUZZ~\cite{Pierce/DistanceMakesTypesGrowStronger/2010}. This
somehow limits the compositionality of the metric 
approach~\cite{GaboardiEtAl/POPL/2017,Gavazzo/LICS/2018}.

\newcommand{\eqdef}{:=}
There are however program transformations which are intrinsically 
unjustifiable in the metric approach. Consider the following two
programs of type $\reals\arr\reals$
$$
\lsin\eqdef\lambda x.\mathtt{sin}(x)\qquad\qquad \lid\eqdef\lambda x.x.
$$ 
The two terms compute two very different functions on the real
numbers, namely the sine trigonometric function and the
identity on $\rset$, respectively. The distance
$\mid\sin x-x\mid$ is unbounded when $x$ ranges over
$\rset$. As a consequence, the numerical distance between $\lsin$ and $\lid$,
however defined, is infinite, and the program transformation turning
$\lsin$ into $\lid$ cannot be justified this way, for very good reasons. As
highlighted by Westbrook and Chaudhuri~\cite{WestbrookAndChaudhuri}, 
this is not the end of
the story, at least if the environment in which $\lsin$ and $\lid$ operate
feed either of them \emph{only with} real numbers close to $0$,
then $\lsin$ can be substituted with $\lid$ without affecting \emph{too
much} the overall behaviour of the system. 

The key insight by Westbrook and Chaudhuri is that justifying program
transformations like the one above requires taking the difference
$\delta(\lsin,\lid)$ between $\lsin$ and $\lid$ not merely as a number, but as
a more structured object. What they suggest is to take $\delta(\lsin,\lid)$
as \emph{yet another program}, which however describes the difference between $\lsin$ and
$\lid$:
$$
\delta(\lsin,\lid)\eqdef\lambda x.\lambda \varepsilon.|\sin x-x|+\varepsilon.
$$
This reflects the fact that the distance between $\lsin$ and $\lid$, namely
the discrepancy between their output, depends not only on the
discrepancy on the input, namely on $\varepsilon$, but also \emph{on the input
itself}, namely on $x$. It both $x$ and $\varepsilon$ are close to $0$,
$\delta(\lsin,\lid)$ is itself close to $0$.

In this paper, we develop Westbrook and Chaudhuri's ideas, and
turn them into a framework of \emph{differential logical
  relations}. We will do all this in a simply-typed $\lambda$-calculus
with real numbers as the only base type. Starting from such a minimal
calculus has at least two advantages: on the one hand one can talk
about meaningful examples like the one above, and on the other hand the
induced metatheory is simple enough to highlight the key concepts.

The contributions of this paper can be summarised as follows:
\begin{varitemize}
\item
  After introducing our calculus $\STlamreal$, we define differential
  logical relations inductively on types, as ternary relations between
  pairs of programs and \emph{differences}. The latter are mere set theoretic
  entities here, and the nature of differences between terms depends
  on terms' types.
\item
  We prove a soundness theorem for differential logical relations,
  which allows us to justify compositional reasoning about terms'
  differences. We also prove a \emph{finite difference theorem}, which
  stipulates that the distance between two simply-typed
  $\lambda$-terms is finite if mild conditions hold on the underlying
  set of function symbols.
\item
  We give embeddings of logical and metric relations into differential
  logical relations. This witnesses that the latter are a
  generalisation of the former two.
\item
  Finally, we show that generalised metric domains, the mathematical
  structure underlying differential logical relations, form a cartesian
  closed category, contrarily to the category of metric spaces, which is
  well known not to have the same property.
\end{varitemize}
Due to space constraints, many details have to be omitted, but can be
found in an Extended Version of this work~\cite{EV}.
\section{A Simply-Typed $\lambda$-Calculus with Real Numbers}
In this section, we introduce a simply-typed $\lambda$-calculus in which
the only base type is the one of real numbers, and constructs for iteration
and conditional are natively available. The choice of this language as
the reference calculus in this paper has been made for the sake of
simplicity, allowing us to concentrate on the most crucial aspects,
at the same time guaranteeing a minimal expressive power.

\paragraph*{Terms and Types}
$\STlamreal$ is a typed $\lambda$-calculus, so its definition
starts by giving the language of \emph{types}, which is defined as follows:
$$
\tone,\ttwo \bnf \reals
\mathbin{\bigl\vert} \tone\arr\ttwo
\mathbin{\bigl\vert} \tone\prd\ttwo.
$$
The expression $\tone^n$ stands for
$\underbrace{\tone\prd\cdots\prd\tone}_{\mbox{$n$ times}}$.
The set of \emph{terms} is defined as follows:
\begin{align*}
\lone,\ltwo &\bnf\vone
\mathbin{\bigl\vert} \rone
\mathbin{\bigl\vert} \pfone{n}
\mathbin{\bigl\vert} \abs{\vone}{\lone}
\mathbin{\bigl\vert} \lone\ltwo
\mathbin{\bigl\vert} \pair{\lone}{\ltwo}
\mathbin{\bigl\vert} \prjleft\midd\prjright
\mathbin{\bigl\vert} \iflz{\lone}{\ltwo}
\mathbin{\bigl\vert} \iter{\lone}{\ltwo}
\end{align*}
where $\vone$ ranges over a set $\vset$ of variables, $\rone$ ranges
over the set $\rset$ of real numbers, $n$ is a natural number
and $\pfone{n}$ ranges over a set $\rfuncs{n}$
of total real functions of arity $n$. We do not make any
assumption on $\{\rfuncs{n}\}_{n\in\nset}$, apart from
the predecessor $\prede$ being part of $\rfuncs{1}$. 
The family, in particular, could in principle contain non-continuous functions.
The expression $\tuple{\lone_1,\ldots,\lone_n}$ is simply
a shortcut for
$\pair{\ldots\pair{\pair{\lone_1}{\lone_2}}{\lone_3}\ldots}{\lone_n}$.
All constructs are self-explanatory, except for the
$\mathtt{ifz}$ and $\mathtt{iter}$ operators, which are
conditional and iterator combinators, respectively.
An \emph{environment} $\eone$ is a set of assignments
of types to variables in $\vset$ where each variable occurs
at most once. A \emph{type judgment} has the form
$\tj{\eone}{\lone}{\tone}$ where $\eone$ is an environment,
$\lone$ is a term, and $\tone$ is a type. Rules for deriving correct
typing judgments are in Figure~\ref{fig:typingrules}, and are standard.
\begin{figure}
\begin{center}
  \fbox{
  \footnotesize
  \begin{minipage}{.97\textwidth}
    $$
    \infer
        {\tj{\eone}{\vone}{\tone}}
        {\vone:\tone\in\eone}
    \qquad\quad
    \infer
        {\tj{\eone}{\rone}{\reals}}
        {}
    \qquad\quad
    \infer
        {\tj{\eone}{\pfone{n}}{\reals^n\arr\reals}}
        {\pfone{n} \in \rfuncs{n}}
    \qquad\quad
    \infer
        {\tj{\eone}{\abs{\vone}{\lone}}{\tone\arr\ttwo}}
        {\tj{\eone,\vone:\tone}{\lone}{\ttwo}}
    $$
    $$    
  \infer
      {\tj{\eone}{\lone\ltwo}{\ttwo}}
      {\tj{\eone}{\lone}{\tone\arr\ttwo} & \tj{\eone}{\ltwo}{\tone}}
    \qquad
  \infer
      {\tj{\eone}{\pair{\lone}{\ltwo}}{\tone\prd\ttwo}}
      {\tj{\eone}{\lone}{\tone} & \tj{\eone}{\ltwo}{\ttwo}}
  \qquad
  \infer
      {\tj{\eone}{\prjleft}{\tone\prd\ttwo\arr\tone}}
      {}      
  \qquad
  \infer
      {\tj{\eone}{\prjright}{\tone\prd\ttwo\arr\ttwo}}
      {}
  $$
  $$
  \infer
      {\tj{\eone}{\iflz{\lone}{\ltwo}}{\reals\arr\tone}}
      {\tj{\eone}{\lone}{\tone} & \tj{\eone}{\ltwo}{\tone}}
  \qquad\qquad\qquad
  \infer
      {\tj{\eone}{\iter{\lone}{\ltwo}}{\reals\arr\tone}}
      {\tj{\eone}{\lone}{\tone\arr\tone} & \tj{\eone}{\ltwo}{\tone}}      
  $$    
  \end{minipage}}
  \caption{Typing rules for $\STlamreal$.}\label{fig:typingrules}
\end{center}
\end{figure}
The set of terms $\lone$ for which $\tj{\cdot}{\lone}{\tone}$ is
derivable is indicated as $\ct{\tone}$.
\paragraph*{Call-by-Value Operational Semantics}
A static semantics is of course not enough to give meaning to a
paradigmatic programming language, the dynamic aspects being captured
only once an \emph{operational} semantics is defined. The latter
turns out to be very natural. \emph{Values} are defined as follows:
$$
\vlone,\vltwo \bnf \rone 
\mathbin{\bigl\vert} \pfone{n}
\mathbin{\bigl\vert} \abs{\vone}{\lone}
\mathbin{\bigl\vert} \pair{\lone}{\ltwo}
\mathbin{\bigl\vert} \prjleft
\mathbin{\bigl\vert} \prjright
\mathbin{\bigl\vert} \iflz{\lone}{\ltwo}
\mathbin{\bigl\vert} \iter{\lone}{\ltwo}
$$
The set of closed values of type $\tone$ is $\cv{\tone}\subseteq\ct{\tone}$,
and the evaluation of $\lone\in\ct{\tone}$ produces a value
$\vlone\in\cv{\tone}$, as formalised by the rules in Figure~\ref{fig:evaluation},
through the judgment $\eval{\lone}{\vlone}$.
\begin{figure}
\begin{center}
  \fbox{
  \footnotesize
  \begin{minipage}{.97\textwidth}
  $$
  \infer
      {\eval{\vlone}{\vlone}}
      {}
  \qquad
  \infer
      {\eval{\lone\ltwo}{\fone(\rone_1,\ldots,\rone_n)}}
      {\eval{\lone}{\pfone{n}} & \eval{\ltwo}{\tpl{\lthree}{n}} & \eval{\lthree_i}{\rone_i}}
  \qquad
  \infer
      {\eval{\lone\ltwo}{\vltwo}}
      {\eval{\lone}{\abs{\vone}{\lthree}} & \eval{\ltwo}{\vlone} & \eval{\sbs{\lthree}{\vlone}{\vone}}{\vltwo}}
  $$
  $$
  \infer
      {\eval{\lone\ltwo}{\vlone}}
      {\eval{\lone}{\prjleft} & \eval{\ltwo}{\pair{\lthree}{\lfour}} & \eval{\lthree}{\vlone}}
  \qquad
  \infer
      {\eval{\lone\ltwo}{\vlone}}
      {\eval{\lone}{\prjright} & \eval{\ltwo}{\pair{\lthree}{\lfour}} & \eval{\lfour}{\vlone}}      
  $$
  $$    
  \infer
      {\eval{\lone\ltwo}{\vlone}}
      {\eval{\lone}{\iflz{\lthree}{\lfour}} & \eval{\ltwo}{\rone} & r<0 & \eval{\lthree}{\vlone}}
  \qquad
  \infer
      {\eval{\lone\ltwo}{\vlone}}
      {\eval{\lone}{\iflz{\lthree}{\lfour}} & \eval{\ltwo}{\rone} & r\geq 0 & \eval{\lfour}{\vlone}}
  $$
  $$    
  \infer
      {\eval{\lone\ltwo}{\vlone}}
      {\eval{\lone}{\iter{\lthree}{\lfour}} & \eval{\ltwo}{\rone} & r<0 & \eval{\lfour}{\vlone}}
  $$
  $$    
  \infer
      {\eval{\lone\ltwo}{\vlone}}
      {\eval{\lone}{\iter{\lthree}{\lfour}} & \eval{\ltwo}{\rone} & r\geq 0 & \eval{\lthree((\iter{\lthree}{\lfour})(\prede(r))}{\vlone}}
  $$  
  \end{minipage}}
\end{center}
\caption{Operational semantics for $\STlamreal$.} \label{fig:evaluation}
\end{figure}
We write $\terminate{\lone}$ if $\eval{\lone}{\vlone}$ is derivable \emph{for some} $\vlone$.
The absence of full recursion has the nice consequence of guaranteeing a form of termination:
\begin{theorem}\label{thm:norm}
  The calculus $\STlamreal$ is terminating:
  if ${\tj{\cdot}{\lone}{\tone}}$ then $\terminate{\lone}$.
\end{theorem}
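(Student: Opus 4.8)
The plan is to prove termination by Tait's method of reducibility: I introduce a unary logical predicate $\redset{\tone}$ on closed terms, defined by induction on the type $\tone$, and then show that every typable term becomes reducible once its free variables are replaced by reducible values. Concretely, I set $\lone\in\redset{\reals}$ iff $\terminate{\lone}$; $\lone\in\redset{\tone\arr\ttwo}$ iff $\terminate{\lone}$ and $\lone\ltwo\in\redset{\ttwo}$ for every $\ltwo\in\redset{\tone}$; and $\lone\in\redset{\tone\prd\ttwo}$ iff $\eval{\lone}{\pair{\ltwo}{\lthree}}$ for some $\ltwo\in\redset{\tone}$ and $\lthree\in\redset{\ttwo}$. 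By construction $\lone\in\redset{\tone}$ implies $\terminate{\lone}$, so the theorem follows as soon as every closed term in $\ct{\tone}$ is shown reducible.

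Before the main induction I would record two auxiliary facts. First, evaluation is \emph{deterministic} (each term has at most one value), which is immediate by inspection of the rules, the side conditions $\rone<0$ and $\rone\geq0$ being mutually exclusive. Second, the crucial \emph{closure lemma}: reducibility depends only on a term's value, i.e. if $\eval{\lone}{\vlone}$ then $\lone\in\redset{\tone}$ iff $\vlone\in\redset{\tone}$. This is proved by induction on $\tone$, using determinism to identify the evaluations of $\lone\ltwo$ and $\vlone\ltwo$ (respectively of $\prjleft\lone$ and $\prjleft\vlone$) at arrow and product types. In particular a value coming from a reducible term is itself reducible, so substituting values produced by reducible arguments is safe.

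The heart of the argument is the \emph{fundamental lemma}: if $x_1\colon\tone_1,\ldots,x_n\colon\tone_n\vdash\lone\colon\tone$ and $\vlone_i\in\redset{\tone_i}$ are closed values, then $\lone\{\vlone_1/x_1,\ldots,\vlone_n/x_n\}\in\redset{\tone}$, proved by induction on the typing derivation. Variables, real constants $\rone$, and pairs are immediate from the definitions together with the induction hypothesis; abstraction, projection, and the conditional each reduce to the closure lemma (evaluate the argument to a value, which stays reducible, substitute, then transport reducibility back along the shared value). Function symbols $\pfone{n}$ use totality of the interpreting $\fone\in\rfuncs{n}$ together with the fact that a reducible argument of type $\reals^n$ evaluates to a tuple of reducible, hence terminating, reals.

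The main obstacle is the iterator. To show $(\iter{\vlone}{\vltwo})\lfour\in\redset{\tone}$ for reducible $\vlone\in\redset{\tone\arr\tone}$, $\vltwo\in\redset{\tone}$ and $\lfour\in\redset{\reals}$, I first evaluate $\lfour$ to a real $\rone$; the difficulty is that the recursive rule unfolds $(\iter{\vlone}{\vltwo})\rone$ into $\vlone\,((\iter{\vlone}{\vltwo})(\prede(\rone)))$, so a plain induction on the derivation does not close. I would instead run an inner well-founded induction on the natural number of unfoldings before the argument becomes negative — this number is finite because $\prede$ is the decrement $\rone\mapsto\rone-1$, so iterating it from any $\rone\geq0$ reaches a negative value after $\lfloor\rone\rfloor+1$ steps. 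The base case $\rone<0$ returns $\vltwo\in\redset{\tone}$; in the inductive step the inner hypothesis gives $(\iter{\vlone}{\vltwo})(\prede(\rone))\in\redset{\tone}$, and applying the reducible function $\vlone$ to it keeps us in $\redset{\tone}$, with the closure lemma carrying reducibility back to $(\iter{\vlone}{\vltwo})\rone$. Finally, instantiating the fundamental lemma with the empty substitution on a closed $\tj{\cdot}{\lone}{\tone}$ yields $\lone\in\redset{\tone}$, hence $\terminate{\lone}$.
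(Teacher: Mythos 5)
Your proposal is correct and follows essentially the same route as the paper: a Tait-style reducibility predicate defined by induction on types, a substitution-strengthened fundamental lemma proved by induction on the (typing of the) term, and an inner induction on the number of unfoldings for the $\mathtt{iter}$ case, which is exactly the "requires an induction" step the paper alludes to. The only cosmetic difference is that the paper separates term-reducibility $\redset{\tone}$ from value-reducibility $\vredset{\tone}$, whereas you fold both into a single predicate and compensate with an explicit closure-under-evaluation lemma; the two formulations are interchangeable.
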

We show the normalisation theorem using the standard
reducibility candidate argument.
\begin{definition}
  We define $\redset{\tone}$ as follows.
  \begin{align*}
    \redset{\tone}&= \{\lone \;\mid\; \eval{\lone}{\vlone}\land \vlone\in\vredset{\tone}\}\\
    \vredset{\reals}&= \rset\\
    \vredset{\tone\arr\ttwo}&= \{\vlone \;\mid\; \forall\vltwo\in\vredset{\tone}. \vlone\vltwo\in\redset{\ttwo}\}\\
    \vredset{\tone\prd\ttwo}&= \{\pair{\lone}{\ltwo} \;\mid\; \lone\in\redset{\tone} \land \ltwo\in\redset{\ttwo}\}
  \end{align*}
\end{definition}
Then we show the two lemmas that prove Theorem~\ref{thm:norm} together:
\begin{lemma}
\label{lemma:normalization}
  If ${\tj{\cdot}{\lone}{\tone}}$, then $\lone \in \redset{\tone}$.
\end{lemma}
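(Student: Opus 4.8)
The statement quantifies only over \emph{closed} terms, but a direct induction on the derivation of $\tj{\cdot}{\lone}{\tone}$ breaks at once in the abstraction case, where one must reason about the open body under the bound variable. The plan is therefore to prove the usual strengthening (the \emph{fundamental lemma}): for every derivable judgement $\tj{\eone}{\lone}{\tone}$ with $\eone = \vone_1:\tone_1,\ldots,\vone_k:\tone_k$ and every substitution $\gamma$ sending each $\vone_i$ to a closed value in $\vredset{\tone_i}$, the instantiated term $\lone\gamma$ lies in $\redset{\tone}$. Taking $\eone$ empty (so $\gamma$ is empty and $\lone\gamma=\lone$) recovers Lemma~\ref{lemma:normalization}. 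I would prove this by induction on the derivation of $\tj{\eone}{\lone}{\tone}$.

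Before the induction I would record two elementary facts. First, $\vredset{\tone}\subseteq\redset{\tone}$: a value $\vlone\in\vredset{\tone}$ satisfies $\eval{\vlone}{\vlone}$ by the value axiom, which already witnesses $\vlone\in\redset{\tone}$. Second, a \emph{head-expansion} lemma: whenever $\eval{\lone}{\vlone}$, $\eval{\ltwo}{\vltwo}$ and $\eval{\vlone\vltwo}{\vlone'}$, one also has $\eval{\lone\ltwo}{\vlone'}$. This is proved by inspecting which evaluation rule could have produced $\eval{\vlone\vltwo}{\vlone'}$ (a case analysis on the shape of the value $\vlone$: a function symbol, an abstraction, a projection, a conditional, or an iterator) and re-attaching the given derivations of $\eval{\lone}{\vlone}$ and $\eval{\ltwo}{\vltwo}$ in place of the reflexive value axioms. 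This is exactly what lets me pass from ``the contractum is reducible'' to ``the application is reducible'' in every elimination step.

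With these in hand the bulk of the induction is routine. Variables, reals and pairing are immediate: each evaluates to itself and the defining condition of the relevant $\vredset{\cdot}$ unfolds to exactly the induction hypotheses (for $\pair{\lone}{\ltwo}$ one uses that pairs are values with reducible components, matching the definition of $\vredset{\tone\prd\ttwo}$). For a function symbol $\pfone{n}:\reals^n\arr\reals$ one uses that any $\vltwo\in\vredset{\reals^n}$ reduces componentwise to reals and that $\pfone{n}$ is \emph{total}, so $\pfone{n}\,\vltwo$ evaluates to a real, i.e.\ an element of $\vredset{\reals}$; the only care needed is matching the nested-pair structure of a value of type $\reals^n$ with the flat tuple expected by the function-symbol rule. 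The abstraction case applies the induction hypothesis to the body under $\gamma$ extended by the fresh argument value, then invokes the beta rule. The application typing rule is discharged directly by head-expansion, and the value constants $\prjleft,\prjright$ and the conditional $\iflz{\lone}{\ltwo}$ are shown to lie in their $\vredset{\cdot}$ by checking, against the corresponding evaluation rule, that applying them to any reducible argument yields a term in $\redset{\cdot}$ (for $\iflz{\lone\gamma}{\ltwo\gamma}$ one splits on the sign of the real argument and appeals to the hypotheses on $\lone\gamma$ and $\ltwo\gamma$).

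The one genuinely non-trivial case is the iterator $\iter{\lone}{\ltwo}:\reals\arr\tone$, because reducibility of $\iter{\lone\gamma}{\ltwo\gamma}$ at an argument $\rone\in\rset$ unfolds, when $\rone\geq 0$, into a \emph{recursive} application of the iterator to $\prede(\rone)$; termination of this unfolding is precisely where the assumption that $\prede$ is the genuine predecessor ($\prede(\rone)=\rone-1$) is indispensable. I would therefore nest an auxiliary induction: for every $n\in\nset$ and every real $\rone<n$, the term $(\iter{\lone\gamma}{\ltwo\gamma})\,\rone$ lies in $\redset{\tone}$, proved by induction on $n$. When $\rone<0$ (covering the whole base case $n=0$) the iterator reduces to $\ltwo\gamma\in\redset{\tone}$ by the outer hypothesis. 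When $0\leq\rone<n+1$ we have $\prede(\rone)=\rone-1<n$, so the inner hypothesis gives $(\iter{\lone\gamma}{\ltwo\gamma})(\prede(\rone))\in\redset{\tone}$; combining this with $\lone\gamma\in\redset{\tone\arr\tone}$ (outer hypothesis), the definition of $\vredset{\tone\arr\tone}$, and head-expansion applied to $\lone\gamma\bigl((\iter{\lone\gamma}{\ltwo\gamma})(\prede(\rone))\bigr)$ yields reducibility of $(\iter{\lone\gamma}{\ltwo\gamma})\,\rone$. Since every real satisfies $\rone<n$ for some $n$, this covers all arguments and gives $\iter{\lone\gamma}{\ltwo\gamma}\in\vredset{\reals\arr\tone}$. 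I expect this iterator case, and specifically isolating the right well-founded measure on the real argument driven by $\prede$, to be the main obstacle; every other case is bookkeeping dictated by the definition of $\redset{\cdot}$.
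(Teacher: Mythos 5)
Your proposal is correct and follows essentially the same route as the paper: the identical strengthening to open terms under substitutions of values from $\vredset{\cdot}$, an induction on the (typing derivation of the) term, closure under head expansion to handle applications and the abstraction case, and a nested induction for $\iter{\lone}{\ltwo}$ driven by $\prede$. You merely spell out the details (the head-expansion lemma and the well-founded measure $\rone<n$ for the iterator) that the paper leaves implicit.
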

\begin{proof}
  The following strengthening of the statement can be proved by
  incuction on the structure of $\lone$: whenever
  ${\tj{\vone_1:\ttwo_1,\ldots,\vone_n:\ttwo_n}{\lone}{\tone}}$
  and whenever $\vlone_i\in\vredset{\ttwo_i}$ it holds that
  $$
  \lone\{\vlone_1/\vone_1,\ldots,\vlone_n/\vone_n\}\in\redset{\tone}.
  $$
  All inductive cases are standard. One that deserves a little bit of
  attentionis the case of abstractions, in which one needs to deduce
  that $(\abs{\vone}{\lone})\vlone\in\redset{\tone}$ iff
  $\sbs{\lone}{\vone}{\vlone}\in\redset{\tone}$. But this is of course
  easy to prove, because the former evaluated to any value $\vltwo$
  iff the latter evaluates to $\vltwo$. Another delicate case is
  the one of $\iter{\lone}{\ltwo}$, which requires an induction.
\end{proof}
Please observe that, by definition, if $\lone\in\redset{\tone}$, then
$\terminate{\lone}$. As a consequence, one easily gets termination
from Lemma~\ref{lemma:normalization}.
\begin{corollary}\label{cor:realnormal}
  If ${\tj{\cdot}{\lone}{\reals}}$ then there exists a unique $\rone \in \rset$
  satisfying $\eval{\lone}{\rone}$, which we indicate as $\nf{\lone}$.
\end{corollary}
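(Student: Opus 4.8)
The plan is to separate the two halves of the statement, since existence is essentially immediate from the normalisation machinery already in place, whereas uniqueness requires a separate determinism argument.

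For existence, I would simply instantiate Lemma~\ref{lemma:normalization} at the type $\reals$. Since $\tj{\cdot}{\lone}{\reals}$, the lemma yields $\lone\in\redset{\reals}$. Unfolding the definition of $\redset{\reals}$, this means $\eval{\lone}{\vlone}$ for some value $\vlone\in\vredset{\reals}$. But $\vredset{\reals}=\rset$, so $\vlone$ is in fact some real number $\rone$, giving $\eval{\lone}{\rone}$ as required.

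For uniqueness, I would prove the stronger statement that the big-step relation $\Downarrow$ is \emph{deterministic}: for every closed term $\lone$, if $\eval{\lone}{\vlone}$ and $\eval{\lone}{\vltwo}$, then $\vlone=\vltwo$. This goes by induction on the derivation of $\eval{\lone}{\vlone}$, exploiting that the rules in Figure~\ref{fig:evaluation} are syntax-directed. Either $\lone$ is a value, in which case only the reflexivity rule applies and the result equals $\lone$ itself; or $\lone$ is an application $\ltwo\lthree$, and then the subderivation for the head $\ltwo$ fixes, by the induction hypothesis, a unique value whose shape selects exactly one of the application rules. The only genuine case distinctions are the sign tests $\rone<0$ versus $\rone\geq 0$ in the conditional and iterator rules; since these conditions are mutually exclusive and exhaustive, at most one rule fires once the argument's value is determined. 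Applying the induction hypothesis to the remaining premises then forces the two results to coincide. Specialising to $\lone$ of type $\reals$ and to the real values supplied by existence yields uniqueness of $\rone$, which legitimises the notation $\nf{\lone}$.

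The main obstacle, modest as it is, lies in the uniqueness half. One must check that the dispatch on the head value in the application rules is exhaustive, so that no closed value of arrow type fails to match a rule, and one must handle the iterator's recursive premise with care: in the $\rone\geq 0$ case the term $\lthree((\iter{\lthree}{\lfour})(\prede(\rone)))$ evaluated in the premise is \emph{not} a syntactic subterm of $\ltwo\lthree$. This is precisely why the induction is set up on the height of the derivation rather than on term structure, which keeps the argument routine.
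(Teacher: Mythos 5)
Your proof is correct and follows essentially the same route as the paper's: existence via the normalisation result (you use Lemma~\ref{lemma:normalization} with $\vredset{\reals}=\rset$ where the paper invokes Theorem~\ref{thm:norm} plus the observation that the only values of type $\reals$ are real numbers, which amounts to the same thing), and uniqueness via determinism of $\Downarrow$ proved by induction on the evaluation derivation. Your extra care about the iterator's non-structural premise is a sensible refinement of the paper's ``straightforward induction on the structure of the proof,'' but it is the same argument.
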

\begin{proof}
  By Theorem~\ref{thm:norm} there exists a value $\vlone$ satisfying
  $\eval{\lone}{\vlone}$.  The only form of value of type $\reals$ is
  $\rone \in \rset$. Moreover, the fact that such a $\rone$ is unique
  is a consequence of the following, slightly more general result: if
  $\eval{\lone}{\vlone}$ and $\eval{\lone}{\vltwo}$, then $\vlone$ is
  syntactically equal to $\vltwo$. This can be proved by a
  straightforward induction on the structure of the proof that, e.g.,
  $\eval{\lone}{\vlone}$.
\end{proof}
\paragraph*{Context Equivalence}
A \emph{context} $\cone$ is nothing more than a term containing a
single occurrence of a placeholder $\chole$. Given a context $\cone$,
$\cone[\lone]$ indicates the term one obtains by substituting $\lone$
for the occurrence of $\chole$ in $\cone$. Typing rules in
Figure~\ref{fig:typingrules} can be lifted to contexts by a generalising
judgments to the form $\tj{\eone}{\cone[\tj{\etwo}{\cdot}{\tone}]}{\ttwo}$,
by which one captures that whenever $\tj{\etwo}{\lone}{\tone}$
it holds that $\tj{\eone}{\cone[\lone]}{\ttwo}$. Two terms $\lone$
and $\ltwo$ such that $\tj{\eone}{\lone,\ltwo}{\tone}$ are said to
be \emph{context equivalent} \cite{Morris/PhDThesis} 
when for every $\cone$ such that
$\tj{\emenv}{\cone[\tj{\eone}{\cdot}{\tone}]}{\reals}$ it holds
that $\nf{\cone[\lone]}=\nf{\cone[\ltwo]}$.
Context equivalence is the largest adequate congruence, and is thus
considered as the coarsest ``reasonable'' equivalence between terms.
It can also be turned into a 
pseudometric~\cite{DBLP:conf/esop/CrubilleL17,DBLP:conf/lics/CrubilleL15} 
--- called \emph{context distance} ---
by stipulating that
$$
\delta(\lone,\ltwo)=\sup_{\tj{\emenv}{\cone[\tj{\eone}{\cdot}{\tone}]}{\reals}}
  |\nf{\cone[\lone]}-\nf{\cone[\lone]}|.
$$
  The obtained notion of distance, however, is bound to 
  trivialise~\cite{DBLP:conf/esop/CrubilleL17}, given
that $\STlamreal$ is not affine. Trivialisation of context distance
highlights an important limit of the metric approach to program 
difference which, ultimately, can be identified with the fact that 
program distances are sensitive to interactions with the environment. 
Our notion of a differential logical relation tackles such a problem 
from a different perspective, namely refining the concept of a 
program difference which is not just a number, but is now able to 
take into account interactions with the environment.
\paragraph*{Set-Theoretic Semantics}
Before introducing differential logical relations, it is useful to 
remark that we can give $\STlamreal$ a standard set-theoretic semantics. 
To any type $\tone$ we associate the set $\sem{\tone}$, the latter being 
defined by induction on the structure of $\tone$ as follows:
$$
\sem{\reals}=\rset;\qquad\qquad
\sem{\tone\arr\ttwo}=\sem{\tone}\arr\sem{\ttwo};\qquad\qquad
\sem{\tone\prd\ttwo}=\sem{\tone}\prd\sem{\ttwo}.
$$
This way, any closed term $\lone\in\ct{\tone}$ is interpreted as an
element $\sem{\lone}$ of $\sem{\tone}$ in a natural way (see, e.g.~\cite{Mitchell1996}).
Up to now, everything we have said about~$\STlamreal$ is absolutely
standard, and only serves to set the stage for the next sections.
\section{Making Logical Relations Differential}
Logical relations can be seen as one of the \emph{many} ways of
defining when two programs are to be considered equivalent. Their
definition is type driven, i.e., they can be seen as a \emph{family}
$\{\relone_\tone\}_{\tone}$ of binary relations indexed by types such
that $\relone_\tone\subseteq\ct{\tone}\times\ct{\tone}$. This section
is devoted to showing how all this can be made into differential logical
relations.

The first thing that needs to be discussed is how to define the space
of \emph{differences} between programs. These are just boolean values
in logical relations, become real numbers in ordinary metrics, and is
type-dependent itself here.
A function $\metdom{\cdot}$ that assigns a set to each type is defined
as follows:
$$
\metdom{\reals}= \pirset;\qquad\qquad
\metdom{\tone\arr\ttwo}=\sem{\tone}\prd\metdom{\tone}\arr\metdom{\ttwo};\qquad\qquad
\metdom{\tone\prd\ttwo}=\metdom{\tone}\prd\metdom{\ttwo};
$$
where $\rset_{\geq 0}^{\infty} = \rset_{\geq 0}\cup\{\infty\}$.
The set $\metdom{\tone}$ is said to be the \emph{difference space}
for the type $\tone$ and is meant to model the outcome of comparisons
between closed programs of type $\tone$. As an example, when
$\tone$ is $\reals\arr\reals$, we have that
$\metdom{\tone}=\rset\times\pirset\arr\pirset$. This is the type of
the function $\delta(\lone,\ltwo)$ we used to compare the two
programs described in the Introduction.

Now, which structure could we endow $\metdom{\tone}$ with? First of
all, we can define a partial order
${\leq_{\tone}}$ over $\metdom{\tone}$ for each type $\tone$ as follows:
\begin{align*}
  \rone&\leq_{\reals}\rtwo
  && \text{ if } \rone\leq\rtwo
  \text{ as the usual order over $\rset_{\geq 0}^{\infty}$};
  \\
  \fone&\leq_{\tone\arr\ttwo}\ftwo
  && \text{ if } \forall\elone\in\sem{\tone}. \forall\elemone\in\metdom{\tone}.
  \fone(\elone,\elemone) \leq_{\ttwo} \ftwo(\elone,\elemone);
  \\
  (\elemone,\elemtwo)&\leq_{\tone\prd\ttwo}(\elemthree,\elemfour)
  && \text{ if } \elemone\leq_{\tone}\elemthree \text{ and } \elemtwo\leq_{\ttwo}\elemfour.
\end{align*}
This order has least upper bounds and greater lower bounds, thanks to
the nice structure of $\pirset$:
\begin{proposition}\label{prop:complete}
  For each type $\tone$, $(\metdom{\tone}, \leq_{\tone})$ forms a complete
  lattice.
\end{proposition}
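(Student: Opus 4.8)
The plan is to argue by induction on the structure of the type $\tone$, establishing at each step that every subset of $\metdom{\tone}$ has a least upper bound with respect to $\leq_{\tone}$. This suffices: in any poset in which arbitrary suprema exist, arbitrary infima exist as well, since the greatest lower bound of a set $S$ can be recovered as the supremum of the set of all its lower bounds (and top and bottom arise as the supremum of $\metdom{\tone}$ and of $\emptyset$, respectively). So throughout I only need to exhibit suprema and check that they are computed in the expected, structural way.

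For the base case $\tone=\reals$, the difference space is $\pirset=\prset\cup\{\infty\}$ with its usual order, and I would simply invoke the standard fact that the extended non-negative half-line is a complete lattice: a subset bounded above by a real number has a supremum by completeness of $\rset$, any unbounded subset has supremum $\infty$, and the empty set has supremum $0$, which is the bottom element; infima are obtained dually, with $\infty$ acting as top. For a product type, the difference space is by definition $\metdom{\tone}\prd\metdom{\ttwo}$, ordered componentwise as prescribed by $\leq_{\tone\prd\ttwo}$. By the induction hypothesis both factors are complete lattices, and a binary product of complete lattices ordered componentwise is again a complete lattice, with suprema taken coordinatewise; this is exactly what the definition of $\leq_{\tone\prd\ttwo}$ describes.

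The arrow case is the only one worth spelling out. By definition $\metdom{\tone\arr\ttwo}$ is the set of functions $\sem{\tone}\prd\metdom{\tone}\arr\metdom{\ttwo}$, and $\leq_{\tone\arr\ttwo}$ is precisely the pointwise order inherited from the codomain; crucially, the domain $\sem{\tone}\prd\metdom{\tone}$ is treated as a bare set and requires no order structure at all. By the induction hypothesis $\metdom{\ttwo}$ is a complete lattice, so I can appeal to the textbook fact that, for any set $X$ and any complete lattice $L$, the function space $X\arr L$ under the pointwise order is a complete lattice, the supremum of a family $\{\fone_i\}_i$ being given pointwise by $(\sup_i \fone_i)(\elone,\elemone)=\sup_i \fone_i(\elone,\elemone)$, where the right-hand supremum is taken in $L=\metdom{\ttwo}$. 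I would verify in one line that this map genuinely lands in $\metdom{\ttwo}$ at each argument (because that pointwise supremum exists there) and that it is indeed the least upper bound for $\leq_{\tone\arr\ttwo}$.

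There is essentially no hard step: the argument is a routine structural induction whose entire content is supplied by the completeness of $\pirset$ at the leaves together with the fact that completeness is preserved by pointwise function spaces and componentwise products. The only places demanding the slightest care are the base case — confirming that adjoining $\infty$ to $\prset$ supplies a top element and that the empty supremum is $0$ — and the arrow case, where one must check that the candidate pointwise supremum is a bona fide element of the function space rather than merely a family of values. Both points are immediate once the induction hypothesis guarantees completeness of the relevant codomain.
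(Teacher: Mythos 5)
Your proposal is correct and follows essentially the same route as the paper's proof: a structural induction on types establishing that all suprema exist (computed pointwise in the arrow case and componentwise in the product case), with completeness of $\pirset$ supplying the base case and the existence of arbitrary suprema implying that of arbitrary infima. Your explicit remark that suprema suffice, and your observation that the domain $\sem{\tone}\prd\metdom{\tone}$ needs no order structure, are welcome clarifications but do not change the argument.
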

  \begin{proof}
    We show that each $\metdom{\tone}$ has suprema
    by induction on types.
    \begin{varitemize}
    \item Case $\reals$.
      Then $(\metdom{\tone}, \leq_{\tone}) = (\rset_{\geq 0}^{\infty}, \leq)$
      is clearly complete.
    \item Case $\tone\arr\ttwo$.
      Given a subset $\setone \subseteq \metdom{\tone\arr\ttwo}$, we define
      $\elemthree_{\setone} \in \metdom{\tone\arr\ttwo}$ as:
      $$
      \elemthree_{\setone}(\vlone,\elemone)
      = \sup_{\fone \in \setone}\fone(\vlone,\elemone),
      $$
      where the supremum on the right-hand side exists by induction hypothesis (on the type $\ttwo$).
      This $\elemthree_{\setone}$ serves as the supremum of $\setone$ because:
      \begin{varitemize}
      \item (\textbf{Upperbound}.) For any $\ftwo \in \setone$, 
        by definition of supremum it holds that:
        
        $\forall\vlone\in\cv{\tone}. \forall\elemone\in\metdom{\tone}.
        \ftwo(\vlone,\elemone)
        \leq_{\tone} \sup_{\fone \in \setone}\fone(\vlone,\elemone)
        = \elemthree_{\setone}(\vlone,\elemone)$.
        Hence $\ftwo \leq_{\tone\arr\ttwo}\elemthree_{\setone}$.
      \item (\textbf{Leastness}.) Suppose that $\elemthree'$
        is an upperbound of $\setone$,
        i.e.\ $\forall\fone\in\setone. \fone \leq_{\tone\arr\ttwo} \elemthree'$.
        Then, it by definition means that:
        $\forall\fone\in\setone. \forall\vlone\in\cv{\tone}. \forall\elemone\in\metdom{\tone}.
        \fone(\vlone,\elemone) \leq_{\ttwo} \elemthree'(\vlone,\elemone).$
        Therefore $\elemthree'(\vlone,\elemone)$ is an upperbound of
        the set $\{\fone(\vlone,\elemone)\}_{\fone\in\setone}$
        for each $\vlone, \elemone$.
            Thus by definition of supremum,
            $\elemthree_{\setone}(\vlone,\elemone)
            = \sup_{\fone \in \setone}\fone(\vlone,\elemone)
            \leq_{\ttwo} \elemthree'(\vlone,\elemone)$
            for each $\vlone, \elemone$.
            Hence $\elemthree_{\setone} \leq_{\tone\arr\ttwo} \elemthree'$
            holds by definition of $\leq_{\tone\arr\ttwo}$.
        \end{varitemize}
        \item Case $\tone \prd \ttwo$.
          Given a subset $\setone \subseteq \metdom{\tone\prd\ttwo}$,
          we define 
          $
          \elemthree_{\setone}
          = (\sup \boldsymbol{\prjleft}\setone,
             \sup \boldsymbol{\prjright}\setone)
          \in \metdom{\tone\prd\ttwo}$,
        where $\boldsymbol{\prjleft}$ and $\boldsymbol{\prjright}$ are
        meta-level projections and the suprema on the right-hand side
        exist by induction hypothesis (on the types $\tone$ and $\ttwo$).
        One can verify that $\elemthree_{\setone}$ is the supremum of
        $\setone$ in a straightforward way:
        \begin{varitemize}
          \item (\textbf{Upperbound}.)
            For any $(\elemone,\elemtwo) \in \setone$, 
            by definition of supremum 
            $\elemone \leq_{\tone} \sup \boldsymbol{\prjleft}\setone$
            and
            $\elemtwo \leq_{\ttwo} \sup \boldsymbol{\prjright}\setone$
            hold.
            Hence $(\elemone,\elemtwo) \leq_{\tone\prd\ttwo}
            (\sup \boldsymbol{\prjleft}\setone, \sup \boldsymbol{\prjright}\setone)
            = \elemthree_{\setone}$ by definition of $\leq_{\tone\prd\ttwo}$.
          \item (\textbf{Leastness}.)
            Suppose that $(\elemthree'_1, \elemthree'_2)$
            is an upperbound of $\setone$,
            i.e.\

            \noindent
            $\forall(\elemone_1,\elemone_2)\in\setone.
            (\elemone_1,\elemone_2) \leq_{\tone\prd\ttwo} (\elemthree'_1, \elemthree'_2)$.
            It by definition means that
            $\forall \elemone_1\in\boldsymbol{\prjleft}\setone.
            \elemone_1 \leq_{\tone} \elemthree'_1$
            and $\forall \elemone_2\in\boldsymbol{\prjright}\setone.
            \elemone_2 \leq_{\ttwo} \elemthree'_2$.
            Therefore $\elemthree'_1$ (resp.\ $\elemthree'_2$) is
            an upperbound of the set $\boldsymbol{\prjleft}\setone$
            (resp.\ $\boldsymbol{\prjright}\setone$).
            Thus by definition of supremum,
            $\sup \boldsymbol{\prjleft}\setone\leq_{\tone}\elemthree'_1$
            (resp.\ $\sup \boldsymbol{\prjright}\setone\leq_{\ttwo}\elemthree'_2$).
            Hence $(\sup \boldsymbol{\prjleft}\setone, \sup \boldsymbol{\prjright}\setone) \leq_{\tone\prd\ttwo} (\elemthree'_1, \elemthree'_2)$
            holds by definition of $\leq_{\tone\prd\ttwo}$.
        \end{varitemize}
    \end{varitemize}
  \end{proof}

The fact that $\metdom{\tone}$ has a nice order-theoretic structure
is not the end of the story. 
For every type $\tone$, we define a binary operation
  $\mult{\tone}$ as follows:
  \begin{align*}
    \rone \mult{\reals} \rtwo &\stackrel{\text{def.}}{=} \rone+\rtwo\mbox{ if }\rone, \rtwo \in \rset_{\geq 0};
    &&&
    (\fone \mult{\tone\arr\ttwo} \ftwo)(\vlone,\elemone)
    &\stackrel{\text{def.}}{=} \fone(\vlone,\elemone) \mult{\ttwo} \ftwo(\vlone,\elemone);
    \\
    \rone \mult{\reals} \rtwo &\stackrel{\text{def.}}{=} \infty\mbox{ if }\rone = \infty\vee\rtwo = \infty;
    &&&
     (\elemone,\elemthree) \mult{\tone\prd\ttwo} (\elemtwo,\elemfour)
    &\stackrel{\text{def.}}{=} 
    (\elemone\mult{\tone}\elemtwo, \elemthree\mult{\ttwo}\elemfour).
  \end{align*}
  This is precisely what it is needed to turn $\metdom{\tone}$ into a 
  \emph{quantale}\footnote{
  Recall that a quantale $\qtlone = (Q, \leq_Q, 0_Q, *_Q)$ consists 
  of a complete lattice $(Q, \leq_Q)$ and a monoid $(Q, 0_Q, *_Q)$ 
  such that the lattice and monoid structure properly interact 
  (meaning that monoid multiplication distributes over joins). 
  We refer to 
  \cite{Rosenthal/Quantales/1990,Hoffman-Seal-Tholem/monoidal-topology/2014} 
  for details.
  }
  \cite{Rosenthal/Quantales/1990}.
  \begin{proposition}\label{prop:quantale}
    For each type $\tone$, $\metdom{\tone}$ forms
    a commutative unital non-idempotent quantale.
    That is, the following holds for any $\tone$:
    \begin{varitemize}
      \item $\elemone \mult{\tone} \elemtwo = \elemtwo \mult{\tone} \elemone$
        for all $\elemone, \elemtwo \in \metdom{\tone}$,
      \item $\elemone \mult{\tone} (\sup_{\indone \in \indset}\elemtwo_{\indone})
        = \sup_{\indone \in \indset} (\elemone \mult{\tone}\elemtwo_{\indone})$
        for all $\elemone, \elemtwo_{\indone} \in \metdom{\tone}$
        where $\indset$ is an arbitrary index set,
      \item there exists an element $\unit{\tone} \in \metdom{\tone}$
        satisfying $\unit{\tone} \mult{\tone} \elemone = \elemone$
        for all $\elemone \in \metdom{\tone}$,
      \item $\mult{\tone}$ does not necessarily satisfy
        $\elemone \mult{\tone} \elemone = \elemone$.
    \end{varitemize}
  \end{proposition}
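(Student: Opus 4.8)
The plan is to prove all four properties simultaneously by induction on the type $\tone$, following the same case analysis as Proposition~\ref{prop:complete} and reusing the explicit description of suprema established there. In each inductive case the multiplication is defined either pointwise or componentwise, so the strategy is to evaluate the claimed identities at an arbitrary argument (or coordinate) and then appeal to the induction hypothesis.

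For the base case $\reals$ the structure is $(\pirset, +)$ under the usual order. Commutativity is immediate, and the unit is $\unit{\reals} = 0$, since $0 + \rone = \rone$ for every $\rone \in \pirset$ (including $\rone = \infty$). Distributivity over suprema reduces to the standard fact that $\rone + \sup_{\indone \in \indset} \rtwo_\indone = \sup_{\indone \in \indset}(\rone + \rtwo_\indone)$ in the extended non-negative reals, which I would verify by first splitting off the absorbing cases ($\rone = \infty$, or some $\rtwo_\indone = \infty$, where both sides equal $\infty$) and otherwise using that $x \mapsto \rone + x$ preserves nonempty suprema on $\pirset$. Non-idempotency is witnessed concretely by $1 \mult{\reals} 1 = 2 \neq 1$.

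For the arrow case $\tone \arr \ttwo$ every operation is pointwise, so I would lift each property from the induction hypothesis at $\ttwo$. Commutativity and associativity follow by evaluating at an arbitrary $(\vlone,\elemone)$ and invoking the hypothesis; the unit is the constant map $\unit{\tone\arr\ttwo} = \metabs{(\vlone,\elemone)}{\unit{\ttwo}}$, for which $(\unit{\tone\arr\ttwo} \mult{\tone\arr\ttwo} \fone)(\vlone,\elemone) = \unit{\ttwo} \mult{\ttwo} \fone(\vlone,\elemone) = \fone(\vlone,\elemone)$. The crucial step is distributivity: here I would invoke the characterization of suprema from Proposition~\ref{prop:complete}, namely $(\sup_{\indone}\ftwo_\indone)(\vlone,\elemone) = \sup_{\indone}\ftwo_\indone(\vlone,\elemone)$, to reduce the identity to its value at each fixed point, where it follows from distributivity at $\ttwo$. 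Non-idempotency propagates by taking the constant map at a non-idempotent element of $\metdom{\ttwo}$. The product case $\tone \prd \ttwo$ is entirely analogous but componentwise: the unit is $(\unit{\tone},\unit{\ttwo})$, and since suprema in $\metdom{\tone\prd\ttwo}$ are computed componentwise (again by Proposition~\ref{prop:complete}), all four properties follow coordinatewise from the two induction hypotheses.

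I expect the main obstacle to be the distributivity clause, for two related reasons. First, in the base case the $\infty$ values must be handled separately, since addition on $\pirset$ preserves only \emph{nonempty} suprema; the empty join is genuinely degenerate here, because the multiplicative unit $0$ coincides with the bottom of the lattice, so the distributivity law is to be read over the relevant nonempty families. Second, in the higher-order and product cases the whole argument rests on the fact that suprema are taken pointwise and componentwise, so the proof depends on quoting Proposition~\ref{prop:complete} precisely; once that is in place, every inductive step is a routine evaluation at a point or coordinate followed by an appeal to the induction hypothesis.
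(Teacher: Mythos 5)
Your proposal follows essentially the same route as the paper: induction on types, with the base case handled in $(\pirset,+)$ and the arrow and product cases reduced pointwise and componentwise to the induction hypothesis via the explicit suprema of Proposition~\ref{prop:complete}, with the unit given by $0$, the constant-$\unit{\ttwo}$ map, and the pair of units respectively. Your extra caveat about the empty index set is well taken --- since $0$ is both the unit and the bottom of $\pirset$, distributivity over the empty join would force $\elemone \mult{\reals} 0 = 0$, so the law indeed only holds for nonempty families, a point the paper's proof silently glosses over --- but this does not change the overall argument.
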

  \begin{proof}
    By induction on $\tone$.
    \begin{varitemize}
    \item
      Case $\reals$.
        The multiplication $\mult{\reals}$ is clearly commutative and satisfies
        $\rone \mult{\tone} (\sup_{\indone \in \indset}\rtwo_{\indone})
        = \sup_{\indone \in \indset} (\rone \mult{\tone}\rtwo_{\indone})$
        for all $\rone, \rtwo_{\indone} \in \metdom{\reals} = \rset_{\geq 0}^{\infty}$.
        The unit $\unit{\reals}$ is $0$;
        the multiplication is obviously non-idempotent.
      \item
        Case $\tone \arr \ttwo$.
        It holds that $(\fone \mult{\tone\arr\ttwo} \ftwo) = (\ftwo \mult{\tone\arr\ttwo} \fone)$ because
        $$
        \begin{array}{ll}  
          (\fone \mult{\tone\arr\ttwo} \ftwo)(\vlone,\elemone)
          &= \fone(\vlone,\elemone) \mult{\ttwo} \ftwo(\vlone,\elemone)\\
          &\stackrel{\text{I.H.}}{=} \ftwo(\vlone,\elemone) \mult{\ttwo} \fone(\vlone,\elemone)\\
          &=(\ftwo \mult{\tone\arr\ttwo} \fone)(\vlone,\elemone)
        \end{array}
        $$

        and $\fone \mult{\tone\arr\ttwo} (\sup_{\indone \in \indset}(\ftwo_{\indone})) = \sup_{\indone \in \indset} (\fone \mult{\tone\arr\ttwo} \ftwo_{\indone})$
        because

        $$
        \begin{array}{ll} 
          (\fone \mult{\tone\arr\ttwo} (\sup_{\indone \in \indset}(\ftwo_{\indone})))(\vlone, \elemone)
          &=\fone(\vlone,\elemone) \mult{\ttwo} (\sup_{\indone \in \indset}\ftwo_{\indone})(\vlone,\elemone)\\
          &=\fone(\vlone,\elemone) \mult{\ttwo} (\sup_{\indone \in \indset}(\ftwo_{\indone}(\vlone,\elemone)))\\
          &\stackrel{\text{I.H.}}{=}\sup_{\indone \in \indset}(\fone(\vlone,\elemone) \mult{\ttwo}\ftwo_{\indone}(\vlone,\elemone))\\
          &=\sup_{\indone \in \indset}((\fone \mult{\tone\arr\ttwo}\ftwo_{\indone})(\vlone,\elemone))\\
          &=(\sup_{\indone \in \indset}(\fone \mult{\tone\arr\ttwo}\ftwo_{\indone}))(\vlone,\elemone).
        \end{array}
        $$
        The unit $\unit{\tone\arr\ttwo}$ is the constant $\unit{\ttwo}$ function: $\unit{\tone\arr\ttwo}(\vlone,\elemone) = \unit{\ttwo}$ for all $\vlone, \elemone$.

      \item Case $\tone \prd \ttwo$.
        Then 
        $$
        \begin{array}{ll} 
          (\elemone,\elemone') \mult{\tone\prd\ttwo} (\elemtwo,\elemtwo')
          &= (\elemone\mult{\tone}\elemtwo, \elemone'\mult{\ttwo}\elemtwo')\\
          &\stackrel{\text{I.H.}}{=} (\elemtwo\mult{\tone}\elemone, \elemtwo'\mult{\ttwo}\elemone')\\
          &=(\elemtwo,\elemtwo') \mult{\tone\prd\ttwo} (\elemone,\elemone')
        \end{array}
        $$
        and 
        $$
        \begin{array}{ll} 
          (\elemone,\elemone') \mult{\tone\prd\ttwo} (\sup_{\indone \in \indset}(\elemtwo_{\indone}, \elemtwo'_{\indone}))
          &=(\elemone,\elemone') \mult{\tone\prd\ttwo} (\sup_{\indone \in \indset}\elemtwo_{\indone}, \sup_{\indone \in \indset}\elemtwo'_{\indone})\\
          &=(\elemone \mult{\tone} (\sup_{\indone \in \indset}\elemtwo_{\indone}) ,\elemone' \mult{\tone} (\sup_{\indone \in \indset}\elemtwo'_{\indone}))\\
          &\stackrel{\text{I.H.}}{=}(\sup_{\indone \in \indset}(\elemone \mult{\tone}\elemtwo_{\indone}), \sup_{\indone \in \indset}(\elemone' \mult{\tone}\elemtwo'_{\indone}))\\
          &=\sup_{\indone \in \indset}(\elemone \mult{\tone}\elemtwo_{\indone}, \elemone' \mult{\tone}\elemtwo'_{\indone})\\
          &=\sup_{\indone \in \indset}\left((\elemone,\elemone') \mult{\tone\prd\ttwo} (\elemtwo_{\indone}, \elemtwo'_{\indone})\right).
        \end{array}
        $$
        The unit $\unit{\tone\prd\ttwo}$ is $(\unit{\tone}, \unit{\ttwo})$.
    \end{varitemize}    
  \end{proof}

  The fact that $\metdom{\tone}$ is a quantale means that it has, e.g., the right
  structure to be the codomain of generalised 
  metrics~\cite{Lawvere/GeneralizedMetricSpaces/1973,Hoffman-Seal-Tholem/monoidal-topology/2014}. Actually,
  a more general structure is needed for our purposes, namely the one of a generalised
  metric domain, which will be thoroughly discussed in Section~\ref{sect:categorical} below.
  For the moment, let us concentrate our attention to programs:

  \begin{definition}[Differential Logical Relations]\label{def:dlr}
  We define a \emph{differential logical relation}
  $\{\delta_{\tone} \subseteq \lset{\tone}\times\metdom{\tone}\times\lset{\tone}\}_{\tone}$
  as a set of ternary relations indexed by types satisfying
  \begin{align*}
    \diffmet{\reals}{\lone}{\rone}{\ltwo}
     \Leftrightarrow &\; |\nf{\lone}-\nf{\ltwo}|\leq\rone;\\
    \diffmet{\tone\prd\ttwo}{\lone}{(\done_1, \done_2)}{\ltwo}
     \Leftrightarrow &\; \diffmet{\tone}{\prjleft\lone}{\done_1}{\prjleft\ltwo}
    \land \diffmet{\ttwo}{\prjright\lone}{\done_2}{\prjright\ltwo}\\
    \diffmet{\tone\arr\ttwo}{\lone}{\done}{\ltwo}
     \Leftrightarrow &\; (\forall \vlone \in \cv{\tone}.\ \forall \vone \in \metdom{\tone}.\ \forall \vltwo \in \cv{\tone}.\ \\
    & \,\,\diffmet{\tone}{\vlone}{\vone}{\vltwo}
    \Rightarrow \diffmet{\ttwo}{\lone\vlone}{\done(\sem{\vlone},\vone)}{\ltwo\vltwo}
    \land \diffmet{\ttwo}{\lone\vltwo}{\done(\sem{\vlone},\vone)}{\ltwo\vlone}).
  \end{align*}
  \end{definition}
  An intuition behind the condition required for
  $\diffmet{\tone\arr\ttwo}{\lone}{\done}{\ltwo}$ is that
  $\done(\sem{\vlone},\vtwo)$ overapproximates both the ``distance''
  between $\lone \vlone$ and $\ltwo \vltwo$ and the one between $\lone
  \vltwo$ and $\ltwo \vlone$, this \emph{whenever} $\vltwo$ is within
  the error $\vone$ from $\vlone$.

    Some basic facts about differential logical relations, which will be useful in the following
    are now in order.
  \begin{lemma}\label{lem:symmetric}
    For every $\tone,\lone,\ltwo$, it holds that
    $\diffmet{\tone}{\lone}{\done}{\ltwo}$ if and only if
    $\diffmet{\tone}{\ltwo}{\done}{\lone}$.
  \end{lemma}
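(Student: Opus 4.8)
The plan is to prove, by induction on the type $\tone$, the single implication
$$
P(\tone)\colon\quad \forall \lone,\ltwo,\done.\; \diffmet{\tone}{\lone}{\done}{\ltwo} \Rightarrow \diffmet{\tone}{\ltwo}{\done}{\lone},
$$
from which the claimed biconditional follows immediately: instantiating $P(\tone)$ with $\ltwo,\lone$ in place of $\lone,\ltwo$ yields the converse implication, so establishing $P(\tone)$ for every type already delivers the ``if and only if''. This lets me avoid dragging both directions simultaneously through the induction.

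The base and product cases are routine. For $\tone=\reals$ the defining clause is $|\nf{\lone}-\nf{\ltwo}|\leq \done$, and symmetry is immediate from $|\nf{\lone}-\nf{\ltwo}| = |\nf{\ltwo}-\nf{\lone}|$. For $\tone=\tone\prd\ttwo$ with $\done=(\done_1,\done_2)$, the defining clause is the conjunction $\diffmet{\tone}{\prjleft\lone}{\done_1}{\prjleft\ltwo}\land\diffmet{\ttwo}{\prjright\lone}{\done_2}{\prjright\ltwo}$; applying the induction hypotheses $P(\tone)$ and $P(\ttwo)$ to the two conjuncts separately rewrites them as $\diffmet{\tone}{\prjleft\ltwo}{\done_1}{\prjleft\lone}$ and $\diffmet{\ttwo}{\prjright\ltwo}{\done_2}{\prjright\lone}$, which is exactly $\diffmet{\tone\prd\ttwo}{\ltwo}{(\done_1,\done_2)}{\lone}$.

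The arrow case $\tone\arr\ttwo$ is the heart of the argument. Assume $\diffmet{\tone\arr\ttwo}{\lone}{\done}{\ltwo}$ and fix arbitrary $\vlone\in\cv{\tone}$, $\vone\in\metdom{\tone}$, $\vltwo\in\cv{\tone}$ with $\diffmet{\tone}{\vlone}{\vone}{\vltwo}$. Instantiating the hypothesis at \emph{these same} witnesses (without swapping the value arguments) yields the two conjuncts $\diffmet{\ttwo}{\lone\vlone}{\done(\sem{\vlone},\vone)}{\ltwo\vltwo}$ and $\diffmet{\ttwo}{\lone\vltwo}{\done(\sem{\vlone},\vone)}{\ltwo\vlone}$. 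Applying the induction hypothesis $P(\ttwo)$ to the second conjunct gives $\diffmet{\ttwo}{\ltwo\vlone}{\done(\sem{\vlone},\vone)}{\lone\vltwo}$, and applying $P(\ttwo)$ to the first gives $\diffmet{\ttwo}{\ltwo\vltwo}{\done(\sem{\vlone},\vone)}{\lone\vlone}$. These are precisely the two conjuncts required for $\diffmet{\tone\arr\ttwo}{\ltwo}{\done}{\lone}$, so the case closes using only $P(\ttwo)$.

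I expect the main obstacle to be resisting the temptation to handle the arrow case by swapping $\vlone$ and $\vltwo$ and appealing to $P(\tone)$ on the domain: this does \emph{not} work, because the difference fed to $\done$ is always evaluated at $\sem{\vlone}$ (the first value), so exchanging the value arguments would replace $\done(\sem{\vlone},\vone)$ by $\done(\sem{\vltwo},\vone)$ and destroy the match. The correct observation --- and the reason the defining clause for $\tone\arr\ttwo$ carries \emph{two} conjuncts rather than one --- is that those two conjuncts are each other's mirror image under $\lone\leftrightarrow\ltwo$, so that $P(\ttwo)$ alone suffices to exchange them into the pair demanded by the symmetric statement, with no hypothesis on the domain type $\tone$ required.
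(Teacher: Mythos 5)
Your proof is correct and follows the same route as the paper, which simply says ``by induction on types''; your arrow case correctly exploits the fact that the two conjuncts in the definition of $\diffmet{\tone\arr\ttwo}{\lone}{\done}{\ltwo}$ are mirror images of each other under swapping $\lone$ and $\ltwo$, so only the induction hypothesis at the codomain type $\ttwo$ is needed.
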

  \begin{proof}
    By induction on types.
  \end{proof}
  \begin{lemma}
    Let $\tj{\cdot}{\lone,\ltwo}{\tone}$ and $\done \in \metdom{\tone}$.
    Assume that 
    $\diffmet{\reals}{\lone}{\done}{\ltwo}$
    and $\diffmet{\reals}{\ltwo}{\dtwo}{\lthree}$
    implies $\diffmet{\reals}{\lone}{\done+\dtwo}{\lthree}$.
    Then
    $\diffmet{\tone}{\lone}{\done}{\ltwo}$
    and $\diffmet{\tone}{\ltwo}{\dtwo}{\lthree}$
    implies $\diffmet{\tone}{\lone}{\done+\dtwo}{\lthree}$.
  \end{lemma}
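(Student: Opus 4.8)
I read the hypothesis as asserting the triangle inequality at the base type $\reals$ and the conclusion as its counterpart at an arbitrary type $\tone$, and I prove the latter from the former by induction on the structure of $\tone$, the combined difference $\done+\dtwo$ being the quantale product of Proposition~\ref{prop:quantale} (written additively, since on $\metdom{\reals}=\pirset$ it is ordinary addition). The base case $\tone=\reals$ is then exactly the assumed implication, which is in turn nothing but the triangle inequality for $|\cdot|$: from $|\nf{\lone}-\nf{\ltwo}|\leq\done$ and $|\nf{\ltwo}-\nf{\lthree}|\leq\dtwo$ one gets $|\nf{\lone}-\nf{\lthree}|\leq\done+\dtwo$. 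Throughout I will use Lemma~\ref{lem:symmetric} to reorient any instance $\diffmet{\sigma}{M}{d}{N}$ to $\diffmet{\sigma}{N}{d}{M}$ when it is convenient for chaining.

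For the product case $\tone\prd\ttwo$, write $\done=(\done_1,\done_2)$ and $\dtwo=(\dtwo_1,\dtwo_2)$. Unfolding the product clause of Definition~\ref{def:dlr} turns the two hypotheses into $\diffmet{\tone}{\prjleft\lone}{\done_1}{\prjleft\ltwo}$, $\diffmet{\ttwo}{\prjright\lone}{\done_2}{\prjright\ltwo}$ and the analogous pair relating $\ltwo$ to $\lthree$. Applying the induction hypothesis at $\tone$ to the left projections (with middle term $\prjleft\ltwo$) and at $\ttwo$ to the right projections (with middle term $\prjright\ltwo$) yields $\diffmet{\tone}{\prjleft\lone}{\done_1\mult{\tone}\dtwo_1}{\prjleft\lthree}$ and $\diffmet{\ttwo}{\prjright\lone}{\done_2\mult{\ttwo}\dtwo_2}{\prjright\lthree}$. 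Since $(\done_1,\done_2)\mult{\tone\prd\ttwo}(\dtwo_1,\dtwo_2)=(\done_1\mult{\tone}\dtwo_1,\done_2\mult{\ttwo}\dtwo_2)$ by the definition of $\mult{\tone\prd\ttwo}$, folding the product clause back up delivers $\diffmet{\tone\prd\ttwo}{\lone}{\done+\dtwo}{\lthree}$.

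The arrow case $\tone\arr\ttwo$ is where I expect the real difficulty. Fixing $\vlone,\vltwo\in\cv{\tone}$ and $\vone\in\metdom{\tone}$ with $\diffmet{\tone}{\vlone}{\vone}{\vltwo}$, the goal is the pair of \emph{crossed} conditions $\diffmet{\ttwo}{\lone\vlone}{(\done+\dtwo)(\sem{\vlone},\vone)}{\lthree\vltwo}$ and $\diffmet{\ttwo}{\lone\vltwo}{(\done+\dtwo)(\sem{\vlone},\vone)}{\lthree\vlone}$, where $(\done+\dtwo)(\sem{\vlone},\vone)=\done(\sem{\vlone},\vone)\mult{\ttwo}\dtwo(\sem{\vlone},\vone)$. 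The natural move is to interpose applications of the bridge term $\ltwo$ and to chain through the induction hypothesis at the smaller type $\ttwo$. Instantiating the arrow clauses of the two hypotheses at the \emph{same} related triple gives $\diffmet{\ttwo}{\lone\vlone}{\done(\sem{\vlone},\vone)}{\ltwo\vltwo}$ and $\diffmet{\ttwo}{\lone\vltwo}{\done(\sem{\vlone},\vone)}{\ltwo\vlone}$ from the first hypothesis, and $\diffmet{\ttwo}{\ltwo\vlone}{\dtwo(\sem{\vlone},\vone)}{\lthree\vltwo}$ and $\diffmet{\ttwo}{\ltwo\vltwo}{\dtwo(\sem{\vlone},\vone)}{\lthree\vlone}$ from the second. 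The obstacle is that, even allowing reorientation by Lemma~\ref{lem:symmetric}, these four facts only compose into the \emph{uncrossed} conclusions $\diffmet{\ttwo}{\lone\vlone}{\cdots}{\lthree\vlone}$ and $\diffmet{\ttwo}{\lone\vltwo}{\cdots}{\lthree\vltwo}$, whereas Definition~\ref{def:dlr} demands the crossed ones: the two relevant chains live in disjoint ``components'' $\{\lone\vlone,\ltwo\vltwo,\lthree\vlone\}$ and $\{\lone\vltwo,\ltwo\vlone,\lthree\vltwo\}$. To bridge them I plan to re-instantiate one hypothesis at a reflexive input, so that $\ltwo$ is fed the same value on both sides, and then chain; the delicate part is that this produces a spurious error argument (of shape $\done(\sem{\vlone},\zero{\tone})$ rather than $\done(\sem{\vlone},\vone)$, using the bottom element of the lattice of Proposition~\ref{prop:complete}), and absorbing it back requires monotonicity of the difference in its error slot together with upward closure of $\delta$ in its difference argument, and—when $\tone$ is itself higher order—a genuine self-distance in place of $\zero{\tone}$. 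Establishing these order-theoretic auxiliary facts, rather than the chaining bookkeeping itself, is what I expect to be the main obstacle.
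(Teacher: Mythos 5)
Your base and product cases are correct, and your diagnosis of the arrow case is exactly right: instantiating both hypotheses at the single triple $(\vlone,\vone,\vltwo)$ yields four facts whose transitive closure contains only the uncrossed pairs, so the crossed conclusions $\diffmet{\ttwo}{\lone\vlone}{(\done+\dtwo)(\sem{\vlone},\vone)}{\lthree\vltwo}$ and $\diffmet{\ttwo}{\lone\vltwo}{(\done+\dtwo)(\sem{\vlone},\vone)}{\lthree\vlone}$ are unreachable by chaining alone. (The paper's own proof is the single line ``by induction on types'' and does not address this point, so there is nothing to compare against.) The problem is that the repair you sketch does not go through. First, elements of $\metdom{\tone\arr\ttwo}=\sem{\tone}\prd\metdom{\tone}\arr\metdom{\ttwo}$ are arbitrary set-theoretic functions: nothing in the definition makes $\done$ monotone in its error slot, so an error of shape $\done(\sem{\vlone},\zero{\tone})$ (or $\done(\sem{\vlone},\elemone)$ for a genuine self-distance $\elemone$ of $\vlone$) cannot in general be ``absorbed back'' into $\done(\sem{\vlone},\vone)$. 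Second, $\diffmet{\tone}{\vlone}{\zero{\tone}}{\vlone}$ is simply false for most higher-order values, as the paper's own example with $\lid$ shows.

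The move that actually closes the first-order instances is different and needs no monotonicity: instantiate the \emph{first} hypothesis at the diagonal triple $(\vlone,\vone,\vlone)$ with the \emph{same} error $\vone$. When the argument type is $\reals$ (or a product of such), $\diffmet{\reals}{\vlone}{\vone}{\vlone}$ holds outright, since $0\leq\vone$; this yields $\diffmet{\ttwo}{\lone\vlone}{\done(\sem{\vlone},\vone)}{\ltwo\vlone}$ with exactly the right error, and chaining with $\diffmet{\ttwo}{\ltwo\vlone}{\dtwo(\sem{\vlone},\vone)}{\lthree\vltwo}$ via the induction hypothesis at $\ttwo$ gives the first crossed goal (the second is symmetric, using the diagonal instance of the second hypothesis instead). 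What you therefore need, to make the induction go through at all types, is the auxiliary claim that $\diffmet{\tone}{\vlone}{\vone}{\vltwo}$ implies $\diffmet{\tone}{\vlone}{\vone}{\vlone}$; this is immediate at $\reals$ and lifts through products, but at arrow argument types the obvious chaining argument only yields $\diffmet{\tone}{\vlone}{\vone\mult{\tone}\vone}{\vlone}$, i.e.\ it doubles the error. Until that claim is established (or the statement is suitably weakened), the arrow case --- and hence the lemma --- remains unproven in your write-up.
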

  \begin{proof}
    By induction on types.
  \end{proof}
  \begin{lemma}
    Let $\tj{\cdot}{\lone,\ltwo}{\tone}$ and $\done \in \metdom{\tone}$.
    Assume that 
    $\diffmet{\reals}{\lone}{\done}{\ltwo}$
    and $\done \leq \dtwo$
    implies $\diffmet{\reals}{\lone}{\dtwo}{\ltwo}$.
    Then
    $\diffmet{\tone}{\lone}{\done}{\ltwo}$
  and $\done \leq \dtwo$
  implies $\diffmet{\tone}{\lone}{\dtwo}{\ltwo}$.
  \end{lemma}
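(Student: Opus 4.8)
The plan is to proceed exactly as in the two preceding lemmas, by induction on the structure of the type $\tone$, establishing that the relation is upward-closed in its difference argument. The supplied hypothesis is precisely what discharges the base case, and the two inductive cases propagate upward-closure through products and function spaces by applying the induction hypothesis at the immediate subtypes together with the pointwise reading of $\leq_{\tone}$.

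First I would treat the case $\reals$. Here $\diffmet{\reals}{\lone}{\done}{\ltwo}$ unfolds to $|\nf{\lone}-\nf{\ltwo}|\leq\done$, and since $\done\leq\dtwo$ in $\pirset$, transitivity of the order on $\pirset$ gives $|\nf{\lone}-\nf{\ltwo}|\leq\dtwo$, i.e.\ $\diffmet{\reals}{\lone}{\dtwo}{\ltwo}$; this is exactly the assumed implication at $\reals$, so nothing beyond invoking the hypothesis is required.

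Next, the case $\tone\prd\ttwo$. Writing $\done=(\done_1,\done_2)$ and $\dtwo=(\dtwo_1,\dtwo_2)$, the hypothesis $\done\leq_{\tone\prd\ttwo}\dtwo$ means $\done_1\leq_{\tone}\dtwo_1$ and $\done_2\leq_{\ttwo}\dtwo_2$ by definition of the product order. Unfolding $\diffmet{\tone\prd\ttwo}{\lone}{(\done_1,\done_2)}{\ltwo}$ yields $\diffmet{\tone}{\prjleft\lone}{\done_1}{\prjleft\ltwo}$ and $\diffmet{\ttwo}{\prjright\lone}{\done_2}{\prjright\ltwo}$; the induction hypotheses at $\tone$ and at $\ttwo$ then upgrade these to $\diffmet{\tone}{\prjleft\lone}{\dtwo_1}{\prjleft\ltwo}$ and $\diffmet{\ttwo}{\prjright\lone}{\dtwo_2}{\prjright\ltwo}$, which together are exactly $\diffmet{\tone\prd\ttwo}{\lone}{(\dtwo_1,\dtwo_2)}{\ltwo}$.

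The only case carrying any content is $\tone\arr\ttwo$, and even there the argument is mechanical. By definition of the order on $\sem{\tone}\prd\metdom{\tone}\arr\metdom{\ttwo}$, the hypothesis $\done\leq_{\tone\arr\ttwo}\dtwo$ is the pointwise inequality $\done(\elone,\elemone)\leq_{\ttwo}\dtwo(\elone,\elemone)$ for all $\elone\in\sem{\tone}$ and $\elemone\in\metdom{\tone}$. Assuming $\diffmet{\tone\arr\ttwo}{\lone}{\done}{\ltwo}$, I would fix arbitrary $\vlone,\vltwo\in\cv{\tone}$ and $\vone\in\metdom{\tone}$ with $\diffmet{\tone}{\vlone}{\vone}{\vltwo}$; Definition~\ref{def:dlr} then gives $\diffmet{\ttwo}{\lone\vlone}{\done(\sem{\vlone},\vone)}{\ltwo\vltwo}$ and $\diffmet{\ttwo}{\lone\vltwo}{\done(\sem{\vlone},\vone)}{\ltwo\vlone}$. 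Instantiating the pointwise inequality at $\elone=\sem{\vlone}$ and $\elemone=\vone$ gives $\done(\sem{\vlone},\vone)\leq_{\ttwo}\dtwo(\sem{\vlone},\vone)$, so the induction hypothesis at $\ttwo$ replaces $\done(\sem{\vlone},\vone)$ by $\dtwo(\sem{\vlone},\vone)$ in both conjuncts; since $\vlone,\vltwo,\vone$ were arbitrary, $\diffmet{\tone\arr\ttwo}{\lone}{\dtwo}{\ltwo}$ follows. I do not anticipate a genuine obstacle: the whole proof is a routine unfolding, and the single point worth checking is that the argument $(\sem{\vlone},\vone)$ at which the difference is actually applied in Definition~\ref{def:dlr} lies in the domain $\sem{\tone}\prd\metdom{\tone}$ over which $\leq_{\tone\arr\ttwo}$ quantifies, so that the pointwise inequality really does apply at that argument.
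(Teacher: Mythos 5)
Your proof is correct and follows exactly the route the paper intends: the paper's proof is simply ``by induction on types,'' and your case analysis (base case discharged by the stated hypothesis, products componentwise, arrows via the pointwise reading of $\leq_{\tone\arr\ttwo}$ and the induction hypothesis at the codomain type) is the standard unfolding of that induction, matching the detailed argument the paper gives for the analogous Lemma~\ref{lem:nullupwardclosed}. No gaps.
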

  \begin{proof}
    By induction on types.
  \end{proof}
\subsection{A Fundamental Lemma}
Usually, the main result about any system of logical relations is the
so-called \emph{Fundamental Lemma}, which states that any typable term
is in relation \emph{with itself}. But how would the Fundamental Lemma
look like here? Should any term be at \emph{somehow minimal distance}
to itself, in the spirit of what happens, e.g. with metrics \cite{Pierce/DistanceMakesTypesGrowStronger/2010,GaboardiEtAl/POPL/2017}?
Actually, there is no hope to prove anything like that for
differential logical relations, as the following example shows.
\begin{example}
  Consider again the term $\lid=\abs{\vone}{\vone}$, which
  can be given type $\tone=\reals\arr\reals$ in the empty context.
  Please recall that $\metdom{\tone}=\rset\times\pirset\arr\pirset$.
  Could we prove that $\delta_\tone(\lid,0_\tone,\lid)$,
  where $0_\tone$ is the constant-$0$ function? The answer is negative:
  given two real numbers $\rone$ and $\rtwo$ at distance $\varepsilon$,
  the terms $\lid\rone$ and $\lid\rtwo$ are themselves $\varepsilon$ apart,
  thus at nonnull distance. The best one can say, then, is that
  $\delta_\tone(\lid,f,\lid)$, where $f(x,\varepsilon)=\varepsilon$.
\end{example}
As the previous example suggests, a term $\lone$ being at self-distance $d$ 
is a witness of $\lone$ being \emph{sensitive to changes} to the environment
according to $d$. Indeed, the only terms which are at self-distance $0$
are the constant functions. This makes the underlying theory more general
than the one of logical or metric relations, although the latter can be
proved to be captured by differential logical relations, as we will see
in the next section.

Coming back to the question with which we opened the section, 
we can formulate a suitable fundamental lemma for 
differential logical relations
 by stating that for any closed term 
$\lone$ of type $\tone$ there exists 
$d\in\metdom{\tone}$ such that $(\lone,d,\lone)\in\delta_{\tone}$. 
In order to prove such result, however, we need to prove something stronger, 
namely the extension of the above statement to arbitrary (and thus possibly open) 
terms. Doing so, requires to extend differential logical relations to 
arbitrary sequents $\Gamma \vdash \tone$.

Let us begin extending the maps $\sem{\cdot}$ and 
$\metdom{\cdot}$ to environments. 

\begin{definition}
Given an environment $\eone$, define: 
$$
\sem{\eone} = \prod_{(x:\tone) \in \eone} \sem{\tone};
\qquad
\metdom{\eone} = \prod_{(x:\tone) \in \eone} \metdom{\tone}.
$$
\end{definition}
An element in e.g. $\metdom{\eone}$ is thus a family 
$\alpha \in \prod_{(x:\tone) \in \eone}$, meaning that 
for any $(x: \tone) \in \eone$, $\alpha(x) \in \metdom{\tone}$. 
The syntactic counterparts of such families is given by families 
$\vlfone \in \prod_{(x: \tone) \in \eone} \cv{\tone}$. 
We refer to $\vlfone$ as a \emph{$\eone$-family of values}. Indeed,
such a $\eone$-family of values can naturally be seen as a substitution 
mapping each variable $(x: \tone) \in \Gamma$ to $\vlfone(x) \in\cv{\tone}$. 
As it is customary, for a term $\Gamma \vdash \lone: \tone$ 
we write $\lone \vlfone$ for the
closed term of type $\tone$ obtained applying the substitution 
$\vlfone$ to $\lone$. We denote by
$\cv{\eone}$ the set of all $\eone$-family of values.

We can now extend a differential logical relation $\{\delta_{\tone}\}_{\tone}$ 
to environments 
stipulating that the family $\{\delta_{\eone} \subseteq \cv{\eone} 
\times \metdom{\eone} \times \cv{\eone}\}_{\eone}$ 
is a differential logical relation if
$$
\delta_{\eone}(\vlfone, \alpha, \vlftwo) 
\iff \forall (x: \tone) \in \eone.\ 
\delta_{\tone}(\vlfone(x), \alpha(x), \vlftwo(x)).
$$

Next, we extend our framework to arbitrary sequents $\eone \vdash \tone$. 
First of all, we define:
$$
\sem{\eone \vdash \tone} = \sem{\tone}^{\sem{\eone}}
\qquad
\metdom{\eone \vdash \tone} = \metdom{\tone}^{\sem{\eone} \times \metdom{\eone}}.
$$

It is then natural to extend $\{\delta_{\tone}\}_{\tone}$ to arbitrary 
sequents by stipulating that $\{\delta_{\eone \vdash \tone} \subseteq 
\Lambda_{\eone \vdash \tone} \times \metdom{\eone \vdash \tone} \times 
\Lambda_{\eone \vdash \tone}\}_{\eone \vdash \tone}$, where 
$\Lambda_{\eone \vdash \tone}$ denotes the set of sequents typable 
within the sequent $\eone \vdash \tone$, is a differential logical relation 
if $\delta_{\eone \vdash \tone}(\lone, d, \ltwo)$ holds if and only if
 $$
  \diffmet{\eone}{\vlfone}{\alpha}{\vlftwo}
  \implies
  \diffmet{\tone}{\lone\vlfone}{\done(\sem{\vlfone},\alpha)}{\ltwo \vlftwo}.
$$
This definition as it is, however, does not work well, as it does not take into account 
possible alternations of the substitutions $\vlfone$ and 
$\vlftwo$. To solve this issue we introduce the following notation. 
Given a boolean-valued map $B \in \{0,1\}^{\eone}$, 
and two substitutions $\vlfone, \vlftwo \in \cv{\eone}$, define 
the substitution $\select{\vlfone}{\vlftwo}{B} \in \cv{\eone}$ as:
$$
  \select{\vlfone}{\vlftwo}{B}(x) =\left\{
  \begin{array}{ll}
    \vlfone(x) & \mbox{if $b(x) =0$}\\
    \vlftwo(x) & \mbox{if $b(x) =1$}.\\
  \end{array}\right.
  $$
We can now extend differential logical relations to arbitrary terms.

\begin{definition}
Given a differential logical relations $\{\delta_{\tone}\}_{\tone}$, 
define $\{\delta_{\eone \vdash \tone} \subseteq 
\Lambda_{\eone \vdash \tone} \times \metdom{\eone \vdash \tone} \times 
\Lambda_{\eone \vdash \tone}\}_{\eone \vdash \tone}$ stipulating 
that $\delta_{\eone \vdash \tone}(\lone, d, \ltwo)$ holds if and only if
 $$
  \diffmet{\eone}{\vlfone}{\alpha}{\vlftwo}
  \implies \forall B \in \{0,1\}^{\eone}
  \diffmet{\tone}{\lone\select{\vlfone}{\vlftwo}{B}}{\done(\sem{\vlfone},\alpha)}
  {\ltwo\select{\vlftwo}{\vlfone}{B}}.
$$
\end{definition}

Before proving the desired strengthening of the fundamental lemma, 
it is useful to observe the following useful result.

\begin{lemma}
\label{lemma:auxialiary-lemma-fundamental-lemma}
For all terms $\cdot \vdash \lone, \ltwo: \tone$, we have 
$\delta_{\tone}(\lone, d, \ltwo) \iff \delta_{\tone}(NF(\lone), d, NF(\ltwo))$.
\end{lemma}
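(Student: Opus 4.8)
The plan is to prove the equivalence by induction on the type $\tone$, exploiting the fact that $\delta_\tone$ inspects its term arguments only through evaluation. First I would record two preliminary facts. By Theorem~\ref{thm:norm} together with determinacy of the operational semantics (the uniqueness part in the proof of Corollary~\ref{cor:realnormal} is established by induction on derivations and therefore holds at \emph{every} type), every closed typable term $\lone$ evaluates to a unique value, which I write $\nf{\lone}$, extending the notation of Corollary~\ref{cor:realnormal} from $\reals$ to all types. Since a value evaluates only to itself, $\nf{\nf{\lone}} = \nf{\lone}$. The second, and key, fact is that evaluation \emph{factors through} normal forms in every elimination position:
$$
\nf{\prjleft\lone} = \nf{\prjleft\nf{\lone}}, \qquad \nf{\prjright\lone} = \nf{\prjright\nf{\lone}}, \qquad \nf{\lone\vlone} = \nf{\nf{\lone}\,\vlone}
$$
for every value $\vlone$. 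Each identity follows immediately from the shape of the rules in Figure~\ref{fig:evaluation}: to reduce $\prjleft\lone$ (resp.\ $\lone\vlone$) one first evaluates the head $\lone$ to a value, and replacing $\lone$ by the value $\nf{\lone}$ — which evaluates to itself — leaves the remaining derivation unchanged, so determinacy yields the same final value.

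With these in hand the induction is routine at the simple cases. For the base case $\tone = \reals$ we have $\diffmet{\reals}{\lone}{\rone}{\ltwo} \iff |\nf{\lone} - \nf{\ltwo}| \le \rone$, and since $\nf{\nf{\lone}} = \nf{\lone}$ the right-hand side is literally the condition defining $\diffmet{\reals}{\nf{\lone}}{\rone}{\nf{\ltwo}}$. For $\tone = \ttwo_1 \prd \ttwo_2$ I would unfold the product clause of Definition~\ref{def:dlr} on both sides, reducing the goal to $\diffmet{\ttwo_1}{\prjleft\lone}{\done_1}{\prjleft\ltwo} \iff \diffmet{\ttwo_1}{\prjleft\nf{\lone}}{\done_1}{\prjleft\nf{\ltwo}}$ and its right-projection analogue. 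Applying the induction hypothesis at $\ttwo_1$ to each side rewrites these as assertions about $\nf{\prjleft\lone}$ versus $\nf{\prjleft\nf{\lone}}$ (and likewise for $\ltwo$), which coincide by the factoring fact; hence the two sides agree.

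The arrow case $\tone = \ttwo_1 \arr \ttwo_2$ is where the only real care is needed, and I expect it to be the main obstacle. Here $\diffmet{\tone}{\lone}{\done}{\ltwo}$ is a statement universally quantified over $\vlone,\vltwo \in \cv{\ttwo_1}$ and $v \in \metdom{\ttwo_1}$ with $\diffmet{\ttwo_1}{\vlone}{v}{\vltwo}$, asserting $\diffmet{\ttwo_2}{\lone\vlone}{\done(\sem{\vlone},v)}{\ltwo\vltwo}$ together with the symmetric conjunct. The quantifier prefix and the difference $\done(\sem{\vlone},v)$ are identical for $\lone,\ltwo$ and for $\nf{\lone},\nf{\ltwo}$ (the difference $\done$ is held fixed), so it suffices to compare the body pointwise. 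By the induction hypothesis at $\ttwo_2$, $\diffmet{\ttwo_2}{\lone\vlone}{\done(\sem{\vlone},v)}{\ltwo\vltwo}$ is equivalent to the same statement with each term replaced by its normal form, involving $\nf{\lone\vlone}$ and $\nf{\ltwo\vltwo}$; by the factoring fact these equal $\nf{\nf{\lone}\,\vlone}$ and $\nf{\nf{\ltwo}\,\vltwo}$, which the induction hypothesis in turn identifies with $\diffmet{\ttwo_2}{\nf{\lone}\,\vlone}{\done(\sem{\vlone},v)}{\nf{\ltwo}\,\vltwo}$.

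The symmetric conjunct (using $\nf{\lone\vltwo} = \nf{\nf{\lone}\,\vltwo}$) is handled identically, so the whole body — and therefore the quantified statement — holds for $\lone,\ltwo$ exactly when it holds for $\nf{\lone},\nf{\ltwo}$. This closes the arrow case and completes the induction. The only points requiring vigilance are that $\nf{\cdot}$ is well defined at every type (which the determinacy argument guarantees) and that the induction hypothesis is invoked at the strictly smaller type $\ttwo_2$ applied to genuinely closed terms of that type, namely $\lone\vlone$, $\ltwo\vltwo$ and their normal-form counterparts.
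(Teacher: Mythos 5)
Your proof is correct and follows essentially the same route as the paper's: an induction on types whose engine is the identity $\nf{\nf{\lone}\vlone}=\nf{\lone\vlone}$ (together with its projection analogues and determinacy of evaluation). The only difference is cosmetic: the paper derives the right-to-left implication separately, via closure of $\delta_{\tone}$ under backward reduction plus normalisation, whereas you handle both directions uniformly inside the same induction.
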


\begin{proof}
First observe that if $\delta_{\tone}(\lone', d, \ltwo)$ and 
$\lone \to_{\beta} \lone'$ (where $\to_{\beta}$ is the obvious 
small-step semantics relation associated to $\Downarrow$), 
then $\delta_{\tone}(\lone, d, \ltwo)$ holds (a similar statement 
holds for $\ltwo$). By Theorem
 ~\ref{thm:norm} we thus infer the right-to-left 
implication. For the other implication, we proceed by induction 
on $\tone$ observing that $NF(NF(\lone)V) = NF(\lone V)$.
\end{proof}

\begin{lemma}
For any $\eone \vdash \lone: \tone$, there exists 
$d \in \metdom{\eone \vdash \tone}$ such that 
$\delta_{\eone \vdash \tone}(\tone, d, \tone)$ holds.
\end{lemma}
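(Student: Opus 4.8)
The plan is to prove the (evidently intended) self-relation $\delta_{\eone \vdash \tone}(\lone, d, \lone)$ --- with $\lone$, not $\tone$, in the two outer slots --- by induction on the derivation of $\eone \vdash \lone : \tone$, building the witness $d \in \metdom{\eone \vdash \tone}$ compositionally from the witnesses the induction hypothesis supplies for the immediate subterms. The main tools are the completeness of each $\metdom{\tone}$ (Proposition~\ref{prop:complete}), which yields a top element $\top_\tone$ and arbitrary suprema; the symmetry of $\delta$ (Lemma~\ref{lem:symmetric}); the upward-closure (monotonicity) lemma stated above; and the reduction-invariance of $\delta$ (Lemma~\ref{lemma:auxialiary-lemma-fundamental-lemma}), which lets me freely replace a term by the normal form of any of its applications. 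Unfolding the goal, I must exhibit $d$ such that, whenever $\diffmet{\eone}{\vlfone}{\alpha}{\vlftwo}$ and $B \in \{0,1\}^{\eone}$, one has $\diffmet{\tone}{\lone\select{\vlfone}{\vlftwo}{B}}{d(\sem{\vlfone},\alpha)}{\lone\select{\vlftwo}{\vlfone}{B}}$; the quantification over all $B$, together with Lemma~\ref{lem:symmetric}, is exactly what forces the construction to be symmetric in the two substitutions.

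The structural cases are routine. For a variable $\vone:\tone\in\eone$ I set $d(\sem{\vlfone},\alpha)=\alpha(\vone)$: under any $B$ the two sides are $\vlfone(\vone)$ and $\vlftwo(\vone)$ in one order or the other, and both $\diffmet{\tone}{\vlfone(\vone)}{\alpha(\vone)}{\vlftwo(\vone)}$ and its mirror (Lemma~\ref{lem:symmetric}) hold by definition of $\diffmet{\eone}{\vlfone}{\alpha}{\vlftwo}$. The closed constructs $\rone$, $\pfone{n}$, $\prjleft$, $\prjright$ are substitution-invariant, so $d$ ignores the environment and returns a fixed closed self-distance: $0$ for $\rone$; the map sending a pair-error $(v_1,v_2)$ to $v_1$ (resp.\ $v_2$) for $\prjleft$ (resp.\ $\prjright$); and, for $\pfone{n}$, the self-distance obtained at each input value and error as the supremum of the resulting output discrepancies, which exists by Proposition~\ref{prop:complete} even when $\pfone{n}$ is discontinuous. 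For $\pair{\lone}{\ltwo}$ I pair the two witnesses. For an abstraction $\abs{\vone}{\lone}$ I curry the body's witness $d_{\lone}$ into the bound variable, $d(\sem{\vlfone},\alpha)(\sem{\vlone},v)=d_{\lone}(\sem{\vlfone}[\vone\mapsto\sem{\vlone}],\alpha[\vone\mapsto v])$, and discharge the arrow clause by $\beta$-reducing and appealing to Lemma~\ref{lemma:auxialiary-lemma-fundamental-lemma} with the extended (still related) substitutions. For an application $\lone\ltwo$ I feed the witness of $\ltwo$ as the error and the denotation of $\ltwo$ as the point into the arrow-witness of $\lone$; the one subtlety is that under a mixed substitution $\select{\vlfone}{\vlftwo}{B}$ the denotation of $\ltwo$ need not equal $\sem{\ltwo\vlfone}$, so I over-approximate by setting $d(\sem{\vlfone},\alpha)=\sup_{\rho} d_{\lone}(\sem{\vlfone},\alpha)(\rho, d_{\ltwo}(\sem{\vlfone},\alpha))$ over all points $\rho$ (Proposition~\ref{prop:complete}) and close the gap with the upward-closure lemma.

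The genuinely delicate cases are the two control operators, and they are where I expect the main difficulty. For $\iflz{\lone}{\ltwo}:\reals\arr\tone$, given an input value $\rone$ with error $\varepsilon$ the companion input $\rtwo$ satisfies $|\rone-\rtwo|\leq\varepsilon$: when the whole $\varepsilon$-ball around $\rone$ stays on one side of $0$ both inputs select the same branch and the distance is that branch's self-distance ($d_{\lone}$ or $d_{\ltwo}$), but when the ball straddles $0$ the two inputs may select different branches, so I set the distance to $\top_\tone$. This is sound because $\diffmet{\reals}{\cdot}{\infty}{\cdot}$ holds vacuously and the upward-closure lemma lifts this to $\diffmet{\tone}{\lone'}{\top_\tone}{\ltwo'}$ for arbitrary $\lone',\ltwo'$; it also explains why self-distances are genuinely infinite near discontinuities. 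For $\iter{\lone}{\ltwo}:\reals\arr\tone$ I proceed by the nested induction on the number of iterations that already underlies Lemma~\ref{lemma:normalization}: when the error pins down a common iteration count $k$, the witness is the application-case construction iterated $k$ times on the witness of $\ltwo$ (using the same supremum-over-points over-approximation), and when the error leaves the count ambiguous I again retreat to $\top_\tone$; finiteness of $k$ is exactly Theorem~\ref{thm:norm}.

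The main obstacle is thus not the congruence bookkeeping --- which the lattice structure and Lemmas~\ref{lem:symmetric} and~\ref{lemma:auxialiary-lemma-fundamental-lemma} absorb uniformly --- but the interaction of the non-continuous primitives ($\iflz{}{}$, $\iter{}{}$, and arbitrary $\pfone{n}$) with the error component of the difference space: in every situation where the input error makes the control flow or the selected value ambiguous, the construction must fall back to $\top_\tone$, and the correctness of that fallback rests squarely on completeness (Proposition~\ref{prop:complete}) and the upward-closure lemma. Making the over-approximations in the application and iterator cases precise, while keeping $d$ a genuine function of $(\sem{\vlfone},\alpha)$ alone, is the step that will require the most care.
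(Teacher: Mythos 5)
Your construction coincides with the paper's own proof on every case the paper treats: $d(\sem{\vlfone},\alpha)=\alpha(\vone)$ for variables, $0$ for constants, a supremum of output discrepancies over the input ball for $\pfone{n}$ (the paper phrases this as the diameter of $f(A)$), currying for abstractions, and for applications the supremum $\sup_{V} d_1(\sem{\vlfone},\alpha)(\sem{V},d_2(\sem{\vlfone},\alpha))$ over input points, discharged by instantiating $V$ as a normal form via Lemma~\ref{lemma:auxialiary-lemma-fundamental-lemma} and upward closure. You additionally sketch the pair, projection, $\texttt{iflz}$ and $\texttt{iter}$ cases, which the paper's proof omits entirely; your top-element fallback for ambiguous control flow is sound (the fact that $\top_\tone$ relates any two terms is an easy induction on types, slightly stronger than the upward-closure lemma as literally stated, which needs a starting witness).
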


\begin{proof}
The proof is by induction on $\eone \vdash \lone: \tone$.
\begin{itemize}
  \item Suppose $\eone \vdash x: \tone$, meaning that $x \in \eone$. 
    We simply define $d: \sem{\eone} \times \metdom{\eone} \to \metdom{\tone}$ 
    as $d(g,\alpha) = \alpha(x)$. 
  \item Suppose $\eone \vdash r: \reals$. Define 
    $d: \sem{\eone} \times \metdom{\eone} \to \mathbb{R}_{\geq 0}^{\infty}$ 
    as $d(g,\alpha) = 0$.
  \item Suppose $\eone \vdash f: \reals^n \to \reals$. For simplicity, we show the 
    case for $n = 1$ (the case for $n > 1$ follows the same structure). We have to find 
    $d \in \metdom{\reals \to \reals}^{\sem{\eone} \times \metdom{\eone}}$ such 
    that $\delta_{\eone}(\vlfone, \alpha, \vlftwo)$ implies 
    $\delta_{\reals \to \reals}(f, d(\sem{\vlfone}, \alpha), f)$. 
    The latter means, that for all values $\cdot \vdash V, W \vdash \reals$  
    (for simplicity, we denote by $V$ both the numeral $V$ and the number 
    $\sem{V}$)
    and $e \in \mathbb{R}_{\geq 0}^{\infty}$, 
    $|V - W| \leq e$ implies 
    $|NF(f(V)) - NF(f(W))| \leq d(\sem{\vlfone}, \alpha)(V, e)$. 
    Define:
    $$
    A = \bigcup_{e' \in \mathbb{R}, e' \leq e} 
    [V - e', V + e'];
    \qquad
    d(\sem{\vlfone}, \alpha)(\sem{V}, e) = diam(f(A)),
    $$
    where the diameter $diam(X)$ of a set $X \subseteq \mathbb{R}$ is defined 
    as $\sup_{x,y \in X} | x - y|$. Notice that a set can have diameter 
    $\infty$. We conclude the wished thesis observing that 
    $|V - W| \leq e$ implies $V, W \in A$ 
    (recall that $V,W$ being real numbers, so is $|V - W|$), 
    and thus $f(V), f(W) \in f(A)$. As a consequence, $|f(V) - f(W)| \leq 
    diam(f(A))$, and thus we are done.
  \item Suppose \(\vcenter{\infer{\eone \vdash \lone\ltwo: \tone}
    {\eone \vdash \lone: \ttwo \to 
    \tone & \eone \vdash \ltwo: \ttwo} }\). By induction hypothesis 
     we have:
     \begin{enumerate}
      \item There exists 
        $d_1 \in \metdom{\ttwo \to \tone}^{\sem{\eone} \times \metdom{\eone}}$ 
        such that
        $\delta_{\eone}(\vlfone, \alpha, \vltwo)$
        implies for any $B \in \{0,1\}^{\eone}$,  
        for all $V,W \in CV(\ttwo)$, and for any $x \in \metdom{\ttwo}$:
        \[
        \delta_{\ttwo}(V,x,W) 
        \implies \delta_{\tone}(M\select{\vlfone}{\vlftwo}{B}V, 
        d_1(\sem{\vlfone}, \alpha)(\sem{V},x), M\select{\vlftwo}{\vlfone}{B}W).
        \]
      \item There exists $d_2 \in \metdom{\ttwo}^{\sem{\eone} \times \metdom{\eone}}$ 
        such that $\delta_{\eone}(\vlfone, \alpha, \vltwo)$
        implies for any $B \in \{0,1\}^{\eone}$:
        \[
        \delta_{\ttwo}(N\select{\vlfone}{\vlftwo}{B}, 
        d_2(\sem{\vlfone}, \alpha), N\select{\vlftwo}{\vlfone}{B}W).
        \]
      \end{enumerate}
   We define the wished $d$ as 
  $d(\sem{\vlfone}, \alpha) = 
  \sup_{V} d_1(\sem{\vlfone}, \alpha)(\sem{V},d_2(\sem{\vlfone}, \alpha))$. 
  We conclude the wished thesis instantiating 
  $V$ as $NF(N\select{\vlfone}{\vlftwo}{B})$ relying on 
  Lemma \ref{lemma:auxialiary-lemma-fundamental-lemma}.
\item Suppose \(\vcenter{\infer{\eone \vdash \lambda x.\lone: \tone \to \ttwo}
    {\eone, x:\tone \vdash \lone: \ttwo}}\). The thesis directly follows 
    from the induction hypothesis.
\end{itemize}

\end{proof}

As an immediate corollary we obtain the wished result.

\begin{theorem}[Fundamental Lemma, Version I]\label{thm:fundamental}
  For every $\tj{\cdot}{\lone}{\tone}$ there is a 
  $d\in\metdom{\tone}$ such that $(\lone,d,\lone)\in\delta_{\tone}$.
\end{theorem}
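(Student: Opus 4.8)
The plan is to read this off as the empty-environment instance of the strengthened fundamental lemma just established. First I would apply that lemma to the closed sequent $\emenv \vdash \tone$, obtaining some $\done \in \metdom{\emenv \vdash \tone}$ with $\delta_{\emenv \vdash \tone}(\lone, \done, \lone)$.

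Next I would unfold the definitions in the degenerate case $\eone = \emenv$. Since $\sem{\emenv}$ and $\metdom{\emenv}$ are empty products, each is a singleton; writing $*$ for the unique element of each, the function space $\metdom{\emenv \vdash \tone} = \metdom{\tone}^{\sem{\emenv} \times \metdom{\emenv}}$ is canonically isomorphic to $\metdom{\tone}$, so $\done$ is determined by its single value $\done(*,*) \in \metdom{\tone}$. Likewise $\cv{\emenv}$ is a singleton (the empty substitution) and $\{0,1\}^{\emenv}$ is a singleton (the empty selector $B$), so the only admissible families $\vlfone, \vlftwo$ coincide with the empty substitution, and applying $\select{\vlfone}{\vlftwo}{B}$ or $\select{\vlftwo}{\vlfone}{B}$ to the closed term $\lone$ leaves it unchanged.

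With these simplifications, the defining clause of $\delta_{\emenv \vdash \tone}(\lone, \done, \lone)$ has premise $\diffmet{\emenv}{\vlfone}{\alpha}{\vlftwo}$, which is vacuously true since there are no variables to quantify over, and a conclusion that, after collapsing the trivial substitutions, reads simply $\diffmet{\tone}{\lone}{\done(*,*)}{\lone}$. Setting $d \eqdef \done(*,*) \in \metdom{\tone}$ then yields exactly $(\lone, d, \lone) \in \delta_{\tone}$, as required.

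I do not anticipate a genuine obstacle here: all the real work is carried by the strengthened lemma, and the only point demanding care is checking that the empty-environment instances of $\sem{\cdot}$, $\metdom{\cdot}$, the selector $B$, and the substitution apparatus all degenerate as expected, so that the outer quantifications become vacuous and the surviving condition is precisely self-relatedness at type $\tone$.
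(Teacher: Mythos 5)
Your proof is correct and follows exactly the paper's route: the paper derives Theorem~\ref{thm:fundamental} as an immediate corollary of the strengthened lemma for arbitrary sequents, instantiated at the empty environment. Your careful unfolding of the degenerate empty-environment case (singleton products, trivial substitutions and selectors) is just the routine detail the paper leaves implicit.
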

But what do we gain from Theorem~\ref{thm:fundamental}? In the classic
theory of logical relations, the Fundamental Lemma has, as an easy corollary,
that logical relations are compatible: it suffices to invoke the theorem
with any context $C$ \emph{seen as a term} $C[\vone]$, such that
$\tj{\vone:\tone,\eone}{C[\vone]}{\ttwo}$. Thus, ultimately, logical
relations are proved to be a \emph{compositional} methodology for
program equivalence, in the following sense: if $\lone$ and $\ltwo$ are
equivalent, then $C[\lone]$ and $C[\ltwo]$ are equivalent, too.

In the realm of differential logical relations, the Fundamental Lemma
plays a similar role, although with a different, \emph{quantitative}
flavor: once $C$ has been proved sensitive to changes according to $\done$,
and $\vlone,\vltwo$ are proved to be at distance $\dtwo$, then, e.g.,
the impact of substituting $\vlone$ with $\vltwo$ in $C$ can be
measured by composing $\done$ and $\dtwo$ (and $\sem{\vlone}$),
i.e. by computing $\done(\sem{\vlone},\dtwo)$. Notice that the sensitivity
analysis on $C$ and the relational analysis on $\vlone$ and $\vltwo$
are decoupled. What the Fundamental Lemma tells you is that $\done$
and $\dtwo$ can \emph{always} be found.

\subsection{Our Running Example, Revisited}
It is now time to revisit the example we talked about in the
Introduction.  Consider the following two programs, both closed and of
type $\reals\arr\reals$:
$$
\lsin=\abs{\vone}{\sine(\vone)};\qquad\lid=\abs{\vone}{\vone}.
$$
Let us prove that $(\lsin,f,\lid)\in\delta_{\reals\arr\reals}$, where
$f(x,y)=y+\av{x-\sin x}$. Consider any pair of real numbers
$\rone,\rtwo\in\rset$ such that $|\rone-\rtwo|\leq\varepsilon$, where
$\varepsilon\in\pirset$.  We have that:
\begin{align*}
  \av{\sin r-s}&\;=\;\av{\sin r-r+r-s}\leq\av{\sin r-r}+\av{r-s}\leq\av{\sin r-r}+\varepsilon=f(r,\varepsilon)\\
  \av{\sin s-r}&\;=\;\av{\sin s-\sin r+\sin r-r}\leq\av{\sin s-\sin r}+\av{\sin r-r}\leq\av{s-r}+\av{\sin r-r}\\
  &\;\leq\;\varepsilon+\av{\sin r-r}=f(r,\varepsilon).
\end{align*}
The fact that $\av{\sin s-\sin r}\leq\av{s-r}$ is a consequence
of $\sin$ being $1$-Lipschitz continuous.
This can be proved, e.g., by way of Lagrange Theorem:
\begin{theorem}[Lagrange]
	Let $f:[a,b]\rightarrow\mathbb{R}$ a function continuous in $[a,b]$ and differentiable in $(a,b)$. Then there exists a point $c\in(a,b)$ such that
	$$
	\frac{f(b)-f(a)}{b-a}=f'(c).
	$$
\end{theorem}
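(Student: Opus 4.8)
The plan is to reduce the statement to Rolle's theorem by subtracting off the secant line through the endpoints. First I would recall Rolle's theorem: if $g\colon[a,b]\to\mathbb{R}$ is continuous on $[a,b]$, differentiable on $(a,b)$, and $g(a)=g(b)$, then $g'(c)=0$ for some $c\in(a,b)$. To establish Rolle I would invoke the Extreme Value Theorem: since $[a,b]$ is compact and $g$ is continuous, $g$ attains both a maximum and a minimum on $[a,b]$. If both are attained at the endpoints, then $g(a)=g(b)$ forces $g$ to be constant, whence $g'$ vanishes identically; otherwise at least one extremum is attained at an interior point $c\in(a,b)$, and Fermat's lemma (an interior local extremum of a differentiable function is a critical point) gives $g'(c)=0$.

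With Rolle in hand, I would apply it to the auxiliary function
$$
g(x)=f(x)-f(a)-\frac{f(b)-f(a)}{b-a}\,(x-a).
$$
A direct computation shows $g(a)=0=g(b)$; moreover $g$ inherits continuity on $[a,b]$ and differentiability on $(a,b)$ from $f$, since the subtracted term is affine. Rolle's theorem then yields $c\in(a,b)$ with $g'(c)=0$. Since
$$
g'(x)=f'(x)-\frac{f(b)-f(a)}{b-a},
$$
this is exactly the claimed identity $\frac{f(b)-f(a)}{b-a}=f'(c)$.

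The routine parts are the two evaluations $g(a)=g(b)=0$ and the differentiation of the affine correction term, which I would not belabour. The genuinely non-elementary ingredient, and the step I expect to be the real obstacle, is the Extreme Value Theorem underlying Rolle: it rests on the completeness of $\mathbb{R}$ and the compactness of $[a,b]$, and is where all the analytic content hides. Fermat's lemma, by contrast, is an easy argument on the sign of difference quotients from the left and right of an interior extremum. Thus the whole proof is essentially a clever application of Rolle's theorem, with the secant correction term chosen precisely so that the two endpoint values coincide.
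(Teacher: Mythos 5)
Your proof is correct: it is the classical derivation of the mean value theorem from Rolle's theorem via the auxiliary function $g(x)=f(x)-f(a)-\frac{f(b)-f(a)}{b-a}(x-a)$, and all the steps (the endpoint evaluations, the appeal to the Extreme Value Theorem and Fermat's lemma inside Rolle) are sound. The paper itself states Lagrange's theorem as a standard classical fact and gives no proof of it — it is only used as a black box to show that $\sin$ is $1$-Lipschitz — so there is no in-paper argument to compare against; your writeup supplies exactly the standard textbook proof one would expect.
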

The following is an easy corollary
\begin{proposition}
	For each $\vartheta,\varphi$, $|\sin(\vartheta)-\sin(\varphi)|\leq|\vartheta-\varphi|$.
\end{proposition}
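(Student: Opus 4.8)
The plan is to apply Lagrange's Theorem to the sine function on the interval determined by $\vartheta$ and $\varphi$. First I would dispose of the trivial case $\vartheta=\varphi$, for which the inequality reads $0\leq 0$. Assuming $\vartheta\neq\varphi$, I would set $a=\min(\vartheta,\varphi)$ and $b=\max(\vartheta,\varphi)$, so that $a<b$, and consider $\sin$ restricted to $[a,b]$.

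Next I would verify the hypotheses of Lagrange's Theorem: $\sin$ is continuous on $[a,b]$ and differentiable on $(a,b)$, these being standard facts of real analysis. The theorem then yields a point $c\in(a,b)$ such that
$$
\frac{\sin(b)-\sin(a)}{b-a}=\cos(c),
$$
since $\sin'=\cos$. Taking absolute values and using that $|\cos(c)|\leq 1$ for every real $c$, I would obtain
$$
\left|\frac{\sin(b)-\sin(a)}{b-a}\right|=|\cos(c)|\leq 1,
$$
whence $|\sin(b)-\sin(a)|\leq|b-a|=|\vartheta-\varphi|$, the last equality holding because $b-a=|\vartheta-\varphi|$. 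Finally, since $|\sin(b)-\sin(a)|=|\sin(\vartheta)-\sin(\varphi)|$ regardless of which of $\vartheta,\varphi$ is the larger, the desired bound follows.

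This argument is essentially immediate; there is no real obstacle, as the only nontrivial ingredient is the uniform bound $|\cos(c)|\leq 1$, which is elementary. The only point requiring a modicum of care is the bookkeeping of the ordering of $\vartheta$ and $\varphi$, ensuring that the absolute value on the left-hand side is symmetric in its two arguments so that the orientation chosen for $[a,b]$ does not matter.
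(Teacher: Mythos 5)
Your proof is correct and follows essentially the same route as the paper's: dispose of the trivial case $\vartheta=\varphi$, apply Lagrange's Theorem to $\sin$ on the interval between the two points, and bound the resulting $\cos(c)$ by $1$. The only cosmetic difference is that you order the endpoints via $\min$/$\max$ where the paper argues without loss of generality; the substance is identical.
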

\begin{proof}
  If $\vartheta=\varphi$ the result is trivial. Then without loss of
  generality we consider $\varphi\ <\vartheta$. We apply Lagrange
  Theorem to the interval $[\varphi,\vartheta]$. In particular, there
  exists a point $\psi\in(\varphi,\vartheta)$ such that
  $$
  \frac{\sin(\vartheta)-\sin(\varphi)}{\vartheta-\phi}=\sin'(\psi)=\cos(\psi).
  $$
  Since $-1\leq\cos(\psi)\leq 1$,
  $$
  \left|\frac{\sin(\vartheta)-\sin(\varphi)}{\vartheta-\phi}\right|\leq 1
  $$
  from which follows $|\sin(\vartheta)-\sin(\varphi)|\leq|\vartheta-\varphi|$.
\end{proof}  

Now, consider a context $\cone$ which makes use of either $\lsin$
or $\lid$ by feeding them with a value close to $0$, call it
$\theta$. Such a context could be, e.g., $\cone=(\abs{\vone}{\vone(\vone\theta)})\chole$.
$\cone$ can be seen as a term having type $\tone=(\reals\arr\reals)\arr\reals$.
A self-distance $\done$ for $\cone$ can thus be defined as an
element of
$$
\metdom{\tone}=\sem{\reals\arr\reals}\times\metdom{\reals\arr\reals}\arr\pirset.
$$
namely $F=\metalambda\pair{g}{h}.h(g(\theta),h(\theta,0))$. This allows
for compositional reasoning about program distances: the overall impact
of replacing $\lsin$ by $\lid$ can be evaluated by computing
$F(\sem{\lsin},f)$. Of course the context $\cone$ needs to be taken into
account, but \emph{once and for all}: the functional $F$
can be built without any access to either $\lsin$ or $\lid$.

\section{Logical and Metric Relations as DLRs}
The previous section should have convinced the reader about the
peculiar characteristics of differential logical relations compared to 
(standard) metric
and logical relations. In this section we show that despite the
apparent differences, logical and metric relations can somehow
be retrieved as specific kinds of program differences. This is, however,
bound to be nontrivial. The na\"ive attempt, namely seeing program
equivalence as being captured by \emph{minimal} distances in logical
relations, fails: the distance between a program \emph{and itself}
can be nonnull.

How should we proceed, then? Isolating those distances which witness
program equivalence is indeed possible, but requires a bit of an
effort.  In particular, the sets of those distances can be, again,
defined by induction on $\tone$. For every $\tone$, we give
$\nullset{\tone} \subseteq \metdom{\tone}$ by induction on the
structure of $\tone$:
$$
\nullset{\reals} = \{0\}\qquad\qquad
\nullset{\tone\prd\ttwo} = \nullset{\tone} \prd \nullset{\ttwo}
$$
$$
\nullset{\tone\arr\ttwo} = \{\fone\mid
    \forall \elone \in \sem{\tone}. \forall \vtwo \in \nullset{\tone}.
    \fone(\elone,\vtwo) \in \nullset{\ttwo}\}
$$
\begin{lemma}\label{lem:nullcomplete}
  $\nullset{\tone}$ is complete.
  That is, $\forall \setone \subseteq \nullset{\tone}. \sup{\setone} \in \nullset{\tone}$.
\end{lemma}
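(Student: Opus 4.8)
The plan is to prove the statement by structural induction on $\tone$, exploiting the explicit description of suprema in the complete lattice $\metdom{\tone}$ established in Proposition~\ref{prop:complete}. The crucial observation is that for both product and arrow types suprema in $\metdom{\tone}$ are computed \emph{pointwise}, so membership in $\nullset{\tone}$ — which for arrow types is itself a pointwise condition phrased in terms of $\nullset{\ttwo}$ — interacts smoothly with taking suprema.

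For the base case $\reals$ we have $\nullset{\reals} = \{0\}$, so any $\setone \subseteq \{0\}$ is either empty or $\{0\}$; in both cases $\sup\setone = 0 \in \nullset{\reals}$, recalling that $0$ is the bottom of $\rset_{\geq 0}^{\infty}$ and hence also $\sup\emptyset$. For a product type $\tone\prd\ttwo$, given $\setone \subseteq \nullset{\tone\prd\ttwo} = \nullset{\tone}\prd\nullset{\ttwo}$, Proposition~\ref{prop:complete} gives $\sup\setone = (\sup\boldsymbol{\prjleft}\setone, \sup\boldsymbol{\prjright}\setone)$. Since $\boldsymbol{\prjleft}\setone \subseteq \nullset{\tone}$ and $\boldsymbol{\prjright}\setone \subseteq \nullset{\ttwo}$, the induction hypotheses on $\tone$ and $\ttwo$ yield $\sup\boldsymbol{\prjleft}\setone \in \nullset{\tone}$ and $\sup\boldsymbol{\prjright}\setone \in \nullset{\ttwo}$, hence $\sup\setone \in \nullset{\tone\prd\ttwo}$.

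The interesting case is $\tone\arr\ttwo$. Given $\setone \subseteq \nullset{\tone\arr\ttwo}$, we must show $(\sup\setone)(\elone,\vtwo) \in \nullset{\ttwo}$ for every $\elone \in \sem{\tone}$ and every $\vtwo \in \nullset{\tone}$. Fixing such $\elone$ and $\vtwo$, Proposition~\ref{prop:complete} gives $(\sup\setone)(\elone,\vtwo) = \sup_{\fone \in \setone}\fone(\elone,\vtwo)$. Because each $\fone \in \setone$ belongs to $\nullset{\tone\arr\ttwo}$ and $\vtwo \in \nullset{\tone}$, the defining clause of $\nullset{\tone\arr\ttwo}$ ensures $\fone(\elone,\vtwo) \in \nullset{\ttwo}$, so $\{\fone(\elone,\vtwo)\}_{\fone \in \setone}$ is a subset of $\nullset{\ttwo}$. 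Applying the induction hypothesis on $\ttwo$ to this subset, its supremum — which is exactly $(\sup\setone)(\elone,\vtwo)$ — belongs to $\nullset{\ttwo}$, as required.

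The argument is essentially routine; the only point requiring care is the arrow case, where one must commute the supremum past the pointwise membership condition defining $\nullset{\tone\arr\ttwo}$. This is precisely where the pointwise formula for suprema from Proposition~\ref{prop:complete} is used, reducing the closure of $\nullset{\tone\arr\ttwo}$ under suprema to that of $\nullset{\ttwo}$, available by induction. No separate treatment of the empty subset is needed, since $\sup\emptyset$ is the bottom element and is handled uniformly by the same induction.
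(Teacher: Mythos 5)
Your proof is correct and follows essentially the same route as the paper's: structural induction on $\tone$, with the arrow and product cases reduced to the induction hypothesis via the pointwise description of suprema from Proposition~\ref{prop:complete}. The only (cosmetic) difference is that you quantify the first argument of an arrow-type difference over $\sem{\tone}$, consistently with the definition of $\nullset{\tone\arr\ttwo}$, whereas the paper's own proof writes $\cv{\tone}$ there; your version is if anything the more faithful one.
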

\begin{proof}
  By induction on types.
  \begin{varitemize}
    \item Case $\reals$.
      Obvious: $\sup{\emptyset} = \sup{\{0\}} = 0 \in \nullset{\reals}$.
    \item Case $\tone\arr\ttwo$.
      Let $\setone \subseteq \nullset{\tone\arr\ttwo}$.
      Note that
      $\forall\fone\in\setone. \forall \vone \in \cv{\tone}. \forall\done\in\nullset{\tone}.
      \fone(\vone,\done)\in\nullset{\ttwo}$.
      The function $s = \sup\setone \colon \cv{\tone}\prd\metdom{\tone}\arr\metdom{\ttwo}$
      is defined by
      $s(\vone,\done) = \sup\{\fone(\vone,\done) \,|\, \fone \in \setone\}$.
      We verify that
      $\forall\vone\in\cv{\tone}. \forall\done\in\nullset{\tone}.
      s(\vone,\done)\in\nullset{\ttwo}$.
      This is true since given $\vone\in\cv{\tone}$ and $\done\in\nullset{\tone}$,
      $\fone(\vone,\done)$ is in $\nullset{\ttwo}$, and thus
      $s(\vone,\done) = \sup\{\fone(\vone,\done) \,|\, \fone\in\setone\} \in \nullset{\ttwo}$
      by I.H.
    \item Let $\setone\subseteq\nullset{\tone\prd\ttwo}$. Then
      $\sup\setone = (\sup(\prjleft\setone), \sup(\prjright\setone)) \in \nullset{\tone}\prd\nullset{\ttwo} = \nullset{\tone\prd\ttwo}$
      by I.H.
  \end{varitemize}
\end{proof}

\begin{lemma}\label{lem:nullupwardclosed}
  Let $\done, \done' \in \nullset{\tone}$ and $\done \leq_{\tone} \done'$.
  Let $\{\delta_{\ttwo}\}_{\ttwo}$ be a differential logical relation.
  If $\diffmet{\tone}{\lone}{\done}{\ltwo}$
  then $\diffmet{\tone}{\lone}{\done'}{\ltwo}$.
\end{lemma}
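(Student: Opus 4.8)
The plan is to recognise that this statement is an immediate consequence of the general upward-closure lemma established just after Definition~\ref{def:dlr} (the third lemma there), and that, perhaps surprisingly, the hypotheses $\done, \done' \in \nullset{\tone}$ play no role in the argument. That lemma asserts that upward closure \emph{propagates through the type structure}: if $\diffmet{\reals}{\lone}{\done}{\ltwo}$ together with $\done \leq \dtwo$ implies $\diffmet{\reals}{\lone}{\dtwo}{\ltwo}$, then the analogous implication holds at every type $\tone$. So the whole proof reduces to discharging the base-case hypothesis and then instantiating $\dtwo := \done'$.

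Discharging the base case is routine: by definition $\diffmet{\reals}{\lone}{\done}{\ltwo}$ means $|\nf{\lone}-\nf{\ltwo}| \leq \done$, and if moreover $\done \leq \dtwo$ in $\pirset$, then $|\nf{\lone}-\nf{\ltwo}| \leq \dtwo$ by transitivity of $\leq$ on $\pirset$, i.e.\ $\diffmet{\reals}{\lone}{\dtwo}{\ltwo}$. Feeding this into the general lemma yields upward closure at the type $\tone$ in question, and taking $\dtwo = \done'$ (legitimate since $\done \leq_{\tone} \done'$) gives exactly $\diffmet{\tone}{\lone}{\done'}{\ltwo}$ from $\diffmet{\tone}{\lone}{\done}{\ltwo}$.

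I would explicitly flag why one should route through the \emph{unrestricted} lemma rather than attempt a direct induction keeping the constraint to $\nullset{\tone}$, as this is the genuinely subtle point. The base case $\reals$ would then be trivial, since $\nullset{\reals}=\{0\}$ forces $\done=\done'=0$. The obstacle is the arrow case: to lift $\diffmet{\ttwo}{\lone\vlone}{\done(\sem{\vlone},\vone)}{\ltwo\vltwo}$ to the larger distance $\done'(\sem{\vlone},\vone)$ via the inductive hypothesis, one would need $\done(\sem{\vlone},\vone) \in \nullset{\ttwo}$, and this is guaranteed only when the supplied $\vone \in \metdom{\tone}$ already lies in $\nullset{\tone}$---whereas the defining clause of $\delta_{\tone\arr\ttwo}$ quantifies over \emph{all} $\vone \in \metdom{\tone}$. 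Hence the null restriction cannot be threaded through the induction, and one is forced back to upward closure for arbitrary distances at the codomain $\ttwo$; this is precisely what the general lemma provides. The cleanest route is therefore to cite it, treating the $\nullset{\tone}$ hypotheses of the present statement as artefacts of how the result is later applied rather than ingredients of its proof.
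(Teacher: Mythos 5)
Your proof is correct, but it takes a genuinely different route from the paper's. The paper proves the lemma by a direct induction on types: the base case is dispatched by noting $\nullset{\reals}=\{0\}$, and the arrow case unfolds the definition of $\diffmet{\tone\arr\ttwo}{\lone}{\fone}{\ltwo}$ and concludes ``by $\fone(\vlone,\done)\leq\fone'(\vlone,\done)$ and I.H.'' You instead reduce everything to the unrestricted monotonicity lemma (the third of the basic lemmas following Definition~\ref{def:dlr}, stated there with proof deferred to ``induction on types''), discharge its base-case hypothesis by transitivity of $\leq$ on $\pirset$, and discard the $\nullset{\tone}$ assumptions as irrelevant. Your accompanying diagnosis is the valuable part of the comparison: read literally, the paper's inductive hypothesis is restricted to distances lying in $\nullset{\ttwo}$, yet the arrow case applies it to $\fone(\vlone,\done)$ where $\done$ ranges over \emph{all} of $\metdom{\tone}$, so membership of $\fone(\vlone,\done)$ in $\nullset{\ttwo}$ is not guaranteed by $\fone\in\nullset{\tone\arr\ttwo}$; the step only goes through if the induction is actually carried out for the unrestricted statement. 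In other words, the paper's arrow case is implicitly invoking exactly the general lemma you cite, and your presentation makes this explicit. What your route costs is reliance on a lemma whose proof the paper does not spell out, but that induction is routine once the null restriction is dropped, so overall your version is the cleaner and more honest of the two.
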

\begin{proof}
  By induction on types.
  \begin{varitemize}
    \item Case $\reals$.
      Obvious, since $\nullset{\reals} = \{0\}$.
    \item Case $\tone\arr\ttwo$.
      Let $\diffmet{\tone\arr\ttwo}{\lone}{\fone}{\ltwo}$ and $\fone \leq \fone'$.
      Then $\forall \vone \in \cv{\tone}. \forall \done \in \metdom{\tone}. \forall \vone \in \cv {\tone}.$
      \[
      \begin{array}{lll}
        \diffmet{\tone}{\vone}{\done}{\vtwo} &\Rightarrow& \diffmet{\ttwo}{\lone\vone}{\fone(\vone,\done)}{\ltwo\vtwo} \land \diffmet{\ttwo}{\lone\vtwo}{\fone(\vone,\done)}{\ltwo\vone}\\
        &\Rightarrow& \diffmet{\ttwo}{\lone\vone}{\fone'(\vone,\done)}{\ltwo\vtwo} \land \diffmet{\ttwo}{\lone\vtwo}{\fone'(\vone,\done)}{\ltwo\vone}\\
        && \text{by $\fone(\vone,\done) \leq \fone'(\vone,\done)$ and I.H.}
      \end{array}
      \]
      Thus $\diffmet{\tone\arr\ttwo}{\lone}{\fone'}{\ltwo}$.
    \item Case $\tone\prd\ttwo$.
      Let $\diffmet{\tone\prd\ttwo}{\lone}{(\done_1, \done_2)}{\ltwo}$ and
      $(\done_1, \done_2) \leq (\done'_1, \done'_2)$.
      By definition $\diffmet{\tone}{\prjleft\lone}{\done_1}{\prjleft\ltwo}$,
      $\diffmet{\ttwo}{\prjright\lone}{\done_2}{\prjright\ltwo}$,
      $\done_1\leq\done'_1$, and $\done_2\leq\done'_2$.
      Thus by I.H.\ $\diffmet{\tone}{\prjleft\lone}{\done'_1}{\prjleft\ltwo}$ and
      $\diffmet{\ttwo}{\prjright\lone}{\done'_2}{\prjright\ltwo}$ hold,
      hence $\diffmet{\tone\prd\ttwo}{\lone}{(\done'_1, \done'_2)}{\ltwo}$.
  \end{varitemize}
\end{proof}

Notice that $\nullset{\tone\arr\ttwo}$ is not defined as 
$\sem{\tone} \times \nullset{\tone} \arr \nullset{\ttwo}$ (doing so would violate 
$\nullset{\tone\arr\ttwo} \subseteq \metdom{\tone \arr \ttwo}$). 
The following requires some effort, and testifies that, indeed, program
equivalence in the sense of logical relations precisely corresponds to
being at a distance in $\nullset{\tone}$:
\begin{theorem}\label{thm:logdiff}
  Let $\{\mathcal{L}_{\tone}\}_{\tone}$ be a logical relation.  There
  exists a differential logical relation $\{\delta_{\tone}\}_\tone$
  satisfying $\logrel{\tone}{\lone}{\ltwo} \Longleftrightarrow \exists
  \done \in \nullset{\tone}. \diffmet{\tone}{\lone}{\done}{\ltwo}$.
\end{theorem}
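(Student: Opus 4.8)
The plan is to take $\{\delta_\tone\}_\tone$ to be the differential logical relation determined by Definition~\ref{def:dlr} (its clauses fix it uniquely by induction on types) and to prove the biconditional
$$
\logrel{\tone}{\lone}{\ltwo}\iff\exists\done\in\nullset{\tone}.\ \diffmet{\tone}{\lone}{\done}{\ltwo}
$$
by induction on $\tone$. At the base type $\reals$ we have $\nullset{\reals}=\{0\}$, so the right-hand side reduces to $\diffmet{\reals}{\lone}{0}{\ltwo}$, i.e.\ $\av{\nf{\lone}-\nf{\ltwo}}\le 0$, i.e.\ $\nf{\lone}=\nf{\ltwo}$; this matches (and in fact pins down) the base clause $\logrel{\reals}{\lone}{\ltwo}\iff\nf{\lone}=\nf{\ltwo}$ of the logical relation. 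The product case is immediate: since $\nullset{\tone\prd\ttwo}=\nullset{\tone}\prd\nullset{\ttwo}$ and $\diffmet{\tone\prd\ttwo}{\lone}{(\done_1,\done_2)}{\ltwo}$ unfolds to a conjunction on the two projections in which $\done_1$ and $\done_2$ are independent, the existential distributes over the conjunction, and the induction hypothesis applied to $\prjleft\lone,\prjleft\ltwo$ and to $\prjright\lone,\prjright\ltwo$ closes the case.

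Before the arrow case I record two auxiliary facts. First, a direct induction on $\tone$ shows that $\diffmet{\tone}{\lone}{\top_\tone}{\ltwo}$ holds for \emph{all} closed $\lone,\ltwo$ of type $\tone$, where $\top_\tone=\sup\metdom{\tone}$ is the top of the complete lattice of Proposition~\ref{prop:complete}. Second, symmetry of the logical relation at a given type comes for free from the induction hypothesis together with Lemma~\ref{lem:symmetric}:
$$
\logrel{\tone}{\vlone}{\vltwo}\iff\exists\done\in\nullset{\tone}.\ \diffmet{\tone}{\vlone}{\done}{\vltwo}\iff\exists\done\in\nullset{\tone}.\ \diffmet{\tone}{\vltwo}{\done}{\vlone}\iff\logrel{\tone}{\vltwo}{\vlone}.
$$
I will also lean on Lemma~\ref{lem:nullcomplete} (each $\nullset{\tone}$ is closed under suprema) and Lemma~\ref{lem:nullupwardclosed} (upward closure of $\delta$ along $\leq_\tone$ inside $\nullset{\tone}$).

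For $\tone\arr\ttwo$, the $(\Leftarrow)$ direction is a direct chase: from $\done\in\nullset{\tone\arr\ttwo}$ with $\diffmet{\tone\arr\ttwo}{\lone}{\done}{\ltwo}$ and from $\logrel{\tone}{\vlone}{\vltwo}$, the hypothesis yields $\vone\in\nullset{\tone}$ with $\diffmet{\tone}{\vlone}{\vone}{\vltwo}$, whence $\diffmet{\ttwo}{\lone\vlone}{\done(\sem{\vlone},\vone)}{\ltwo\vltwo}$, and $\done(\sem{\vlone},\vone)\in\nullset{\ttwo}$ because $\done\in\nullset{\tone\arr\ttwo}$; the induction hypothesis at $\ttwo$ then gives $\logrel{\ttwo}{\lone\vlone}{\ltwo\vltwo}$, i.e.\ $\logrel{\tone\arr\ttwo}{\lone}{\ltwo}$. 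The $(\Rightarrow)$ direction is the crux. Assuming $\logrel{\tone\arr\ttwo}{\lone}{\ltwo}$, I must manufacture a \emph{total} map $\done\colon\sem{\tone}\prd\metdom{\tone}\arr\metdom{\ttwo}$ that simultaneously lies in $\nullset{\tone\arr\ttwo}$ and witnesses $\diffmet{\tone\arr\ttwo}{\lone}{\done}{\ltwo}$. I define it by cases on the middle input: for $\vone\notin\nullset{\tone}$ put $\done(x,\vone)=\top_\ttwo$, which costs nothing for membership (the null condition constrains only arguments in $\nullset{\tone}$) and meets the relational obligations by the ``top always works'' fact; for $\vone\in\nullset{\tone}$ put
$$
\done(x,\vone)=\sup\bigl\{\,\dtwo\in\nullset{\ttwo}\ \bigm|\ \exists\vlone,\vltwo.\ \sem{\vlone}=x,\ \diffmet{\tone}{\vlone}{\vone}{\vltwo},\ \diffmet{\ttwo}{\lone\vlone}{\dtwo}{\ltwo\vltwo}\wedge\diffmet{\ttwo}{\lone\vltwo}{\dtwo}{\ltwo\vlone}\,\bigr\}.
$$
By Lemma~\ref{lem:nullcomplete} this supremum stays in $\nullset{\ttwo}$, so $\done\in\nullset{\tone\arr\ttwo}$. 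To see it witnesses the relation, fix any $\vlone,\vltwo,\vone$ with $\diffmet{\tone}{\vlone}{\vone}{\vltwo}$ and $\vone\in\nullset{\tone}$: the induction hypothesis at $\tone$ turns this into $\logrel{\tone}{\vlone}{\vltwo}$, and by the free symmetry also $\logrel{\tone}{\vltwo}{\vlone}$; applying $\logrel{\tone\arr\ttwo}{\lone}{\ltwo}$ to both pairs and the induction hypothesis at $\ttwo$ produces null witnesses for $\diffmet{\ttwo}{\lone\vlone}{\cdot}{\ltwo\vltwo}$ and $\diffmet{\ttwo}{\lone\vltwo}{\cdot}{\ltwo\vlone}$, whose join $\dtwo$ belongs to the set above; Lemma~\ref{lem:nullupwardclosed} then lifts both relations from $\dtwo$ to the larger $\done(\sem{\vlone},\vone)$.

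The main obstacle is exactly this construction of $\done$: it must be defined on the whole of $\sem{\tone}\prd\metdom{\tone}$, including inputs arising from no related pair of values, while jointly satisfying the pointwise relational obligations and the global constraint $\done\in\nullset{\tone\arr\ttwo}$. The device that reconciles these is the split into a ``don't care'' region (handled uniformly by $\top_\ttwo$, using that $\delta$ relates any two closed terms at the top distance) and the null region (handled by a supremum of per-pair witnesses). The genuinely semantic subtlety is that a single $x=\sem{\vlone}$ may be realised by many syntactic $\vlone$ paired with many $\vltwo$, which is precisely why a supremum rather than a single witness is forced; checking that this supremum remains in $\nullset{\ttwo}$ and still dominates every individual witness is where Lemmas~\ref{lem:nullcomplete} and~\ref{lem:nullupwardclosed} are indispensable.
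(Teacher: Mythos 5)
Your proof is correct and follows essentially the same route as the paper's: induction on types, with the $(\Rightarrow)$ direction at arrow type handled by constructing the witness as a supremum of distances drawn from $\nullset{\ttwo}$ and invoking Lemma~\ref{lem:nullcomplete} for membership and Lemma~\ref{lem:nullupwardclosed} to lift the relation to that supremum. The only noteworthy divergences are at the margins: the paper defines $\delta_{\reals}$ directly from the given $\mathcal{L}_{\reals}$ so that the argument covers an arbitrary logical relation, whereas you fix the canonical base clause and implicitly assume $\mathcal{L}_{\reals}$ is equality of normal forms; conversely, your explicit $\top_{\ttwo}$ treatment of the non-null region of the difference argument, and your inclusion of both orientations in the witness set, are if anything more careful than the paper's single supremum formula.
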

\begin{proof}
  By induction on types.
  \begin{varitemize}
    \item Case $\reals$.
      Define $\delta_{\reals}$ by
      $\diffmet{\reals}{\lone}{0}{\ltwo}$ 
      if and only if $\logrel{\tone}{\lone}{\ltwo}$ or $\logrel{\tone}{\ltwo}{\lone}$.
    \item Case $\tone\arr\ttwo$.
      
      \noindent $\underline{\Leftarrow}$.
      Assume that there exists $\fone \in \nullset{\tone\arr\ttwo}$ satisfying
      $\diffmet{\tone\arr\ttwo}{\lone}{\fone}{\ltwo}$.
      Then for all $\lone', \ltwo'$ of type $\tone$,
      \[
      \begin{array}{lrl}
        \logrel{\tone}{\lone'}{\ltwo'} & \stackrel{\text{I.H.}}{\Leftrightarrow} & \exists \done' \in \nullset{\tone}. \diffmet{\tone}{\lone'}{\done'}{\ltwo'}\\ 
        & \Rightarrow & \diffmet{\ttwo}{\lone\lone'}{\fone(\lone',\done')}{\ltwo\ltwo'}\\
        & \stackrel{\text{I.H.}}{\Leftrightarrow} & \logrel{\ttwo}{\lone\lone'}{\ltwo\ltwo'}.
      \end{array}
      \]
      Thus $\logrel{\tone\arr\ttwo}{\lone}{\ltwo}$ holds.

      \noindent $\underline{\Rightarrow}$.
      Assume $\logrel{\tone\arr\ttwo}{\lone}{\ltwo}$.
      Define a function $\fmn\colon\cv{\tone}\prd\metdom{\tone}\arr\metdom{\ttwo}$ by:
      \[
      \fmn(\lone',\done) =
      \sup \bigcup_{\ltwo'\colon\diffmet{\tone}{\lone'}{\done}{\ltwo'}}
      \{\done' \in \nullset{\ttwo}\midd \diffmet{\ttwo}{\lone\lone'}{\done'}{\ltwo\ltwo'}\}.
      \]
      We show that
      \begin{varitemize}
        \item{(I)} $\fmn \in \nullset{\tone\arr\ttwo}$
        \item{(II)} $\diffmet{\tone\arr\ttwo}{\lone}{\fmn}{\ltwo}$.
      \end{varitemize}
      (I) is immediate since the right-hand side of the definition of $\fmn(\lone',\done)$
      is always contained by $\nullset{\ttwo}$ by Lemma~\ref{lem:nullcomplete}.
      (II) is equivalent to the following by definition
      for each $\lone' \in \cv{\tone}$ and $\done \in \metdom{\tone}$:
      \[
      \forall \ltwo' \in \cv{\ttwo}. 
      \diffmet{\tone}{\lone'}{\done}{\ltwo'}
      \Rightarrow \diffmet{\ttwo}{\lone\lone'}{\fmn(\lone,\done)}{\ltwo\ltwo'}
      \land \diffmet{\ttwo}{\lone\ltwo'}{\fmn(\lone,\done)}{\ltwo\lone'}.
      \]
      Fix $\ltwo'$ arbitrarily.
      If $\diffmet{\tone}{\lone'}{\done}{\ltwo'}$ does not hold, the implication vacuously holds.
      Assume $\diffmet{\tone}{\lone'}{\done}{\ltwo'}$ holds.
      Then we have
      \[
      \begin{array}{lrl}
        \diffmet{\tone}{\lone'}{\done}{\ltwo'}
        & \stackrel{\text{I.H.}}{\Leftrightarrow} & \logrel{\tone}{\lone'}{\ltwo'}\\ 
        & \Rightarrow & \logrel{\ttwo}{\lone\lone'}{\ltwo\ltwo'}\\
        & \stackrel{\text{I.H.}}{\Leftrightarrow} & \exists\done' \in \nullset{\ttwo}. \diffmet{\ttwo}{\lone\lone'}{\done'}{\ltwo\ltwo'}.
      \end{array}
      \]
      By definition $\done' \leq_{\ttwo} \fmn(\lone',\done)$ for such $\done' \in \nullset{\ttwo}$.
      Thus $\diffmet{\ttwo}{\lone\lone'}{\fmn(\lone',\done)}{\ltwo\ltwo'}$ also holds
      by Lemma~\ref{lem:nullupwardclosed}.
      Similarly $\diffmet{\ttwo}{\lone\ltwo'}{\fmn(\lone',\done)}{\ltwo\lone'}$.
      Hence $\diffmet{\tone\arr\ttwo}{\lone}{\fmn}{\ltwo}$.

    \item Case $\tone\prd\ttwo$.
      Let $\logrel{\tone\prd\ttwo}{\lone}{\ltwo}$.
      By definition of logical relations,
      $\logrel{\tone}{\prjleft\lone}{\prjleft\ltwo}$
      and $\logrel{\ttwo}{\prjright\lone}{\prjright\ltwo}$ hold.
      By I.H.\ there exist $\done_1 \in \nullset{\tone}$ and $\done_2 \in \nullset{\ttwo}$
      that satisfy $\diffmet{\tone}{\prjleft\lone}{\done_1}{\prjleft\ltwo}$
      and $\diffmet{\ttwo}{\prjright\lone}{\done_2}{\prjright\ltwo}$;
      thus $\diffmet{\tone\prd\ttwo}{\lone}{(\done_1,\done_2)}{\ltwo}$ holds
      for $(\done_1,\done_2) \in \nullset{\tone\prd\ttwo}$.
      The other direction also holds by I.H.
  \end{varitemize}
\end{proof}

What if we want to generalise the argument above to metric relations,
as introduced, e.g., by Reed and
Pierce~\cite{Pierce/DistanceMakesTypesGrowStronger/2010}.  The set
$\nullset{\tone}$ becomes a set of distances parametrised by a single
real number:
$$
\metset{\reals}{\rone} = \{\rone\}\qquad\qquad
\metset{\tone\prd\ttwo}{\rone} = \metset{\tone}{\rone} \prd \metset{\ttwo}{\rone}
$$
$$
\metset{\tone\arr\ttwo}{\rone} = \{\fone\mid
    \forall \elone \in \sem{\tone}. \forall \vtwo \in \metset{\tone}{\rtwo}.
    \fone(\elone,\vtwo) \in \metset{\ttwo}{\rone+\rtwo}\}
$$

\section{Strengthening the Fundamental Theorem through Finite Distances}
Let us now ask ourselves the following question: given any term
$\lone\in\ct{\tone}$, what can we say about its sensitivity, i.e.,
about the values $\done\in\metdom{\tone}$ such that
$\delta_\tone(\lone,\done,\lone)$?  Two of the results we have proved
about $\STlamreal$ indeed give partial answers to the aforementioned
question. On the one hand, Theorem~\ref{thm:fundamental} states that
such a $\done$ can \emph{always} be found. On the other hand,
Theorem~\ref{thm:logdiff} tells us that such a $\done$ can be
taken in $\nullset{\tone}$. Both these answers are not particularly
informative, however. The mere existence of such a
$\done\in\metdom{\tone}$, for example, is trivial since $\done$ can
always be taken as $\dinf$, the maximal element of the underlying
quantale. The fact that such a $\done$ can be taken
from $\nullset{\tone}$ tells us that, e.g. when
$\tone=\ttwo\arr\tthree$, $\lone$ returns equivalent terms when fed
with equivalent arguments: there is no quantitative guarantee about the
behaviour of the term when fed with non-equivalent arguments.

Is this the best one can get about the sensitivity of $\STlamreal$ terms?
The absence of full recursion suggests that we could hope to prove that
infinite distances, although part of the underlying quantale, can in
fact be useless. In other words, we are implicitly suggesting that
self-distances could be elements of $\findom{\tone}\subset\metdom{\tone}$,
defined as follows:
$$
\findom{\reals}=\rset_{\geq 0}; \qquad\qquad \findom{\tone\prd\ttwo} = \findom{\tone}\prd\findom{\ttwo};
$$
$$
\findom{\tone\arr\ttwo}=
  \{\fone \in \metdom{\tone\arr\ttwo}
  \;\mid\; \forall x \in\sem{\tone}.\forall\elemone\in\findom{\tone}.
  \fone(x,\elemone) \in \findom{\ttwo}\}.\\
$$
\begin{wrapfigure}{R}{.28\textwidth}
  \fbox{
    \begin{minipage}{.26\textwidth}
      \centering
    \includegraphics[scale=1.0]{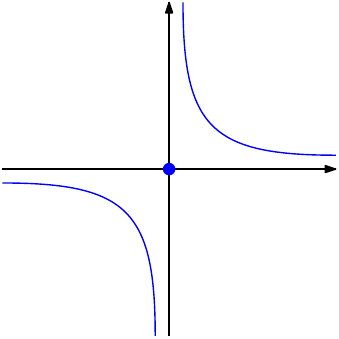}
  \end{minipage}}
  \caption{A total, but highly discontinuous, function.}\label{fig:noncontinuous}
\end{wrapfigure}
Please observe that $\findom{\tone}$ is in general a much larger set
of differences than $\bigcup_{\rone\in\pirset}\metset{\tone}{\rone}$: the
former equals the latter only when $\tone$ is $\reals$. Already when
$\tone$ is $\reals\arr\reals$, the former includes, say, functions
like $f(\rone,\varepsilon)=(\rone+\varepsilon)^2$, while the latter
does not.

Unfortunately, there are terms in $\STlamreal$ which cannot be proved
to be at self-distance in $\findom{\tone}$, and, surprisingly, this is
\emph{not} due to the higher-order features of $\STlamreal$, but to
$\{\rfuncs{n}\}_{n\in\nset}$ being arbitrary, and containing functions
which do not map finite distances to finite distances, like
$$
h(\rone)=\left\{
\begin{array}{ll}
  0               & \mbox{if $\rone=0$} \\
  \frac{1}{\rone} & \mbox{otherwise}
\end{array}
\right.
$$
(see Figure~\ref{fig:noncontinuous}). Is this phenomenon \emph{solely}
responsible for the necessity of finite self-distances in $\STlamreal$?
The answer is positive, and the rest of this section is devoted precisely
to formalising and proving the aforementioned conjecture.

First of all, we need to appropriately axiomatise the absence of unbounded
discontinuities from $\{\rfuncs{n}\}_{n\in\nset}$. A not-so-restrictive but
sufficient axiom turns out to be weak boundedness: a function
$\fone_n:\rset^n\arr\rset$ is said to be \emph{weakly bounded} if and only
if it maps bounded subsets of $\rset^n$ into bounded subsets of $\rset$.
As an example, the function $h$ above is \emph{not} weakly bounded, because
$h([-\varepsilon,\varepsilon])$ is
$$
\left(-\infty,-\frac{1}{\varepsilon}\right]\cup\{0\}\cup\left[\frac{1}{\varepsilon},\infty\right)
$$
which is unbounded for any as $\varepsilon>0$. Any term $\lone$ is said to
be weakly bounded iff any function symbol $\fone_n$ occurring in $\lone$ is
itself weakly bounded. Actually, this is precisely what one needs to get the
strengthening of the Fundamental Theorem we are looking for.
  \begin{lemma}\label{lem:evalpreserve}
    Let $\tj{\cdot}{\lone,\ltwo}{\tone}$ and $\eval{\lone}{\vlone}$.
    Then $\diffmetobsolete{\tone}{\lone}{\ltwo} = \diffmetobsolete{\tone}{\vlone}{\ltwo}$.
  \end{lemma}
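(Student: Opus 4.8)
The plan is to unfold what the claimed equality asserts and then argue by induction on the type. The equality $\diffmetobsolete{\tone}{\lone}{\ltwo}=\diffmetobsolete{\tone}{\vlone}{\ltwo}$ expresses that $\lone$ and its value $\vlone$ admit exactly the same witnessing differences against $\ltwo$; concretely I would prove that for every $\done\in\metdom{\tone}$,
$$
\diffmet{\tone}{\lone}{\done}{\ltwo}\iff\diffmet{\tone}{\vlone}{\done}{\ltwo}.
$$
The single ingredient driving the whole proof is the determinism of evaluation established in the proof of Corollary~\ref{cor:realnormal}: since $\eval{\lone}{\vlone}$, the value $\vlone$ is the unique value $\lone$ reduces to, and a value reduces only to itself. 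Hence $\lone$ and $\vlone$ produce the same result under every observation occurring in Definition~\ref{def:dlr}, and the equivalence can be pushed through the inductive structure of $\tone$.

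At the base type $\reals$ the value $\vlone$ is some real $\rone$, and determinism yields $\nf{\lone}=\rone=\nf{\vlone}$; thus $\diffmet{\reals}{\lone}{\rone'}{\ltwo}$ and $\diffmet{\reals}{\vlone}{\rone'}{\ltwo}$ both unfold to $\av{\nf{\lone}-\nf{\ltwo}}\leq\rone'$ and coincide. When $\tone=\ttwo\prd\tthree$, determinism forces $\vlone$ to be a pair value $\pair{\lthree}{\lfour}$, and the evaluation rules for $\prjleft$ and $\prjright$ make $\prjleft\lone$ evaluate to the same value as $\prjleft\vlone$, and $\prjright\lone$ to the same value as $\prjright\vlone$. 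The induction hypotheses at $\ttwo$ and $\tthree$ then transport each conjunct of the product clause from $\lone$ to $\vlone$, giving the desired equivalence.

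The arrow case $\tone=\ttwo\arr\tthree$ is where the argument really lives. Writing the two quantified input values as $\vltwo,\vltwo'\in\cv{\ttwo}$ and the input difference as $\vone$, Definition~\ref{def:dlr} inspects $\lone$ only through the applications $\lone\vltwo$ and $\lone\vltwo'$. From $\eval{\lone}{\vlone}$ and the application rules of Figure~\ref{fig:evaluation}, evaluating $\lone\vltwo$ must first reduce its function part to $\vlone$, so $\lone\vltwo$ and $\vlone\vltwo$ reduce to one and the same value of type $\tthree$ (and likewise for $\vltwo'$). By the induction hypothesis at the smaller type $\tthree$, applied to $\lone\vltwo$ and to $\vlone\vltwo$ through their common value, one obtains $\diffmet{\tthree}{\lone\vltwo}{\done(\sem{\vltwo},\vone)}{\ltwo\vltwo'}\iff\diffmet{\tthree}{\vlone\vltwo}{\done(\sem{\vltwo},\vone)}{\ltwo\vltwo'}$, and symmetrically for the second conjunct $\diffmet{\tthree}{\lone\vltwo'}{\done(\sem{\vltwo},\vone)}{\ltwo\vltwo}$; note that the difference argument $\done(\sem{\vltwo},\vone)$ is literally identical on both sides, as it does not mention $\lone$ at all. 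Conjoining these equivalences over all $\vltwo,\vltwo',\vone$ yields $\diffmet{\ttwo\arr\tthree}{\lone}{\done}{\ltwo}\iff\diffmet{\ttwo\arr\tthree}{\vlone}{\done}{\ltwo}$.

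The only step needing care is the lifting $\eval{\lone\vltwo}{\vltwo'}\iff\eval{\vlone\vltwo}{\vltwo'}$ used in the arrow case: this is precisely where determinism is indispensable, since it pins the first phase of evaluating the application to produce $\vlone$. The clean way to exploit it is to route both sides of the target equivalence through the common value of $\lone\vltwo$ and $\vlone\vltwo$ and invoke the induction hypothesis at $\tthree$, rather than attempting to compare $\lone\vltwo$ and $\vlone\vltwo$ directly. Everything else is a routine unfolding of Definition~\ref{def:dlr}; indeed the whole development mirrors Lemma~\ref{lemma:auxialiary-lemma-fundamental-lemma}, of which this lemma is a one-sided, value-level refinement.
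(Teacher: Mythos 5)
Your proof is correct and follows the same route as the paper, which simply states ``by induction on the type, with case analysis on the type constructor'' without giving details; your write-up supplies exactly the missing content, with the right key ingredients (determinism of evaluation from Corollary~\ref{cor:realnormal}, routing $\lone$ and $\vlone$ through their common value before invoking the induction hypothesis at the smaller type, and the observation that the difference argument $\done(\sem{\vlone},\vone)$ is unaffected). Your reading of the underspecified equality $\diffmetobsolete{\tone}{\lone}{\ltwo}=\diffmetobsolete{\tone}{\vlone}{\ltwo}$ as coincidence of the sets of witnessing differences is a reasonable and faithful interpretation.
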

  \begin{proof}
    By induction on type $\tone$; do case analysis for each type constructor.
  \end{proof}

  \begin{lemma}\label{lem:downwardclosed}
    Let $\tone$ be a type and $\elemone \in \findom{\tone}$.
    If $\elemone' \leq_{\tone} \elemone$ then $\elemone' \in \findom{\tone}$.
  \end{lemma}
  \begin{proof}
    By induction on type $\tone$. Straightforward.
  \end{proof}
\begin{theorem}[Fundamental Theorem, Version II]\label{thm:fundamental-II}
  For any weakly bounded term $\tj{\cdot}{\lone}{\tone}$, there is
  $d\in\findom{\tone}$ such that $(\lone,d,\lone)\in\delta_{\tone}$.
\end{theorem}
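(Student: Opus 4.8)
The plan is to strengthen the statement to open terms and reprove the Fundamental Lemma (Theorem~\ref{thm:fundamental}) while tracking finiteness: I would show by induction on the derivation of $\eone \vdash \lone : \tone$ that there is a self-distance $\done$ with $\delta_{\eone \vdash \tone}(\lone, \done, \lone)$ which additionally respects the finite difference spaces, in the sense that $\alpha \in \findom{\eone} \implies \done(\sem{\vlfone}, \alpha) \in \findom{\tone}$, where I set $\findom{\eone} = \prod_{(x:\ttwo)\in\eone}\findom{\ttwo}$. Theorem~\ref{thm:fundamental-II} is then the closed-term instance, since a closed term has $\alpha$ trivial.

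The base cases are immediate: the variable distance $\done(g,\alpha) = \alpha(x)$ returns $\alpha(x) \in \findom{\tone}$ by hypothesis on $\alpha$, and the constant-$0$ distance for real literals lies in $\findom{\reals} = \prset$. The function-symbol case is exactly where weak boundedness is consumed: the Version~I distance for $\pfone{1}$ is $\done(\sem{V},e) = \mathrm{diam}(\pfone{1}(A))$ with $A = [V-e, V+e]$, and when $e \in \findom{\reals} = \prset$ is finite, $A$ is a bounded subset of $\rset$, so weak boundedness of $\pfone{1}$ makes $\pfone{1}(A)$ bounded and its diameter finite, giving $\done \in \findom{\reals \arr \reals}$. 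Abstraction, pairing, projection and the conditional follow directly from the induction hypothesis, and iteration is treated by the same inner induction as in Version~I, each step preserving membership in $\findom{\cdot}$.

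The main obstacle is the application case. There the Version~I self-distance is $\done(\sem{\vlfone}, \alpha) = \sup_{V} \done_1(\sem{\vlfone}, \alpha)(\sem{V}, \done_2(\sem{\vlfone}, \alpha))$, a supremum over \emph{all} $V \in \cv{\ttwo}$, which may be $\infty$ even when $\done_1,\done_2$ return finite differences; so finiteness is not automatic. The fix is to observe that the relational condition is only ever instantiated at the \emph{reachable} arguments $V = \nf{\ltwo\select{\vlfone}{\vlftwo}{B}}$ with $\diffmet{\eone}{\vlfone}{\alpha}{\vlftwo}$, so the supremum may be restricted to this set without invalidating the construction (Lemma~\ref{lem:evalpreserve} lets me pass between a subterm and its normal form here, and Lemma~\ref{lem:downwardclosed} lets me shrink any over-approximation back into $\findom{\tone}$). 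It then remains to show this restricted supremum is finite, which I would factor through a separate \textbf{boundedness-preservation lemma}: a weakly bounded term maps ``bounded'' families of value-substitutions to ``bounded'' families of output values, under a type-indexed notion of boundedness (ordinary boundedness at $\reals$, componentwise at products, ``bounded-on-bounded'' at arrow types). Given this, finiteness of $\done_2$ confines the reachable $V$ to a bounded set, while the same lemma applied to $\lone$ ensures $\done_1(\sem{\vlfone},\alpha)$ is bounded over that set, so the supremum is finite.

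The delicate point, and the one I expect to cost the most work, is establishing this boundedness-preservation lemma at higher types and making it mesh with the precise ``diameter-of-image'' shape of the self-distances produced by the induction, so that boundedness propagates through application exactly where the supremum is taken; everything else is a routine adaptation of the proof of Theorem~\ref{thm:fundamental}.
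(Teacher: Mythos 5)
Your overall skeleton is the paper's: induction on the typing derivation of (possibly open) terms, tracking membership in $\findom{\cdot}$ throughout, with weak boundedness consumed exactly once, in the function-symbol case, to make the diameter of the image of a bounded interval finite. You also correctly locate the crux in the application case: the Version~I distance $\sup_V \done_1(\cdot)(\sem{V},\done_2(\cdot))$ ranges over all of $\cv{\ttwo}$, and pointwise finiteness of each $\done_1(\cdot)(\sem{V},\cdot)$ does not survive that supremum.

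However, the mechanism you propose for closing that gap --- a unary ``boundedness-preservation lemma'' stating that weakly bounded terms map bounded sets of value-substitutions to bounded sets of outputs --- does not do the job as stated, and this is a genuine gap rather than a routine verification. Two problems. First, the statement you actually need is not about the \emph{term} $\lone$ but about the \emph{distance} $\done_1$: you must bound $\done_1(\sem{\vlfone},\alpha)(\sem{V},\done_2(\sem{\vlfone},\alpha))$ uniformly as $V$ ranges over the reachable arguments, and no lemma about images of terms gives you that without threading a ``diameter-of-image'' shape invariant through every case of the induction --- which, as you concede, is exactly the part you have not worked out. Second, at higher argument types $\ttwo$ the premise ``finiteness of $\done_2$ confines the reachable $V$ to a bounded set'' is not meaningful as stated: finiteness of $\done_2\in\findom{\ttwo}$ constrains the reachable arguments \emph{relationally} (each is within DLR-difference $\done_2$ of $\nf{\ltwo\vlfone}$), and there is no independent set-theoretic notion of a ``bounded set of closures'' that this yields. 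The paper's resolution is to dispense with the unary detour entirely and strengthen the induction hypothesis itself: the self-distance produced for an arrow-typed open term is \emph{defined} as the supremum of output distances over all arguments within a given difference $\vtwo$ of a given value $\vone$, and the hypothesis asserts that this is finite whenever $\vtwo\in\findom{\cdot}$. In the application case one then restricts the supremum to arguments within the finite self-distance $s$ of $\ltwo$ from the actual argument (using Lemmas~\ref{lem:symmetric}, \ref{lem:evalpreserve} and \ref{lem:downwardclosed} exactly as you anticipate), and the restricted supremum is finite \emph{by the induction hypothesis for $\lone$ applied at that argument and the finite difference $s$} --- no auxiliary boundedness lemma is needed. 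If you replace your unary boundedness predicate by this relational one, your lemma collapses into the strengthened induction hypothesis and you have reconstructed the paper's proof.
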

\begin{proof}
  By induction on derivation of $\tj{\eone}{\lone}{\tone}$.
  In order not to make the proof too unreadable,
  we prove it for the case $n=1$, meaning $\eone = \vfour\colon\ttwo$.
  It is straightforward (but tedious) to extend it to an arbitrarily large environment.
  \begin{varitemize}
  \item Case
    \[
    \infer{\tj{\eone}{\vfour}{\ttwo}}{\vfour:\ttwo\in\eone}.
    \]
    Then
    \[
    \begin{array}{ll} 
      \odiffmetobsolete{\eone}{\ttwo}{\vfour}{\vfour}
      &= \diffmetobsolete{\ttwo\arr\ttwo}{\lambda\vfour.\vfour}{\lambda\vfour.\vfour}\\
      &= \metabs{\langle\vone^{\cv{\ttwo}},\vtwo^{\metdom{\ttwo}}\rangle}
                       {\sup_{\vthree\colon\diffmetobsolete{\ttwo}{\vthree}{\vone} \leq \vtwo}
                         \{\diffmetobsolete{\ttwo}{(\abs{\vfour}{\vfour}) \vone}{(\abs{\vfour}{\vfour}) \vthree},
                         \diffmetobsolete{\ttwo}{(\abs{\vfour}{\vfour}) \vthree}{(\abs{\vfour}{\vfour}) \vone}\}}\\
        &= \metabs{\langle\vone^{\cv{\ttwo}},\vtwo^{\metdom{\ttwo}}\rangle}
                       {\sup_{\vthree\colon\diffmetobsolete{\ttwo}{\vthree}{\vone} \leq \vtwo}
                         \{\diffmetobsolete{\ttwo}{ \vone}{ \vthree},
                   \diffmetobsolete{\ttwo}{ \vthree}{ \vone}\}}
                 \text{ by Lemma~\ref{lem:evalpreserve}}\\
        &= \metabs{\langle\vone^{\cv{\ttwo}},\vtwo^{\metdom{\ttwo}}\rangle}
                 {\sup_{\vthree\colon\diffmetobsolete{\ttwo}{\vthree}{\vone} \leq \vtwo}
                   \diffmetobsolete{\ttwo}{ \vthree}{ \vone}}
                 \text{ by Lemma~\ref{lem:symmetric}}\\
        &\leq\metabs{\langle\vone^{\cv{\ttwo}},\vtwo^{\metdom{\ttwo}}\rangle}{\vtwo}
                 \in \findom{\ttwo\arr\ttwo}.
      \end{array}
      \]
      Hence $\odiffmetobsolete{\eone}{\ttwo}{\vfour}{\vfour} \in \findom{\ttwo\arr\ttwo}$
      by Lemma~\ref{lem:downwardclosed}.

    \item Case          
      \[
      \infer{\tj{\eone}{\rone}{\reals}}{}.
      \]
      Easy.
    \item Case
      \[
      \infer{\tj{\eone}{\pfone{n}}{\reals^n\arr\reals}}{}.
      \]    
      Straightforward.
    \item Case
      \[
      \infer{\tj{\eone}{\abs{\vone}{\lone}}{\tone\arr\ttwo}}
            {\tj{\eone,\vone:\tone}{\lone}{\ttwo}}.
      \]
      By induction hypothesis.
    \item Case
      \[
      \infer{\tj{\eone}{\lone\ltwo}{\tone_{2}}}
            {\tj{\eone}{\lone}{\tone_{1}\arr\tone_{2}} & \tj{\eone}{\ltwo}{\tone_{1}}}.
      \]
      Induction hypothesis in this case can be written down as:
      \[
      \begin{array}{l}
        \forall \vone_{1} \in \cv{\ttwo}.
        \forall \vtwo_{1} \in \findom{\ttwo}.
        \forall \vone_{2} \in \cv{\tone_{1}}.
        \forall \vtwo_{2} \in \findom{\tone_{1}}.\\
        \sup_{\vthree_{1}\colon\diffmetobsolete{\ttwo}{\vthree_{1}}{\vone_{1}} \leq \vtwo_{1}}
        \sup_{\vthree_{2}\colon\diffmetobsolete{\ttwo}{\vthree_{2}}{\vone_{2}} \leq \vtwo_{2}}
        \{\diffmetobsolete{\tone_{2}}
                  {\sbs{\lone}{\vone_{1}}{\vfour}\vone_{2}}
                  {\sbs{\lone}{\vthree_{1}}{\vfour}\vthree_{2}},
        \diffmetobsolete{\tone_{2}}
                {\sbs{\lone}{\vone_{1}}{\vfour}\vthree_{2}}
                {\sbs{\lone}{\vthree_{1}}{\vfour}\vone_{2}}
        \}
        \in \findom{\tone_{2}}\\
        \cdots\text{(I)}
      \end{array}
      \]
      \[
      \begin{array}{l}
        \forall \vone_{1} \in \cv{\ttwo}.
        \forall \vtwo_{1} \in \findom{\ttwo}.\\
        \sup_{\vthree_{1}\colon\diffmetobsolete{\ttwo}{\vthree_{1}}{\vone_{1}} \leq \vtwo_{1}}
        \diffmetobsolete{\tone_{2}}
                {\sbs{\ltwo}{\vone_{1}}{\vfour}}
                {\sbs{\ltwo}{\vthree_{1}}{\vfour}}
        \in \findom{\tone_{1}}
      \end{array} \cdots\text{(II)}
      \]
      We show that
      $\odiffmetobsolete{\eone}{\tone_{2}}{\lone\ltwo}{\lone\ltwo} \in \findom{\tone_{2}}$,
      i.e.\ $\diffmetobsolete{\tone_{2}}{\abs{\vfour}{\lone\ltwo}}{\abs{\vfour}{\lone\ltwo}} \in \findom{\ttwo \arr \tone_{2}}$.
      By definition we must show that
      \[
      \begin{array}{l}
        \forall \vone_{1} \in \cv{\ttwo}.
        \forall \vtwo_{1} \in \findom{\ttwo}.\\
        \sup_{\vthree\colon\diffmetobsolete{\ttwo}{\vthree}{\vone_{1}} \leq \vtwo_{1}}
        \{\diffmetobsolete{\tone_{2}}{(\abs{\vfour}{\lone\ltwo})\vone_{1}}{(\abs{\vfour}{\lone\ltwo})\vthree},
        \diffmetobsolete{\tone_{2}}{(\abs{\vfour}{\lone\ltwo})\vthree}{(\abs{\vfour}{\lone\ltwo})\vone_{1}}
        \}\\
        = \sup_{\vthree\colon\diffmetobsolete{\ttwo}{\vthree}{\vone_{1}} \leq \vtwo_{1}}
        \diffmetobsolete{\tone_{2}}{(\abs{\vfour}{\lone\ltwo})\vthree}{(\abs{\vfour}{\lone\ltwo})\vone_{1}}
        \text{ by Lemma~\ref{lem:symmetric}}\\
        = \sup_{\vthree\colon\diffmetobsolete{\ttwo}{\vthree}{\vone_{1}} \leq \vtwo_{1}}
        \diffmetobsolete{\tone_{2}}
                {\sbs{\lone}{\vthree}{\vfour}\sbs{\ltwo}{\vthree}{\vfour}}
                {\sbs{\lone}{\vone_{1}}{\vfour}\sbs{\ltwo}{\vone_{1}}{\vfour}}
        \text{ by Lemma~\ref{lem:evalpreserve}}\\
        \in \findom{\tone_{2}}.
      \end{array}
      \]
      By definition of $\sup{}$ the following inequality holds
      for all $\vone_{1} \in \cv{\ttwo}$ and $\vtwo_{1} \in \findom{\ttwo}$:
      \[
      \begin{array}{l}
        \sup_{\vthree_{1}\colon\diffmetobsolete{\ttwo}{\vthree_{1}}{\vone_{1}} \leq \vtwo_{1}}
        \diffmetobsolete{\tone_{2}}
                {\sbs{\lone}{\vthree_{1}}{\vfour}\sbs{\ltwo}{\vthree_{1}}{\vfour}}
                {\sbs{\lone}{\vone_{1}}{\vfour}\sbs{\ltwo}{\vone_{1}}{\vfour}}\\
        \leq
        \sup_{\vthree_{1}\colon\diffmetobsolete{\ttwo}{\vthree_{1}}{\vone_{1}} \leq \vtwo_{1}}
        \{\diffmetobsolete{\tone_{2}}
                  {\sbs{\lone}{\vone_{1}}{\vfour}\sbs{\ltwo}{\vone_{1}}{\vfour}}
                  {\sbs{\lone}{\vthree_{1}}{\vfour}\sbs{\ltwo}{\vthree_{1}}{\vfour}},
          \diffmetobsolete{\tone_{2}}
                  {\sbs{\lone}{\vone_{1}}{\vfour}\sbs{\ltwo}{\vthree_{1}}{\vfour}}
                  {\sbs{\lone}{\vthree_{1}}{\vfour}\sbs{\ltwo}{\vone_{1}}{\vfour}}
        \}\\
        \leq
        \sup_{\vthree_{1}\colon\diffmetobsolete{\ttwo}{\vthree_{1}}{\vone_{1}} \leq \vtwo_{1}}
        \sup_{\vthree_{2}\colon\diffmetobsolete{\ttwo}{\vthree_{1}}{\sbs{\ltwo}{\vone_{1}}{\vfour}} \leq s}\\
        \quad \{\diffmetobsolete{\tone_{2}}
                  {\sbs{\lone}{\vone_{1}}{\vfour}\sbs{\ltwo}{\vone_{1}}{\vfour}}
                  {\sbs{\lone}{\vthree_{1}}{\vfour}\vthree_{2}},
          \diffmetobsolete{\tone_{2}}
                  {\sbs{\lone}{\vone_{1}}{\vfour}\vthree_{2}}
                  {\sbs{\lone}{\vthree_{1}}{\vfour}\sbs{\ltwo}{\vone_{1}}{\vfour}}
        \},
      \end{array}
      \]
      where
      $s = \sup_{\vthree_{1}\colon\diffmetobsolete{\ttwo}{\vthree_{1}}{\vone_{1}} \leq \vtwo_{1}}
      \diffmetobsolete{\tone_{2}}
              {\sbs{\ltwo}{\vone_{1}}{\vfour}}
              {\sbs{\ltwo}{\vthree}{\vfour}}.$

      Thus by Lemma~\ref{lem:downwardclosed} it suffices show that
      \[
      \begin{array}{l}
      \sup_{\vthree_{1}\colon\diffmetobsolete{\ttwo}{\vthree_{1}}{\vone_{1}} \leq \vtwo_{1}}
      \,\sup_{\vthree_{2}\colon\diffmetobsolete{\ttwo}{\vthree_{1}}{\sbs{\ltwo}{\vone_{1}}{\vfour}} \leq s}\\
      \{\diffmetobsolete{\tone_{2}}
                {\sbs{\lone}{\vone_{1}}{\vfour}\sbs{\ltwo}{\vone_{1}}{\vfour}}
                {\sbs{\lone}{\vthree_{1}}{\vfour}\vthree_{2}},
        \diffmetobsolete{\tone_{2}}
                {\sbs{\lone}{\vone_{1}}{\vfour}\vthree_{2}}
                {\sbs{\lone}{\vthree_{1}}{\vfour}\sbs{\ltwo}{\vone_{1}}{\vfour}}\}
     \in \findom{\tone_{2}}.
     \end{array}
     \]
     By I.H.\ (I), this holds if $\sbs{\ltwo}{\vone_{1}}{\vfour} \in \cv{\tone_{2}}$
     and $s = \sup_{\vthree_{1}\colon\diffmetobsolete{\ttwo}{\vthree_{1}}{\vone_{1}} \leq \vtwo_{1}}
     \diffmetobsolete{\tone_{2}}{\sbs{\ltwo}{\vone_{1}}{\vfour}}{\sbs{\ltwo}{\vthree}{\vfour}}
     \in \findom{\tone_{2}}$;
     the latter holds by I.H.\ (II).
    \item Case
      \[
      \infer{\tj{\eone}{\pair{\lone}{\ltwo}}{\tone\prd\ttwo}}
            {\tj{\eone}{\lone}{\tone} & \tj{\eone}{\ltwo}{\ttwo}}.
      \]
      Straightforward using induction hypothesis.
    \item Case
      \[
      \infer{\tj{\eone}{\prjleft}{\tone\prd\ttwo\arr\tone}}{}.
      \]
      Straightforward.
    \item Case
      \[
      \infer{\tj{\eone}{\prjright}{\tone\prd\ttwo\arr\ttwo}}{}.
      \]      
      Straightforward, similar to the case of $\prjleft$.
    \end{varitemize}
    
  \end{proof}

The reader may have wondered about how restrictive a condition weak
boundedness really is. In particular, whether it corresponds to some
form of continuity. In fact, the introduced condition only rules out
unbounded discontinuities. In other words, weak boundedness can be
equivalently defined by imposing local boundedness \emph{at any
  point} in the domain $\rset$. This is weaker than asking
for boundedness, which requires the existence of a global bound.
\section{A Categorical Perspective}\label{sect:categorical}
Up to now, differential logical relations have been treated very concretely,
without looking at them through the lens of category theory. This is
in contrast to, e.g., the treatment of metric relations from~\cite{GaboardiEtAl/POPL/2017},
in which soundness of metric relations for \FUZZ\ is obtained as a
byproduct of a proof of symmetric monoidal closedness for the category
$\mathbf{MET}$ of pseudometric spaces and Lipschitz functions.

But what could take the place of pseudometric spaces in a categorical
framework capturing differential logical relations? The notion of
a metric needs to be relaxed along at least two axes. On the one
hand, the codomain of the ``metric'' $\delta$ is not necessarily
the set of real numbers, but a more general structure, namely
a quantale. On the other, as we already noticed, it is not necessarily
true that equality implies indistancy, but rather than indistancy
implies inequality. What comes out of these observations is,
quite naturally, the notion of a generalized metric domain, itself
a generalisation of partial metrics~\cite{DBLP:journals/tamm/BukatinKMP09}. 
The rest of this section
is devoted to proving that the category of generalised metric domains
is indeed cartesian closed, thus forming a model of simply typed
$\lambda$-calculi.

Formally, given a quantale $\qtlone = (Q, \le_Q, 0_{Q}, \mult{Q})$\footnote{
When unambiguous, we will omit subscripts in $\le_Q$, $0_{Q}$, and $\mult{Q}$.}, a 
\emph{generalised metric domain} on $\qtlone$ is
a pair $(\setone,\delta_\setone)$, where $\setone$ is a set and
$\delta_\setone$ is a subset of $\setone\times\qtlone\times\setone$ satisfying
some axioms akin to those of a metric domain:
\begin{align*}
  \diffmet{\setone}{\elone}{0_{Q}}{\eltwo}
  &\Rightarrow\elone=\eltwo; 
  && 
  \mbox{(Indistancy Implies Equality)}\\
  \diffmet{\setone}{\elone}{\done}{\eltwo}
  &\Rightarrow\diffmet{\setone}{\eltwo}{\done}{\elone};
  && \mbox{(Symmetry)}\\
  \diffmet{\setone}{\elone}{\done}{\eltwo}
  \wedge\diffmet{\setone}{\eltwo}{\dtwo}{\eltwo}
  \wedge\diffmet{\setone}{\eltwo}{\dthree}{\elthree}
  &\Rightarrow\diffmet{\setone}{\elone}{\done \mult{} \dtwo \mult{} \dthree}{\elthree}. 
  && \mbox{(Triangularity)}
\end{align*}
Please observe that $\delta_\setone$ is a \emph{relation} rather than a function. Moreover,
the first axiom is dual to the one typically found in, say, pseudometrics. The third
axiom, instead, resembles the usual triangle inequality for pseudometrics, 
but with the crucial difference that since objects
can have non-null self-distance, such 
a distance has to be taken into account. Requiring equality to imply indistancy 
(and thus $\diffmet{\setone}{\elone}{0_{Q}}{\eltwo}
\Leftrightarrow\elone=\eltwo$), we see that \mbox{(Triangularity)} gives exactly the usual
triangle inequality (properly generalised to quantale and relations 
\cite{Hoffman-Seal-Tholem/monoidal-topology/2014,Lawvere/GeneralizedMetricSpaces/1973}).

In this section we show that generalised metric domains form a
cartesian closed category, unlike that of metric spaces (which is
known to be non-cartesian closed).  As a consequence, we obtain a firm
categorical basis of differential logical relations.  The category of
generalised metric domain, denoted by $\catgmd$.
\begin{definition}
  The category $\catgmd$ has the following data.
  \begin{varitemize}
  \item An object $\objone$ is a triple $(A, \qtlone, \delta)$
    where $\qtlone$ is a quantale and 
    $(A, \delta)$ is a generalized metric domain on $\qtlone$.
  \item An arrow $(A, \qtlone, \delta) \to (B, \qtltwo, \rho)$
    is a pair $(\arrone,\qarrone)$ consisting of a function $\arrone\colon A \to B$ and another function $\qarrone\colon Q \times A \to S$ satisfying
    $\forall a,a' \in A. \forall q \in Q.
    \delta(a,q,a') \Rightarrow \rho(\arrone(a), \qarrone(q,a), \arrone(a'))$ and $\rho(\arrone(a), \qarrone(q,a'), \arrone(a'))$.
  \end{varitemize}
\end{definition}
We can indeed give $\catgmd$ the structure of a category. In fact, 
the identity on the object $\objone = (A, \qtlone, \delta)$ in $\catgmd$
is given by $(\id_{\objone}, \id_{\objone}')$
where $\id_{\objone}\colon A \to A$ is the set-theoretic identity on $A$
and $\id_{\objone}'\colon Q \times A \to Q$ is defined by
$\id_{\objone}'(q,a) = q$.
The composition of two arrows
$(\arrone,\qarrone)\colon (A, \qtlone, \delta) \to (B, \qtltwo, \rho)$
and $(\arrtwo,\qarrtwo)\colon (B, \qtltwo, \rho) \to (C, \qtlthree, \nu)$
is the pair $(\arrthree, \qarrthree)$ where
$\arrthree\colon A \to C$ is given by the function composition $\arrtwo \circ \arrone \colon A \to C$ and $\arrthree\colon Q \times A \to T$ is given by $\qarrthree(q,a) = \qarrtwo(\qarrone(q,a),\arrone(a))$. Straightforward calculations show that 
composition is associative, and that the identity arrow behaves as its neutral 
element.

\begin{lemma}
  \label{lemma:GMD-category}
  $\catgmd$ is a category.
\end{lemma}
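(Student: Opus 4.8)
The plan is to check the four defining conditions of a category for the data already exhibited above: that the identity pair is a genuine arrow, that the composite of two arrows is again an arrow, and that composition is associative and unital. On the underlying function components an arrow $\triobjone \to \triobjtwo$ is just a set map $\arrone\colon \objone \to \objtwo$, and composition and identity on these components are ordinary function composition and the set-theoretic identity, which are associative and unital on the nose; hence all the genuine content lies in the quantale components $\qarrone\colon \qsetone \times \objone \to \qsettwo$ and in checking that arrows are closed under the two operations. Note that none of the generalized-metric-domain axioms (Symmetry, Triangularity, Indistancy) are needed here: the categorical structure depends only on the implication-preservation condition defining arrows.

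That the identity $(\id_{\objone}, \id_{\objone}')$ is an arrow is immediate: since $\id_{\objone}'(q,a) = q = \id_{\objone}'(q,a')$ and $\id_{\objone}$ is the identity map, both instances demanded by the arrow condition collapse to the hypothesis $\delta(a,q,a')$ itself. The central step is closure under composition. Let $(\arrone,\qarrone)\colon \triobjone \to \triobjtwo$ and $(\arrtwo,\qarrtwo)\colon \triobjtwo \to \triobjthree$, with composite $(\arrthree,\qarrthree)$ as defined above, and assume $\delta(a,q,a')$. The arrow condition for $(\arrone,\qarrone)$ supplies \emph{two} facts, namely $\rho(\arrone(a), \qarrone(q,a), \arrone(a'))$ and $\rho(\arrone(a), \qarrone(q,a'), \arrone(a'))$. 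I would feed the first into the first clause of the arrow condition for $(\arrtwo,\qarrtwo)$ (the clause evaluating $\qarrtwo$ at the value $\arrone(a)$), obtaining exactly $\nu(\arrthree(a), \qarrthree(q,a), \arrthree(a'))$, and feed the second into the second clause (evaluating at $\arrone(a')$), obtaining $\nu(\arrthree(a), \qarrthree(q,a'), \arrthree(a'))$. These are precisely the two instances required of the composite.

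The one place demanding care --- and what I expect to be the main, if modest, obstacle --- is this pairing of facts with clauses. The quantale component $\qarrthree(q,a) = \qarrtwo(\qarrone(q,a), \arrone(a))$ reads the \emph{same} index $a$ in both slots, so the fact $\rho(\arrone(a),\qarrone(q,a),\arrone(a'))$ must be combined with the clause of $(\arrtwo,\qarrtwo)$ that evaluates $\qarrtwo$ at the value $\arrone(a)$; pairing it instead with the other clause would produce the spurious cross term $\qarrtwo(\qarrone(q,a), \arrone(a'))$, matching neither required instance. Once each fact is routed through the clause whose value slot it matches, both conditions for $(\arrthree,\qarrthree)$ fall out.

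Finally, associativity and the unit laws are verifications on the quantale components only, the function parts being settled already. For associativity, take three composable arrows with function parts $f_1, f_2, f_3$ and quantale parts $\zeta_1, \zeta_2, \zeta_3$; unfolding the composition formula shows that either bracketing sends $(q,a)$ to the single nested element $\zeta_3(\zeta_2(\zeta_1(q,a), f_1(a)), f_2(f_1(a)))$, so the two composites coincide verbatim. For the units, substituting $\id_{\objone}'(q,a) = q$ into the composition formula collapses the composite's quantale component back to $\qarrone(q,a)$ on either side, recovering $(\arrone,\qarrone)$. These are exactly the routine calculations anticipated above, and no further subtlety arises.
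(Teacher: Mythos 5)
Your proposal is correct and follows essentially the same route as the paper: check that the identity pair and the composite pair satisfy the arrow condition, with the associativity and unit laws reducing to direct unfolding of the composition formula on the quantale components. Your explicit care in routing each of the two facts supplied by the first arrow through the matching clause of the second arrow's condition is exactly the (only) nontrivial step, which the paper handles the same way but states more tersely ("similarly for the symmetric part").
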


\begin{proof}
  The identity on the object $\objone = (A, \qtlone, \delta)$ in $\catgmd$
  is given by $(\id_{\objone}, \id_{\objone}')$
  where $\id_{\objone}\colon A \to A$ is the set-theoretic ideneity on $A$
  and $\id_{\objone}'\colon Q \times A \to Q$ is defined by
  $\id_{\objone}'(q,a) = q$.
  The composition of two arrows
  $(\arrone,\qarrone)\colon (A, \qtlone, \delta) \to (B, \qtltwo, \rho)$
  and $(\arrtwo,\qarrtwo)\colon (B, \qtltwo, \rho) \to (C, \qtlthree, \nu)$
  is the pair $(\arrthree, \qarrthree)$ where
  $\arrthree\colon A \to C$ is given by the function composition $\arrtwo \circ \arrone \colon A \to C$ and $\arrthree\colon Q \times A \to T$ is given by $\qarrthree(q,a) = \qarrtwo(\qarrone(q,a),\arrone(a))$.
  The fact that every identity arrow and every composition are again arrows in $\catgmd$ can be checked straightforwardly:
  $\id_{\objone}$ clearly satisfies
  $\relone(a,q,a') \Rightarrow \relone(\id(a),\id'(q,a),\id(a'))$
  and $\relone(\id(a),\id'(q,a'),id(a'))$
  since the right-hand side is by definition $\relone(a,q,a)$.
  We have to check that $(\arrthree, \qarrthree)$ satisfies
  $\relone(a,q,a') \Rightarrow \relthree(\arrthree(a), \qarrthree(q,a), \arrthree(a'))$ (and its symmetric part).
  Indeed, we have $\relone(\arrone(a),\qarrone(q,a), \arrone(a'))$
  since $(\arrone, \qarrone)$ is an arrow in $\catgmd$;
  and $\relthree(\arrtwo(\arrone(a)), \qarrtwo(\qarrone(q,a),\arrone(a)), \arrtwo(\arrone(a')))$, which is equivalent to $\relthree(\arrthree(a), \qarrthree(q,a), \arrthree(a'))$ by definition, holds since $(\arrtwo, \qarrtwo)$ is an arrow in $\catgmd$. (Similarly for the symmetric part.)
\end{proof}

The construction of (finite) products in $\catgmd$ is mostly straightforward:
\begin{lemma}
  \label{lemma:GMD-products}
  The category $\catgmd$ has a terminal object and binary products.
\end{lemma}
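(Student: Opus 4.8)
The plan is to construct both universal objects by hand, letting the two pieces of data carried by a $\catgmd$-object---its underlying set and its underlying quantale---be built separately and componentwise, so that every verification splits into two independent checks on the factors.

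For the terminal object I would take $\mathbf{1}$ to have underlying set $\{*\}$, underlying quantale the one-element quantale, and distance relation the full relation $\{(*,*,*)\}$; the three generalised-metric-domain axioms then hold trivially. Given any object $\objone = (A,\qtlone,\delta)$, an arrow $\objone \to \mathbf{1}$ is a pair $(\arrone,\qarrone)$ with $\arrone\colon A \to \{*\}$ and $\qarrone\colon Q \times A \to \{*\}$. Since both codomains are singletons, $\arrone$ and $\qarrone$ are forced to be the unique constant maps, and the defining implication of an arrow holds because its conclusion is always the single true triple $(*,*,*)$. Hence the arrow exists and is unique, so $\mathbf{1}$ is terminal.

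For the binary product of $\objone = (A,\qtlone,\delta)$ and $\objtwo = (B,\qtltwo,\rho)$, with $\qtlone = (Q,\le,0_Q,*)$ and $\qtltwo = (S,\le,0_S,*)$, I would take underlying set $A \times B$, underlying quantale the product quantale on $Q \times S$ (componentwise order, suprema, unit $(0_Q,0_S)$ and multiplication), and distance relation $\omega$ defined by $\omega((a,b),(q,s),(a',b')) \Leftrightarrow \delta(a,q,a') \wedge \rho(b,s,b')$. The projections are $\prjleft = (p_1,\zeta_1)$ and $\prjright = (p_2,\zeta_2)$, where $p_1,p_2$ are the set-theoretic projections and $\zeta_1((q,s),(a,b)) = q$, $\zeta_2((q,s),(a,b)) = s$. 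I would first check that $Q \times S$ is a quantale (completeness of the product lattice and distributivity of $*$ over arbitrary joins both hold coordinatewise), then that $\omega$ satisfies indistancy, symmetry and triangularity---each obtained by applying the corresponding axiom of $\delta$ and of $\rho$ in its own coordinate---and finally that the projections are arrows, which is immediate since each simply reads off one coordinate of $\omega$.

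For the universal property, given arrows $(\arrone,\qarrone)\colon \objthree \to \objone$ and $(\arrtwo,\qarrtwo)\colon \objthree \to \objtwo$ with $\objthree = (C,\qtlthree,\nu)$, I would define the mediating arrow $(\arrthree,\qarrthree)$ by $\arrthree(c) = (\arrone(c),\arrtwo(c))$ and $\qarrthree(t,c) = (\qarrone(t,c),\qarrtwo(t,c))$. It is an arrow because $\nu(c,t,c')$ yields, through the two given arrows, both $\delta(\arrone(c),\qarrone(t,c),\arrone(c'))$ and $\rho(\arrtwo(c),\qarrtwo(t,c),\arrtwo(c'))$ (and their variants with $t,c'$), which is exactly $\omega(\arrthree(c),\qarrthree(t,c),\arrthree(c'))$. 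Commutation and uniqueness then follow by unwinding the composition rule of $\catgmd$: its set component reproduces $\arrone,\arrtwo$ and forces $\arrthree = \langle \arrone,\arrtwo\rangle$, while its quantale component gives $\zeta_1(\qarrthree(t,c),\arrthree(c)) = \qarrone(t,c)$ since $\zeta_1$ merely selects the first coordinate, verifying $\prjleft \after (\arrthree,\qarrthree) = (\arrone,\qarrone)$ and, read backwards, pinning down $\qarrthree$ uniquely (symmetrically for $\prjright$). I expect the only real difficulty to be organisational: the delicate points are that the product quantale genuinely distributes multiplication over \emph{arbitrary} joins and that the non-idempotent self-distance middle term in triangularity be tracked through both coordinates, while on the categorical side one must confirm that the two-part arrow condition and the nonstandard composition of $\catgmd$ cooperate correctly in the commutation and uniqueness arguments.
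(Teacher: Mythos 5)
Your construction is correct and coincides with the paper's: the same singleton object with the one-element quantale as terminal object, and the same componentwise product of sets, quantales, and distance relations, with the same projections and pairing of arrows. If anything, you are more thorough than the paper, which omits the explicit verification of the quantale axioms, the generalised-metric-domain axioms for the product relation, and the uniqueness of the mediating arrow.
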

\begin{proof}
  The terminal object $\catgmd$ is defined as 
  $(\{*\}, \mathbb{O}, \delta_0)$, where $\mathbb{O}$ is the 
  one-element quantale $\{0\}$, and $\delta_0 = \{(*, 0, *)\}$. 
  Clearly, $(\{*\}, \delta_0)$ is a generaliszed metric domain on 
  $\mathbb{O}$ which behaves as a terminal object.
  Next, we show that $\catgmd$ has binary products.
  Given two objects $\objone$ and $\objtwo$,
  their binary product $\objone\times\objtwo$ is given by a triple
  $(A \times B, \qtlone\times\qtltwo, \delta\times\rho)$.
  The projection from $\objone\times\objtwo$ to $\objone$ is the pair $(\prjleft, \prjleft')$ given by $\prjleft(a,b) = a$ and $\prjleft'((q,s),(a,b)) = q$ (similarly for $\prjright$).

  Given objects $\objone, \objtwo, \objthree$
  and arrows $(\arrone,\qarrone)\colon\objthree\to\objone, (\arrtwo,\qarrtwo)\colon\objthree\to\objtwo$,
  the mediating arrow $\objthree\to\objone\times\objtwo$ is given by a
  pair $(\arrthree, \qarrthree)$ where
  $\arrthree\colon\setthree\to\setone\times\settwo$ is a function
  $\arrthree(c) = (\arrone(c),\arrtwo(c))$
  and
  $\qarrthree\colon\qsetthree\times\setthree\to\qsetone\times\qsettwo$ is a function
  $\qarrthree(t,c) = (\qarrone(t,c),\qarrtwo(t,c))$.
  This satisfies
  $(\prjleft, \prjleft')\after(\arrthree, \qarrthree) = (\arrone, \qarrone)$:
  on the first component,
  $(\prjleft\after\arrthree)(c)
  = \prjleft(\arrtwo(c), \arrthree(c))
  = \arrtwo(c)$.
  On the second,
  $(\prjleft'\after\qarrthree)(t,c)
  = \prjleft'(\qarrthree(t,c), \arrthree(c))
  = \prjleft'((\qarrone(t,c), \qarrtwo(t,c)) , \arrthree(c))
  = \qarrone(t,c)$
  (similarly $(\prjright, \prjright')\after(\arrthree, \qarrthree) = (\arrtwo, \qarrtwo)$ holds).
\end{proof}

\begin{lemma}
  \label{lemma:GMD-exponenetials}
  The category $\catgmd$ has exponential objects.
\end{lemma}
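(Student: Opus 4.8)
The plan is to mirror, at the categorical level, the arrow clause of the difference space $\metdom{\tone\arr\ttwo}$ and of Definition~\ref{def:dlr}. Given objects $\objone=(\setone,\qtlone,\relone)$ and $\objtwo=(\settwo,\qtltwo,\reltwo)$ with quantale carriers $\qsetone,\qsettwo$, I would take the exponential $\objtwo^{\objone}$ to have carrier the full function space $\settwo^{\setone}$, quantale the pointwise power quantale $\qtltwo^{\setone\times\qsetone}$ of all maps $\phi\colon\setone\times\qsetone\to\qsettwo$ (with order, arbitrary joins, unit $\mathbf 0$ and multiplication $\mult{}$ all computed argumentwise in $\qtltwo$; this is again a quantale, since lattice completeness and distributivity of $\mult{}$ over joins, guaranteed by Proposition~\ref{prop:quantale}, are inherited argumentwise), and relation
\[
\sigma(\arrone,\phi,\arrtwo)\ \defarr\ \forall\,\elone,\elone'\in\setone,\ \forall\,q\in\qsetone.\ \relone(\elone,q,\elone')\Rightarrow \reltwo(\arrone\,\elone,\phi(\elone,q),\arrtwo\,\elone')\wedge\reltwo(\arrone\,\elone',\phi(\elone,q),\arrtwo\,\elone).
\]
The two conjuncts are exactly the defining clause of the arrow case in Definition~\ref{def:dlr}, transported to the abstract setting.

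Next I would verify that $(\settwo^{\setone},\sigma)$ is a generalised metric domain over $\qtltwo^{\setone\times\qsetone}$. Symmetry is immediate: exchanging $\arrone$ and $\arrtwo$ in $\sigma$ swaps the two conjuncts up to the symmetry of $\reltwo$ (cf.\ Lemma~\ref{lem:symmetric}). For indistancy-implies-equality, assume $\sigma(\arrone,\mathbf 0,\arrtwo)$ and instantiate at a self-distance $\relone(\elone,q,\elone)$ of each point $\elone$ — available because every point of the GMDs at issue has a self-distance, exactly as the Fundamental Lemma (Theorem~\ref{thm:fundamental}) supplies in the term model — obtaining $\reltwo(\arrone\,\elone,\mathbf 0(\elone,q),\arrtwo\,\elone)$ with $\mathbf 0(\elone,q)=0$, whence $\arrone\,\elone=\arrtwo\,\elone$ by indistancy in $\objtwo$; as $\elone$ is arbitrary, $\arrone=\arrtwo$.

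Then I would supply the universal data. Evaluation $\evalarr\colon\objtwo^{\objone}\times\objone\to\objtwo$ is the pair $(\mathrm{ev},\mathrm{ev}')$ with $\mathrm{ev}(\arrone,\elone)=\arrone\,\elone$ and $\mathrm{ev}'((\phi,q),(\arrone,\elone))=\phi(\elone,q)$. For an arrow $(\arrthree,\qarrthree)\colon\objthree\times\objone\to\objtwo$ its transpose $(\widehat{\arrthree},\widehat{\qarrthree})\colon\objthree\to\objtwo^{\objone}$ is given by $\widehat{\arrthree}(c)=\arrthree(c,\cdot)$ and $\widehat{\qarrthree}(t,c)=\big((\elone,q)\mapsto\qarrthree((t,q),(c,\elone))\big)$. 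It then remains to check that (a) the transpose is a $\catgmd$-arrow, i.e.\ it respects $\relthree$ and $\sigma$; (b) $\evalarr\after(\widehat{\arrthree}\times\id_{\objone})=(\arrthree,\qarrthree)$, which — unfolding the product of Lemma~\ref{lemma:GMD-products} and the definitions of $\mathrm{ev},\mathrm{ev}'$ — reduces to the evident componentwise identities; and (c) uniqueness, since agreement with $\evalarr$ pins down both components of any transpose. Transposition is then a bijection, natural in $\objthree$, yielding $\catgmd(\objthree\times\objone,\objtwo)\cong\catgmd(\objthree,\objtwo^{\objone})$; combined with Lemmas~\ref{lemma:GMD-category} and~\ref{lemma:GMD-products} this makes $\catgmd$ cartesian closed.

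The main obstacle is (Triangularity) for $\sigma$, which also governs the well-definedness check (a). Unfolding $\sigma(\arrone,\phi,\arrtwo)\wedge\sigma(\arrtwo,\psi,\arrtwo)\wedge\sigma(\arrtwo,\chi,\arrthree)$ at a witness $\relone(\elone,q,\elone')$ produces only facts relating a value at $\elone$ to a value at $\elone'$, whereas the (Triangularity) axiom of $\objtwo$ consumes a genuine self-distance of the midpoint bearing the multiplier $\psi(\elone,q)$; since the power quantale is argumentwise, that multiplier is pinned to the single pair $(\elone,q)$, so the required self-distance is not literally present among the hypotheses and must be manufactured from $\sigma(\arrtwo,\psi,\arrtwo)$ together with the self-distances of $\objone$. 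I expect to discharge this by exploiting the distributivity of $\mult{}$ over joins (Proposition~\ref{prop:quantale}) together with the monotonicity of $\sigma$ in its multiplier (the analogue of the upward closure used in Lemma~\ref{lem:nullupwardclosed}), which together let the true self-distances of $\arrtwo$ be reconciled with the argumentwise product and the chain $\arrone\,\elone\to\arrtwo\,\elone'\to\arrtwo\,\elone'\to\arrthree\,\elone'$ be formed via (Triangularity) of $\reltwo$. Getting this interaction right — rather than the surrounding bookkeeping — is where I expect the real work to lie, and once the pointwise chaining of $\reltwo$ is in place it simultaneously settles the triangle law for $\objtwo^{\objone}$ and the morphism condition for the transpose.
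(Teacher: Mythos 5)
Your construction coincides with the paper's: the same function-space carrier, the same power quantale indexed by points and quantale elements of the domain object, the same DLR-style relation with the two ``crossed'' conjuncts, and the same evaluation and currying maps; your verification of $\evalarr\after(\widehat{\arrtwo}\times\id)=\arrtwo$ is the componentwise computation the paper carries out. On the universal-property side you are therefore fully aligned with the paper's proof.

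The place where your plan does not close is exactly the place you flag yourself: showing that the exponential carrier equipped with $\sigma$ is a generalised metric domain. The paper merely asserts this (``the relation $\relthree^{\reltwo}$ is indeed a differential logical relation'') without checking the three axioms, so you are attempting strictly more than the paper proves --- but your intended discharge of (Indistancy) and (Triangularity) rests on two properties that are not among the GMD axioms: that every point of the domain object carries \emph{some} self-distance, and that the relation is upward closed in its quantale coordinate. The first is supplied by Theorem~\ref{thm:fundamental} only for the syntactic term model, not for an arbitrary object of $\catgmd$; if $\relone$ is empty (which satisfies all three axioms vacuously), then $\sigma(\arrone,0,\arrtwo)$ holds for $\arrone\neq\arrtwo$ and (Indistancy) fails outright. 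The second is the content of Lemma~\ref{lem:nullupwardclosed}, which is a fact about the concrete syntactic relations under an explicit hypothesis at base type, not a consequence of Proposition~\ref{prop:quantale} or of the quantale structure. Moreover, even granting both, the self-distance of the midpoint $\arrtwo\,\elone'$ that you manufacture carries the multiplier $\psi(\elone',q')$ for whatever witness $q'$ of a self-distance of $\elone'$ you pick, and likewise the last leg into $\arrthree\,\elone'$ carries $\chi(\elone',q')$, whereas the target multiplier is $\phi(\elone,q)\mult{}\psi(\elone,q)\mult{}\chi(\elone,q)$ evaluated at the original pair $(\elone,q)$; distributivity of $\mult{}$ over joins gives no way to reconcile values of $\psi$ and $\chi$ at unrelated arguments. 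So either (Triangularity) for $\sigma$ needs a genuinely different argument, or the notion of object has to be strengthened (e.g.\ by requiring reflexive self-distances and upward closure); neither your proposal nor the paper supplies this step.
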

\begin{proof}
  The exponential object $\objthree^{\objtwo}$ is given by a triple
  $(\setthree^{\settwo}, \qtlthree^{\qtltwo\times\settwo}, \relthree^{\reltwo})$
  where $\setthree^{\settwo}$ is the function space $\{f \,|\, f\colon \settwo \to \setthree\}$,
  $\qtlthree^{\qtltwo\times\settwo}$ is the exponential quantale,
  and $\relthree^{\reltwo}$ is a ternary relation over
  $\setthree^{\settwo} \times T^{S \times \settwo}\times \setthree^{\settwo}$
  defined by:
  if $\reltwo(b,s,b')$ then $\relthree(f(b), d(s,b), f'(b'))$
  and $\relthree(\arrone(b), d(s,b'), \qarrone(b'))$.
  The relation $\relthree^{\reltwo}$ is indeed a differential logical relation.

  Then given three objects $\objone, \objtwo, \objthree$ and
  an arrow $\arrtwo \colon \objone \times \objtwo \to \objthree$,
  the abstraction $\lambda \arrtwo \colon \objone \to \objthree^\objtwo$ of $\arrtwo$
  is given by a pair $(\lambda \arrtwo, \lambda \qarrtwo)$ where
  $\lambda \arrtwo\colon\setone \to \setthree^\settwo$ is defined by
  $\lambda \arrtwo(a)(b) = \arrtwo(a,b)$ and
  $\lambda \qarrtwo\colon \qsetone \times \setone \to \qsetthree^{\qsettwo\times\settwo}$ is defined by
  $\lambda \qarrtwo(q,a)(s,b) = \qarrtwo((q,s),(a,b))$.

  Last, $\evalarr \colon \objthree^\objtwo\times\objtwo\to\objthree$
  is given by a pair $(\evalarr,\evalarr')$ where
  $\evalarr\colon\setthree^\settwo\times\settwo\to\setthree$ is defined by
  $\evalarr(f,b) = f(b)$
  and $\evalarr'\colon(\qsetthree^{\qsettwo\times\settwo}\times\qsettwo)\times(\setthree^\settwo\times\settwo)\to\qsetthree$
  is defined by 
  $\evalarr'((d,s),(f,b)) = d(s,b)$.

  The fact that these arrows makes the diagram
  \[
  \xymatrix{
    \objone\times\objtwo \ar@{.>}[d]_{\lambda \arrtwo \times \id} \ar[dr]^{\arrtwo}&
    \\
    \objthree^{\objtwo}\times\objtwo \ar[r]^{\evalarr} & \objthree
  }
  \]
  commute can be checked componentwise.
  Let $(\arrthree, \qarrthree)$ be the pair comprising the composition
  $\evalarr \after (\lambda \arrtwo \times \id) \colon \objone \times \objtwo \to \objthree$.
  On the first component, we have $\arrthree((a,b))
  = \evalarr((\lambda \arrtwo \times \id)(a,b))
  = \evalarr((\lambda \arrtwo(a),b))
  = \lambda \arrtwo(a)(b)
  = \arrtwo(a,b)$.
  On the second,
  $\qarrthree((q,s),(a,b))
  = \evalarr'((\lambda\qarrtwo \times \id')(q,s)(a,b),
  (\lambda\arrtwo\times\id)(a,b))
  = \evalarr'((\lambda\qarrtwo(q,a), \id'(s,b)),
  (\lambda\arrtwo(a), b))
  = \evalarr'((\lambda\qarrtwo(q,a), s),
  (\lambda\arrtwo(a), b))
  = \lambda\qarrtwo(q,a)(s,b)
  = \qarrtwo((q,s),(a,b))$.

\end{proof}

Joining together \ref{lemma:GMD-products} and
and \ref{lemma:GMD-exponenetials} we obtain the wished result.

\begin{corollary}
The category $\catgmd$ is cartesian closed.
\end{corollary}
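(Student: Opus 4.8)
The plan is to appeal to the standard characterisation of cartesian closure: a category is cartesian closed precisely when it possesses a terminal object, all binary products, and, for every object $\objtwo$, an exponential --- equivalently, when the product functor $(-)\prd\objtwo$ admits a right adjoint $(-)^{\objtwo}$. Each ingredient has essentially been assembled in the preceding lemmas, so the corollary amounts to packaging them together and filling the one remaining gap.

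First I would invoke Lemma~\ref{lemma:GMD-category} to guarantee that $\catgmd$ is a genuine category, then Lemma~\ref{lemma:GMD-products} to supply the terminal object and binary products; together these give all finite products. Next, Lemma~\ref{lemma:GMD-exponenetials} furnishes, for each pair of objects, the exponential $\objthree^{\objtwo}$ together with the evaluation arrow $\evalarr$ and, for every $\arrtwo\colon\objone\prd\objtwo\to\objthree$, a transpose $\lambda\arrtwo\colon\objone\to\objthree^{\objtwo}$ making the defining triangle commute.

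The hard part --- and the one genuinely left open by the lemma as stated --- is \emph{uniqueness} of the transpose, without which the triangle condition does not yet yield an adjunction. I would establish this componentwise. Since an arrow in $\catgmd$ is a pair, suppose $(\arrthree,\qarrthree)\colon\objone\to\objthree^{\objtwo}$ is any arrow with $\evalarr\after((\arrthree,\qarrthree)\prd\id)=\arrtwo$. Chasing the first component through the explicit formula $\evalarr(f,b)=f(b)$ forces $\arrthree(a)(b)=\arrtwo(a,b)$ for all $a,b$, hence $\arrthree=\lambda\arrtwo$; chasing the second component through $\evalarr'((d,s),(f,b))=d(s,b)$ forces $\qarrthree(q,a)(s,b)=\qarrtwo((q,s),(a,b))$, hence $\qarrthree=\lambda\qarrtwo$. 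Thus the transpose is unique, so $\evalarr$ exhibits $\objthree^{\objtwo}$ as a genuine exponential.

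Finally I would record that the bijection $\arrtwo\mapsto\lambda\arrtwo$ is natural in $\objone$ and $\objthree$ --- again a routine componentwise verification, since both the set part and the quantale part of the transpose are defined by plain currying. Combining all of this, $\catgmd$ has finite products and exponentials satisfying the required universal properties, and is therefore cartesian closed. I expect the only real subtlety to lie in the uniqueness step above, as the preceding proof verified existence of the transpose but not its uniqueness.
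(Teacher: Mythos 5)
Your proposal is correct and follows the same route as the paper, whose entire proof of this corollary is the single sentence ``joining together Lemma~\ref{lemma:GMD-products} and Lemma~\ref{lemma:GMD-exponenetials} we obtain the wished result.'' Your observation that Lemma~\ref{lemma:GMD-exponenetials} only establishes \emph{existence} of the transpose and commutation of the triangle, but not uniqueness, is accurate, and your componentwise uniqueness argument (chasing $\evalarr(f,b)=f(b)$ and $\evalarr'((d,s),(f,b))=d(s,b)$ to force $\arrthree=\lambda\arrtwo$ and $\qarrthree=\lambda\qarrtwo$) is correct and genuinely fills a gap the paper glosses over; the final naturality check is then redundant, since uniqueness of transposes already makes $(-)^{\objtwo}$ a right adjoint to $(-)\prd\objtwo$.
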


Interestingly, the constructions of product and exponential objects
closely match the definition of a differential logical relation. In
other words, differential logical relations as given in
Definition~\ref{def:dlr} can be seen as providing a denotational model
of $\STlamreal$ in which base types are interpreted by the generalised
metric domain corresponding to the Euclidean distance.

\section{Conclusion}
In this paper, we introduced differential logical relations as a novel methodology
to evaluate the ``distance'' between programs of higher-order calculi akin to the
$\lambda$-calculus. We have been strongly inspired by some unpublished work by
Westbrook and Chaudhuri~\cite{WestbrookAndChaudhuri}, 
who were the first to realise that evaluating differences
between interactive programs requires going beyond mere real numbers. We indeed
borrowed our running examples from the aforementioned work.

This paper's contribution, then consists in giving a simple definition of differential
logical relations, together with some results about their underlying metatheory: two
formulations of the Fundamental Lemma, a result relating differential logical relations
and ordinary logical relations, a categorical framework in which generalised metric
domains --- the metric structure corresponding to differential logical relations --- are
proved to form a cartesian closed category. Such results give evidence
that, besides being \emph{more expressive} than metric relations, differential logical
relations are somehow \emph{more canonical}, naturally forming a model of simply-typed
$\lambda$-calculi.

As the title of this paper suggests, we see the contributions above just as a very
first step towards understanding the nature of differences in a logical environment.
In particular, at least two directions deserve to be further explored.
\begin{varitemize}
\item
  The first one concerns \emph{language features}: admittedly, the calculus
  $\STlamreal$ we consider here is very poor in terms of its
  expressive power, lacking full higher-order recursion and thus not
  being universal. Moreover, $\STlamreal$ does not feature any form of
  effect, including probabilistic choices, in which evaluating
  differences between programs would be very helpful. 
  Addressing such issues seems to require to impose a domain 
  structure on generalised metric domain, on one hand, and to look at monads 
  on $\catgmd$, on the other hand (for the latter, the literature on monadic 
  lifting for quantale-valued relations might serve as a guide 
  \cite{Hoffman-Seal-Tholem/monoidal-topology/2014}).
\item
  The second one is about \emph{abstract differences}: defining differences
  as functions with \emph{the same rank} as that of the compared programs implies
  that reasoning about them is complex. Abstracting differences so as to
  facilitate differential reasoning could be the way out, given that deep connections
  exist between logical relations and abstract interpretation~\cite{Abramsky1990}.
\end{varitemize}

\bibliography{main}

\end{document}